\documentclass[11pt]{article}

\usepackage{amssymb} 
\usepackage{amsthm,amsmath,amssymb,bbm}

\usepackage{latexsym}
\usepackage{colordvi,xcolor}
\usepackage{enumerate}
\usepackage[toc,page]{appendix}
\usepackage{listings}
\usepackage{extarrows}
\usepackage{graphicx}
\usepackage{float}
\usepackage{enumitem}
\usepackage{multirow}
\usepackage{threeparttable}
\usepackage{natbib}
\usepackage{ulem}
\usepackage{colortbl}

\usepackage{colordvi}
\usepackage{xcolor}
\usepackage{color, colortbl}
\usepackage{color,soul}
\usepackage{setspace}
\usepackage{multirow}
\usepackage{multicol}
\usepackage{booktabs} 
\usepackage{bm}
\usepackage{textcomp}
\usepackage{rotating}
\usepackage{epsfig}
\usepackage{subfigure} 
\usepackage{footnote}
\usepackage{url}
\usepackage{listings}
\usepackage{fancybox}
\usepackage{lscape}
\usepackage{comment}
\usepackage{latexsym}
\usepackage{colordvi}
\usepackage{enumerate}
\usepackage[toc,page]{appendix}
\usepackage{listings}
\usepackage{extarrows}
\usepackage{float}
\usepackage{enumitem}
\usepackage{booktabs}
\makeatother
\usepackage{verbatim}   
\usepackage{color}      
\usepackage{enumerate}
\usepackage{threeparttable}
\usepackage{latexsym}
\usepackage{colordvi}
\usepackage{listings}
\usepackage{extarrows}
\usepackage{float}
\usepackage{enumitem}
\usepackage{multirow}
\usepackage{tablefootnote}
\makeatother
\newtheorem{assumption}{Assumption}
\newtheorem{prop}{Proposition}
\newtheorem{lemma}{Lemma}
\newtheorem{theorem}{Theorem}
\newtheorem{remark}{Remark}

\newcommand{\bfeta}{\boldsymbol{\eta}}

\newcommand{\bgamma}{\boldsymbol{\gamma}}

\newcommand{\bbeta}{\boldsymbol{\beta}}

\newcommand{\bzeta}{\boldsymbol{\zeta}}

\newcommand{\bOmega}{\boldsymbol{\Omega}}

\newcommand{\bPhi}{\boldsymbol{\Phi}}

\newcommand{\bPsi}{\boldsymbol{\Psi}}

\newcommand{\bZ}{\boldsymbol{Z}}
\newcommand{\bX}{\boldsymbol{X}}
\newcommand{\bx}{\boldsymbol{x}}

\newcommand{\bb}{\boldsymbol{b}}
\newcommand{\bW}{\boldsymbol{W}}

\newcommand{\bR}{\boldsymbol{R}}

\newcommand{\bJ}{\boldsymbol{J}}

\newcommand{\bH}{\boldsymbol{H}}
\newcommand{\bfm}{\boldsymbol{m}}

\newcommand{\bU}{\boldsymbol{U}}
\newcommand{\bu}{\boldsymbol{u}}

\newcommand{\bv}{\boldsymbol{v}}

\newcommand{\bzero}{\boldsymbol{0}}

\def\ci{\perp\!\!\!\perp}






\newcommand{\Rom}[1]{\text{\uppercase\expandafter{\romannumeral #1\relax}}}

\usepackage{geometry}
 \geometry{
 a4paper,
 left=28mm,
 top=30mm,
 }
\textwidth=6in

\usepackage{enumitem}

\numberwithin{equation}{section}

\begin{document}

\title{Estimation of Complier Expected Shortfall Treatment Effects with a Binary Instrumental Variable}

\author{Bo Wei,~Kean Ming Tan,~and~Xuming He \\ University of Michigan}

\date{}
\maketitle

\begin{abstract}
Estimating the causal effect of a treatment or exposure for a subpopulation is of great interest in many biomedical and economical studies.  Expected shortfall, also referred to as the super-quantile, is an attractive effect-size  measure that can accommodate data heterogeneity and aggregate local information of effect over a certain region of interest of the outcome distribution. In this article, we propose the \underline{C}omplie\underline{R} \underline{E}xpected \underline{S}hortfall  \underline{T}reatment \underline{E}ffect (CRESTE) model under an instrumental variable framework  to quantity the CRESTE for a binary endogenous treatment variable. By utilizing the special characteristics of a binary instrumental variable and a specific formulation of Neyman-orthogonalization, we propose a  two-step estimation procedure, which can be implemented by simply solving weighted least-squares regression and weighted quantile regression with estimated weights. We develop the asymptotic properties for the proposed estimator and use  numerical simulations to confirm its validity and robust finite-sample performance. An illustrative analysis of a  National Job Training Partnership Act study is presented to show the practical utility of the proposed method.
\end{abstract}
\noindent
{\bf Keywords}: Quantile regression, instrumental variable, expected shortfall, data heterogeneity, complier expected shortfall effects.

\section{Introduction\label{intro}}
Given a scalar response and a set of covariates, it is often of interest to understand the effect of a treatment or an intervention on the conditional distribution of the response. 
Compared to the popularly used least squares approach, quantile regression-based approach is more appealing due to its ability to capture heterogeneous treatment effects, i.e., the treatment effect that  may vary across different regions of the conditional distribution of the response even after adjusting for the observed covariates \citep{Koenker2005}. Several approaches have been proposed to estimate quantile treatment effects \cite[among others]{Abadie2002, Chernozhukov2005}. We refer the reader to \cite{Koenker2017} for an extensive review of quantile regression and quantile treatment effects.

In many applications, however, the average treatment effect in one tail of the response distribution may be of great interest in contrast to the effect at a specific quantile level \citep{Acharya2017,Brownlees2016}.    For example, the average effect of  public
sector-sponsored training programs on the low-income subpopulation may be particularly important for making policy decisions \citep{Lalonde1995}.  To this end, the expected shortfall, also known as the super-quantile or conditional value-at-risk, has proven to be useful.

Various methods for estimating the expected shortfall were discussed in \cite{Taylor2008a,Taylor2008b, Brazauskas2008, Cai2008, Chen2008} and \cite{Rockafellar2013}. One can also integrate over conditional quantile regression estimates in the tail \citep{WW16}.
More recently, several authors have considered expected shortfall of the  response variable conditional on a set of covariates under the regression setting \citep{Dimitriadis2019,Barendse2020,he2022robust}. 
Specifically, \cite{Dimitriadis2019} proposed joint quantile and expected shortfall regression models to estimate the expected shortfall regression coefficients with a loss function, referred to as the FZ-loss  \citep{Fissler2016}. On the other hand, \cite{Barendse2020} proposed a two-stage estimation method based on Neyman-orthogonalization for estimating the expected shortfall regression coefficients.  Motivated by \citet{Barendse2020}, \citet{he2022robust} proposed the robust expected shortfall regression that is robust against heavy-tailed random noise. 

In this article, we are interested in estimating the treatment effect on the expected shortfall of the outcome distribution, adjusted for a set of covariates, in observational studies. 
One motivating example is the National Job Training Partnership Act (JTPA) study, a large publicly-funded training program, also considered in \citet{Abadie2002} in the context of estimating quantile treatment effect.  
The main goal is to evaluate the effect of JTPA training on improving 30-months earnings for low-income individuals. 
However, as pointed out by \citet{Abadie2002}, the training status in the JTPA study is likely to be self-selected and is correlated with the potential outcome of earnings. 
Using standard approaches for estimating expected shortfall effect without accounting for the selection bias would lead to an invalid estimate of the expected shortfall treatment effect.

To address the above challenge, instead of estimating the overall expected shortfall treatment effect, we propose to estimate the expected shortfall treatment effect for subjects who comply with the treatment protocol, referred to as compliers.   The complier expected shortfall treatment effect (CRESTE) represents a local average effect on the tail of the distribution, rather than a specific quantile level. Compared to the complier quantile treatment effect (CQTE) proposed in \citet{Abadie2002}, CRESTE is a causal estimand that is more suitable to characterize the behavior of the tail treatment effect. This is because, analogous to comparing the quantile and the expected shortfall, CQTE, defined based on the quantile of the response distribution, fails to capture the tail behavior beyond the quantile itself, while CRESTE characterizes the tail behavior by aggregating information from the entire tail region \citep{Rockafellar2002}.

To the best of our knowledge, there is limited work on the estimation of CRESTE in the existing literature, except a concurrent work by \citet{Chen2021} who proposed to estimate the CRESTE based on the method for fitting expected shortfall regression in \citet{Dimitriadis2019}.  
The loss function used in \citet{Chen2021} inherits the non-convexity of the loss function in \cite{Dimitriadis2019} for which the global optimum solution is not guaranteed.  
Moreover, their proposed CRESTE estimate is not locally robust to the complier quantile treatment effect estimate in the sense of robustness in \cite{Chernozhukov2016}, and a convergence rate $O(n^{-1/2})$ for the CQTE estimate is required to ensure asymptotic normality for the CRESTE estimator.

In this manuscript, we derive a weighting scheme that can be incorporated into the two-step estimation procedure in \cite{Barendse2020} by utilizing the special characteristic of a binary instrumental variable to estimate CRESTE. The validity of the proposed method relies only on the modeling of the complier subgroup and does not require the modeling of the other non-complier subgroups or the instrumental variable distribution. The proposed two-stage estimation procedure involves fitting a  weighted quantile regression at the first stage, and fitting a weighted least squares model at the second stage.  Of practical appeal is that both steps involve minimizing convex objective functions for which global optimum are guaranteed, and can be implemented by existing software for fitting quantile regression and the least squares regression.
Moreover, the proposed method has the property of Neyman-orthogonalization for the estimation of the expected shortfall effect.  The implication is that the resulting estimator is locally robust to the estimate of quantile, which allows flexible approaches to be used for modeling quantile regression in the first stage, such as nonparamaetric quantile regression \citep{He1994, De2003}. 

\section{Preliminaries on Expected Shortfall Regression and Potential Outcomes Framework}

In this section, we provide an overview of the expected shortfall regression and the potential outcomes framework that are essential in the development of our proposed method.

\subsection{Expected Shortfall Regression}
The expected shortfall of a continuous random variable $Z$ is the conditional expectation of $Z$, conditioned on $Z$ falling below a given quantile level of its distribution. 
Specifically, the expected shortfall of $Z$ at level $\alpha\in(0,1)$ is defined as  
\[
S_{\alpha}(Z):=\mathbb{E}\{Z | Z\le Q_{\alpha}(Z)\}=\frac{1}{\alpha}\int_{0}^{\alpha}Q_{u}(Z)du,
\]
where $Q_{u}(Z)=\inf\{z\in\mathbb{R}:\Pr(Z\le z)\ge u\}$ is the $u$th quantile of $Z$ \citep{Yamai2002}. 
Several authors have generalized the expected shortfall to the regression setting to evaluate the association between a covariate of interest and the response, given a set of nuisance covariates; see among others, \cite{He2010,Dimitriadis2019,  Barendse2020,he2022robust}.

Let $Y$ be the response variable and let $\bX\in\mathbb{R}^l$ be the covariates. 
Moreover, let $Q_{\alpha}(Y|\bX)$ and $S_{\alpha}(Y|\bX)$ be the $\alpha$th quantile and expected shortfall of $Y$ conditional on $\bX$, respectively. 
While the expected shortfall is not elicitable, i.e., there does not exist a loss function  that is minimized by the  expected shortfall regression parameters  \citep{gneiting2011making},   the quantile and the expected shortfall are jointly elicitable \citep{Fissler2016}.  
This motivates the following joint quantile and expected shortfall regression framework \citep{Dimitriadis2019,Barendse2020,he2022robust}:
\begin{equation}
\begin{split}
Q_{\alpha}(Y|\bX)=\bX^\top\bbeta^*(\alpha) \qquad \mathrm{and}\qquad
S_{\alpha}(Y|\bX)=\bX^\top\bgamma^*(\alpha),
\end{split}
\label{SQtwostep1}
\end{equation}
where $\bbeta^*(\alpha)$ and $\bgamma^*(\alpha)$ are the true underlying regression coefficients corresponding to the $\alpha$th quantile and expected shortfall regression, respectively. 
Under a two-step framework, \cite{Barendse2020} proposed to first estimate $\bbeta^*(\alpha)$ by fitting a quantile regression model.  Given a quantile regression estimator $\hat{\bbeta}(\alpha)$, the expected shortfall regression coefficients $\bgamma^*(\alpha)$ can then be estimated by solving the following convex optimization problem:
\begin{equation}
\hat{\bgamma}(\alpha)=\underset{\bgamma (\alpha) \in \mathbb{R}^l}{\mathrm{argmin}}~ \frac{1}{n}\sum_{i=1}^n\left[\frac{1}{\alpha}\{Y_i-\hat{Q}_{\alpha}(Y_i|\bX_i)\}I\{Y_i\le\hat{Q}_{\alpha}(Y_i|\bX_i)\}+\hat{Q}_{\alpha}(Y_i|\bX_i)-\bX_i^\top\bgamma(\alpha)\right]^2,
\label{SQtwostep2}
\end{equation}
where $I(\cdot)$ is an indicator function and  $\hat{Q}_{\alpha}(Y_i|\bX_i) = \bX_i^{\top} \hat{\bbeta}(\alpha)$.

Compared to the method in \cite{Dimitriadis2019}, the two-step framework has advantages in both computational and statistical aspects. 
Theoretically, it is remarkable that~\eqref{SQtwostep2} is Neyman-orthogonalized. 
Let $\|\cdot\|$ be the $\ell_2$-norm.
The implication of Neyman-orthogonalization in \cite{Barendse2020} is that the asymptotic distribution of the estimator for $\bgamma^*(\alpha)$  can be established under a weaker condition  $\|\hat{\bbeta}(\alpha)-\bbeta^*(\alpha)\|=o_p(n^{-1/4})$ \citep{Barendse2020} than that of \citet{Dimitriadis2019}, which requires $\|\hat{\bbeta}(\alpha)-\bbeta^*(\alpha)\|=O_p(n^{-1/2})$.  
 We note that the Neyman-orthogonalization phenomenon has been observed in \cite{Belloni2014,Chernozhukov2018,Chernozhukov2016} under different contexts.

The weaker condition on the quantile estimate allows flexible approaches such as the nonparametric quantile regression to be used. 
Computationally, the two-stage method involves fitting a weighted quantile regression and a weighed least squares regression that can be readily solved via existing software. 
Compared to the approach in \cite{Dimitriadis2019} that involves solving a non-convex optimization problem for which global optimum is not guaranteed, we found that the two-stage method has a more  stable numerical performance.


\subsection{The Potential Outcomes Framework}
\label{POFA}
In this section, we provide a brief overview of the potential outcomes framework for evaluating complier treatment effects.  In addition, we review the use of instrumental variable for estimating the complier quantile treatment effect \citep{Abadie2002}.  
  
Let $D$ and $V$ be indicators of a binary exposure and a binary instrumental variable, respectively. Define $D_v$ as the potential treatment selection given $V=v$. Moreover, let $Y$, $Y_d$, and $Y_{vd}$ be the observed outcome,  the potential outcome given $D=d$, and the potential outcome given $D=d$ and $V=v$, respectively. 
We note that in practice, $D_v$, $Y_d$, and $Y_{vd}$ are not observed. 

Under the above binary exposure and instrumental variable setting, subjects can be classified into four latent subgroups: {\it compliers} ($D_1>D_0$), {\it always takers} ($D_1=D_0=1$), {\it never takers} ($D_1=D_0=0$), and {\it defiers} ($D_1<D_0$) \citep{Angrist1996}. 
Let $\bX$ be an $l$-dimensional vector of covariates with one as the first component, without loss of generality. 
We start with some assumptions on the instrumental variable $V$. 
\begin{assumption}
The instrumental variable $V$ satisfies the following conditions: 

\begin{itemize}
\setlength{\itemindent}{0.3in}
\item[(A1)]{ Independence of IV:
$(Y_{00},Y_{01},Y_{10},Y_{11}, D_0, D_1)\ci V|\bX;
$
}
\item[(A2)]{
Exclusion of IV: $P(Y_{1d}=Y_{0d}|\bX)=1$ for $d=0,1$;
}
\item[(A3)]{First stage: $0<P(V=1|\bX)<1$ and $P(D_1=1|\bX)>P(D_0=1|\bX)$;

}
\item[(A4)]{ Monotonicity: $P(D_1\ge D_0|\bX)=1$.
}
\end{itemize}
\label{assumption:iv}
\end{assumption}
The potential outcomes framework and Assumption~\ref{assumption:iv} are commonly used in the context of estimating complier treatment effect \citep{Abadie2002,Abadie2003,ORR15}.
Assumption (A1) assumes that the instrumental variable, $V$, mimics a random assignment, conditional on $\bX$. Assumption (A2) requires that $V$ affects the potential outcomes only through its effects on the treatment $D$. Assumption (A3) guarantees that  $D$ and $V$ are correlated, conditional on $\bX$, and that each subject can have $V=0$ or $V=1$ with a non-zero probability conditional on $\bX$. Assumption (A4) excludes the existence of {\it defiers}. For more details on the use of instrumental variable in the context of treatment effect estimation, we refer the reader to \cite{HW19}.

\section{The Proposed Method\label{Model}}
\subsection{Complier Expected Shortfall Treatment Effect and Model Assumptions} 
Recall from Section~\ref{POFA} that  $D$ is an indicator of a binary exposure and $Y_d$ is the potential outcome given $D = d$.  
Under the potential outcomes framework, the complier quantile and expected shortfall at level $\alpha \in (0,1)$ can be formally defined as 
\[
Q_{\alpha}(Y_d|\bX, D_1>D_0)=\inf\{y\in\mathbb{R}:\Pr(Y_d\le y|\bX,D_1>D_0)\ge \alpha\}
\]
and 
\begin{equation}
S_{\alpha}(Y_d|\bX,D_1>D_0)=\mathbb{E}\{Y_d|\bX,D_1>D_0,Y_d\le Q_{\alpha}(Y_d|\bX,D_1>D_0)\}=\frac{1}{\alpha}\int_{0}^{\alpha}Q_{u}(Y_d|\bX,D_1>D_0)du,
\label{SQeq0.1}
\end{equation}
respectively. 

One popular causal estimand is the complier quantile treatment effect, which is commonly used to estimate the local causal effect for compliers \citep{Abadie2002}. Specifically, the complier quantile treatment effect at  $\alpha$th quantile is defined as
\begin{equation}
\text{CQTE}(\alpha)=Q_{\alpha}(Y_1|\bX,D_1>D_0)-Q_{\alpha}(Y_0|\bX,D_1>D_0).
\label{SQeq0.2}
\end{equation}
The complier quantile treatment effect represents the difference between the compliers' $\alpha$th quantile of the potential outcomes $Y_1$ and that of $Y_0$. 
Parallel to~\eqref{SQeq0.2}, we define the complier expected shortfall treatment effect as 
\begin{equation}
\text{CRESTE}(\alpha)=S_{\alpha}(Y_1|\bX,D_1>D_0)-S_{\alpha}(Y_0|\bX,D_1>D_0),
\label{SQeq0.3}
\end{equation}
that is, the difference between compliers' $\alpha$th expected shortfall of the potential outcomes $Y_1$ and that of $Y_0$, given  covariates $\bX$. 
Under~\eqref{SQeq0.1}--\eqref{SQeq0.3}, CRESTE can be rewritten as
\begin{equation}
\text{CRESTE}(\alpha)=\frac{1}{\alpha}\int_{0}^{\alpha}\mathrm{CQTE}(u)du.
\label{SQeq0.4}
\end{equation}
In other words, CRESTE provides an average effect in the lower-tail of the distribution, rather than at a specific quantile level. 
Similarly, if the upper tail of the distribution is of interest we can  replace $S_{\alpha}(Y_d|\bX,D_1>D_0)$ with $S_{\alpha}^u(Y_d|\bX,D_1>D_0)=(1-\alpha)^{-1} \int_{\alpha}^{1}Q_{u}(Y_d|\bX,D_1>D_0)du$  in~\eqref{SQeq0.3}.

\begin{remark}
The quantity CRESTE($\alpha$) is equivalent to the expected shortfall treatment effect for the treated population ($D=1$) in the case of one-sided compliance, where subjects with $V=0$ have no access to treatment, i.e., $\Pr(D_0=0|\bX)=1$.  This is shown formally  in Proposition~\ref{SQprop1} in the Appendix. 
In practice, one-sided compliance occurs in many scientific studies. 
For example, in an observational study comparing a new drug versus placebo where the instrumental variable is chosen as whether treatment starts after the FDA approval date of the new drug, the one-sided compliance means that patients treated before the FDA approval of the new drug have no access to it.
\end{remark}

To estimate the complier expected shortfall treatment effect CRESTE($\alpha$), we assume the following models:
\begin{equation}
\begin{split}
Q_{\alpha}(Y_{d}|\bX,D_1>D_0)=&d\beta_{1}^*(\alpha)+\bX\bbeta^*_{\bX}(\alpha),\, ~d=0,1,\\
S_{\alpha}(Y_{d}|\bX,D_1>D_0)=&d\gamma_{1}^*(\alpha)+\bX\bgamma_{\bX}^*(\alpha),\,~d=0,1.
\end{split}
\label{SQeq1}
\end{equation}
Under \eqref{SQeq1}, $\beta_{1}^*(\alpha)$ and $\gamma_{1}^*(\alpha)$ can be interpreted as the complier $\alpha$th quantile and expected shortfall treatment effect, respectively. That is, 
\[ 
\begin{split}
\beta_{1}^*(\alpha)=&Q_{\alpha}(Y_{1}|\bX,D_1>D_0)-Q_{\alpha}(Y_{0}|\bX,D_1>D_0),\\
\gamma_{1}^*(\alpha)=&S_{\alpha}(Y_{1}|\bX,D_1>D_0)-S_{\alpha}(Y_{0}|\bX,D_1>D_0).
\end{split}
\]
Moreover, $\bbeta_{\bX}^*(\alpha)$ and $\bgamma_{\bX}^*(\alpha)$ quantify the effects of covariates $\bX$ on the conditional complier $\alpha$th quantile and expected shortfall of the potential outcome $Y_{d}$ given $\bX$, respectively.

Model~\eqref{SQeq1} involves the unobserved potential outcome $Y_d$. To provide a convenient venue to estimate $\beta_{1}^*(\alpha)$ and $\gamma_{1}^*(\alpha)$, we show in Proposition~\ref{SQprop2} in~\ref{SQpropappen} that  \eqref{SQeq1} is equivalent to
\begin{equation}
\begin{split}
Q_{\alpha}(Y|D,\bX,D_1>D_0)=&D\beta_{1}^*(\alpha)+\bX\bbeta^*_{\bX}(\alpha),\\
S_{\alpha}(Y|D,\bX,D_1>D_0)=&D\gamma_{1}^*(\alpha)+\bX\bgamma^*_{\bX}(\alpha),
\label{SQeq2}
\end{split}
\end{equation}
where $Y=D\times Y_1+(1-D)\times Y_0$, $Q_{\alpha}(Y|D,\bX,D_1>D_0)=\inf\{y:\Pr( Y\le y|D,\bX,D_1>D_0)\ge \alpha\}$, and $S_{\alpha}(Y|D,\bX,D_1>D_0)=\alpha^{-1}\int_0^\alpha Q_{u}(Y|D,\bX,D_1>D_0)du$. 
Under the reformulation in~\eqref{SQeq2}, the parameter of interests $\beta_{1}^*(\alpha)$ and $\gamma_{1}^*(\alpha)$ now depend only on the conditional quantile and expected shortfall of $Y$ rather than the potential outcome $Y_d$.

\begin{remark}
As will be elaborated in Section~\ref{SQasym}, the linearity assumption for the conditional quantile function in~\eqref{SQeq1} can be relaxed. For instance, one can replace the linear quantile model in~\eqref{SQeq1} with the  nonparametric quantile regression function with sufficient smoothness. The linearity assumption for conditional expected shortfall function in~\eqref{SQeq1} is adopted to balance model complexity and statistical interpretation.

\end{remark}

\subsection{Estimation Procedure\label{SQEPS}}

Let $\bZ=(D,\bX^\top)^\top$, $\bbeta=\{\beta_{1}(\alpha),\bbeta_{\bX}^\top(\alpha)\}^\top$, and $\bgamma=\{\gamma_{1}(\alpha),\bgamma_{\bX}^\top(\alpha)\}^\top$, where we suppress the dependency on $\alpha$ for notational convenience. Let $\{Y_1,V_1,\bZ_1^\top\}^\top,\ldots, \{Y_n,V_n,\bZ_n^\top\}^\top$ be $n$ independent and identically distributed realizations  of $\{Y,V,\bZ^\top\}^\top$. 
Recall that $\bbeta^*$ and $\bgamma^*$ are the true underlying values of $\bbeta$ and $\bgamma$, respectively.  Note that  $\bbeta^\ast=\text{argmin}_{\bbeta}\mathbb{E}\{I(D_1>D_0)\rho_{\alpha}(Y-\bZ^\top\bbeta)\}$, where $\rho_{\alpha}(u)=u\{\alpha-I(u\le 0)\}$ is the quantile loss function.
In addition, let
$g_{\alpha}(\bZ,\bb_1,\bb_2)=\bZ^\top\bb_1-\frac{1}{\alpha}(Y-\bZ^\top\bb_2)I(Y\le \bZ^\top\bb_2)-\bZ^\top\bb_2$ and $g_{\alpha,i}(\bb_1,\bb_2)=\bZ_i^\top\bb_1-\frac{1}{\alpha}(Y_i-\bZ_i^\top\bb_2)I(Y_i\le \bZ_i^\top\bb_2)-\bZ_i^\top\bb_2$.
To estimate $\bgamma^*$, one key observation  is that under~\eqref{SQeq2}, we have
\begin{equation}
\begin{split}
\mathbb{E}&\{I(D_1>D_0)\bZ g_{\alpha}(\bZ,\bgamma^\ast,\bbeta^\ast)\}=\bzero, \quad \mathrm{where}\quad \bbeta^\ast=\text{argmin}_{\bbeta}\mathbb{E}\{I(D_1>D_0)\rho_{\alpha}(Y-\bZ^\top\bbeta)\}.
\label{SQeq2.eq}
\end{split}
\end{equation}

However, $\bgamma^\ast$ and $\bbeta^\ast$ can not be directly estimated via~\eqref{SQeq2.eq} since $D_1$ and $D_0$ are not observed simultaneously.  Let $\kappa_v(Y,\bZ)=\Pr(D_1>D_0|Y,\bZ)$ be the conditional probability of the complier group, conditional on $Y$ and $\bZ$. 
By the law of iterated expectation, \eqref{SQeq2.eq} is equivalent to 
\begin{equation}
\begin{split}
\mathbb{E}&\{\kappa_v(Y,\bZ)\bZ g_{\alpha}(\bZ,\bgamma^\ast,\bbeta^\ast)\}=\bzero, \quad \mathrm{where} \quad \bbeta^\ast=\text{argmin}_{\bbeta}\mathbb{E}\{\kappa_v(Y,\bZ)\rho_{\alpha}(Y-\bZ^\top\bbeta)\}.
\label{SQEP0}
\end{split}
\end{equation}
Equation~\eqref{SQEP0} suggests a simple weighting scheme for estimating $\bbeta^\ast$ and $\bgamma^\ast$. 
Let $D_{i1}$ and $D_{i0}$ be the potential treatment selection for the $i$th sample given $V_i=1$ and $V_i=0$,  respectively. Let $\kappa_{v,i}= \Pr(D_{i1}>D_{i0}|Y_i,\bZ_i)$ be the conditional probability of the $i$th sample.  Then, an estimating equation for $\bgamma^\ast$ is given by 
\begin{equation}
\sum_{i=1}^n\kappa_{v,i}\bZ_i g_{\alpha,i}(\bgamma,\hat{\bbeta})=\bzero, \quad \mathrm{where}\quad  \hat\bbeta=\text{argmin}_{\bbeta}\sum_{i=1}^n\kappa_{v,i}\rho_{\alpha}(Y_i-\bZ_i^\top\bbeta). 
\label{SQEP01}
\end{equation}

In practice, $\kappa_{v,i}$ is unknown and needs to be estimated.  
By Proposition~\ref{SQprop3} in the Appendix, we have
\begin{equation}
\kappa_v(Y,\bZ)=1-\frac{D\{1-v(Y,\bZ)\}}{1-\pi(\bX)}-\frac{(1-D)v(Y,\bZ)}{\pi(\bX)},
\label{SQweight1}
\end{equation}
where $v(Y,\bZ)=\Pr(V=1|Y,\bZ)$ and $\pi(\bX)=\Pr(V=1|\bX)$.  Note that $\kappa_v(Y,\bZ)$ is identical to  $\kappa_v(\bU)$ in Lemma 3.2 of \cite{Abadie2002}. 
Thus, it suffices to estimate $\pi(\bX)$ and  $v(Y,\bZ)$.  
One widely used approach to model the conditional distribution of $V$ is the logistic regression, which may suffer from model misspecification.  
We instead use an alternative non-parametric approach for modeling $V$ given  $Y$, $D$, and $\bX$ \citep{Wei2021}.  


Briefly, let $v_{d}(Y,\bX)=\Pr(V=1|Y,\bX,D=d)$ and let $v(Y_i,\bZ_i)=
I( D_i=0)v_{0}(Y_i,\bX_i)+I( D_i=1)v_{1}(Y_i,\bX_i)$. Denote $\mathcal{K}^{\ast}_{\sigma_1}(\bu)$ and $\mathcal{K}^{\ast\ast}_{\sigma_2}(\bu)$ as two kernel functions that satisfy Conditions (C7)--(C8) in Section~\ref{SQasym} with bandwidths $\sigma_1$ and $\sigma_2$, respectively. 
When all components of $\bX$ are continuous, we propose to estimate $\pi(\bx)$ and $v_{d}(y,\bx)$ via
\begin{equation}
\label{eq:discreteXvscontinousX}
\hat\pi(\bx)=\frac{\sum_{i=1}^n\mathcal{K}^{\ast}_{\sigma_1}(\bx-\bX_i)V_i}{\sum_{i=1}^n\mathcal{K}^{\ast}_{\sigma_1}(\bx-\bX_i)}\ \ {\rm and}\ \ 
\hat v_{d}(y,\bx)=\frac{\sum_{i=1}^n I(D_i=d)\mathcal{K}^{\ast\ast}_{\sigma_2}\{(y,\bx^\top)^\top-(Y_i,\bX_i^\top)^\top\}V_i}{\sum_{i=1}^n I(D_i=d)\mathcal{K}^{\ast\ast}_{\sigma_2}\{(y,\bx^\top)^\top-(Y_i,\bX_i^\top)^\top\}},
\end{equation}
respectively.
Subsequently, we estimate $v(Y_i,\bZ_i)$ via $\hat v(Y_i,\bZ_i)=I(D_i=1)\hat v_{1}(Y_i,\bX_i)+I(D_i=0)\hat v_{0}(Y_i,\bX_i)$. 
Thus, a non-parametric estimator of $\kappa_{v,i}$ is then given by
\begin{equation}
\label{eq:kappatilde}
\hat\kappa_{v,i}=1-\frac{D_i\{1-\hat v(Y_i,\bZ_i)\}}{1-\hat \pi(\bX_i)}-\frac{(1-D_i)\hat v(Y_i,\bZ_i)}{\hat \pi(\bX_i)}.
\end{equation}

Note that $\kappa_{v,i}$ is a value between zero and one since it is the conditional probability of the complier group for the $i$th sample.  
We enforce such constraints on $\hat\kappa_{v,i}$ by performing truncation around zero and one, i.e., 
\begin{equation}
\label{eq:kappatilde2}  
\tilde\kappa_{v,i}=\min\{\max(\hat\kappa_{v,i}, c_{l,n}), c_{u, n}\},
\end{equation}
where $c_{l,n}$ and $c_{u,n}$ are two sequences of positive constants that approach to zero and one, respectively, as the sample size $n$ increases. 
Replacing $\kappa_{v,i}$ with the non-parametric estimator $\tilde{\kappa}_{v,i}$ in~\eqref{SQEP01}, an estimator for $\bgamma^\ast$ can then be obtained by solving the estimating equation
\begin{equation}
\sum_{i=1}^n\tilde{\kappa}_{v,i}\bZ_i g_{\alpha,i}(\bgamma,\hat{\bbeta})=\mathbf{0}, \quad \mathrm{where}\quad  \hat\bbeta=\text{argmin}_{\bbeta}\sum_{i=1}^n\tilde{\kappa}_{v,i}\rho_{\alpha}(Y_i-\bZ_i^\top\bbeta),
\label{SQEP3}
\end{equation}
which amounts to solving a weighted ordinary least squares problem. 

The proposed estimation procedure for estimating the complier expected shortfall treatment effect is summarized as follows:
\begin{itemize}
\setlength{\itemindent}{0.5in}
\item[Step 1:] Calculate $\hat{\pi}(\bX_i)$ and $\hat{v}(Y_i,\bZ_i)$ with bandwidths $\sigma_1$ and $\sigma_2$ selected via cross-validation, and obtain $\tilde{\kappa}_{v,i}$ for $ i=1,\ldots,n$.
\item[Step 2:] Calculate the quantile regression estimator $\hat\bbeta=\text{argmin}_{\bbeta}\sum_{i=1}^n\tilde{\kappa}_{v,i}\rho_{\alpha}(Y_i-\bZ_i^\top\bbeta)$.
\item[Step 3:] Plug $\tilde\kappa_{v,i}$ and $\bZ_i^\top\hat\bbeta$ into~\eqref{SQEP3}, and obtain $\hat\bgamma$ by solving~\eqref{SQEP3}.
\end{itemize}
Details for performing statistical inference on $\bgamma^*$ are deferred to Section~\ref{SQasym}. 

\begin{remark}
The series-based non-parametric estimator used in \cite{Abadie2002} is another non-paramteric approach for estimating  $\kappa_{v,i}$. By using  similar regularity conditions and techniques to those  in \cite{Abadie2002}, we can show that the asymptotic results in Section 4 remain valid with the series-based estimator of   $\kappa_{v,i}$.  Generally, which nonparametric estimator one uses to estimate $\kappa_{v,i}$ does not have a strong impact on the estimation of CRESTE as long as the tuning parameters are chosen appropriately and the dimension of $\bX$ is moderate.
\end{remark}

\begin{remark}
\label{remark:bandwidth}
In practice,  the bandwidth parameters $\sigma_{1}$ and $\sigma_{2}$ can be selected using cross-validation. 
Let $\hat{\pi}_{\sigma}(\bX)$ be an estimator of $\pi(\bX)$ obtained from the training dataset with
bandwidth $\sigma$. 
We can then select $\sigma_{1,n }$ as the value of $\sigma$ that minimizes $\sum_{i\in\text{test set}}|V_i-\hat\pi_{\sigma}(\bX_i)|$. The bandwidth $\sigma_{2,n}$ can be selected similarly.
\end{remark}

\begin{remark}

For discrete covariates in $\bX$,  $\pi(\bX_i)$ and $v(Y_i,\bZ_i)$ can be estimated by first stratifying the dataset into multiple cells based on the discrete covariates in $\bX$.  For each cell,  non-parametric estimates of $\pi(\bX_i)$ and $v(Y_i,\bZ_i)$ can be constructed by using only the continuous covariates in $\bX$ and $Y$ using~\eqref{eq:discreteXvscontinousX}, since the discrete covariates take the same value within each cell.
When all of the covariates $\bX$ are discrete, non-parametric estimates for $\pi(\bX_i)$ in each cell reduce to empricial estimates of $E(V)$ in each cell.  Non-parametric estimates for $v(Y_i,\bZ_i)$ in each cell can be obtained by only using the continuous variables in $Y$ in its corresponding cell.

\end{remark}

\section{Asymptotic Properties}\label{SQasym}

We now establish consistency and weak convergence of the proposed estimators $\hat\bbeta$ and $\hat\bgamma$ at the $\alpha$th level quantile and expected shortfall, for any given $\alpha\in (0,1)$. Let $f(\cdot|\bZ,D_1>D_0)$ be the density function of $Y$ conditional on $\bZ$ and $D_1>D_0$. For notational convenience, we write $v =  v(Y,\bZ)$, $v_i=  v(Y_i,\bZ_i)$, $\pi=  \pi(\bX)$, and $\pi_i= \pi(\bX_i)$. Besides, let $\bX_c$ be a subvector of continuous covariates in $\bX$.  We start with some regularity conditions. 

\begin{itemize}
\setlength{\itemindent}{-0in}
\item[(C1)] The data $(Y_i,D_i,\bX_i,V_i)$ are independent and identically distributed for $i=1,2,\ldots, n$. 

\item[(C2)] (a) Each discrete component of $\bX$ takes on finitely many values; (b) conditional on $D$, $D_1>D_0$, and the discrete components of $\bX$,  $(Y,\bX_c)$ has a support as a closed subset of the product of compact intervals and has a density at least third order continuously differentiable and bounded away from zero and infinity; (c) $\bbeta^\ast\in\mathcal{K}$ and $\bgamma^\ast\in\mathcal{B}$, where $\mathcal{K}$ and $\mathcal{B}$ are compact subsets in $\mathbb{R}^{l+1}$.

\item[(C3)] $\mathbb{E}\{I(D_1>D_0)\bZ\bZ^\top\}$ is of full rank.

\item[(C4)] For all $\bZ$, there exists some $c_0>0$, such that $f(\bZ^\top\bbeta^\ast|\bZ,D_1>D_0)>c_0$.

\item[(C5)] (a) For some $c>0$, $\kappa_v(Y,\bZ)>c$ almost surely; (b) For some $0<c_1<c_2<1$, $c_1<\pi(\bX)<c_2$ almost surely.

\item[(C6)] The sequences $c_{l,n}>0$ and $c_{u,n}<1$ satisfy $c_{l,n}=o(n^{-1/2})$ and $1-c_{u,n}=o(n^{-1/2})$.

\item[(C7)] (a) There is a positive integer $\Delta$, such that $\mathcal{K}_{\sigma_1}^{\ast}(\bu)$ and $\mathcal{K}^{\ast\ast}_{\sigma_2}(\bu)$ are differentiable of order $\Delta$ and the derivatives of order $\Delta$ are Lipschitz in a bounded support.  $\mathcal{K}_{\sigma_1}^{\ast}(\bu)$ and $\mathcal{K}^{\ast\ast}_{\sigma_2}(\bu)$  have bounded support; (b) $\mathcal{K}_{\sigma_k}^{\ast}(\bu)=1$  for $k=1,2$; (c) for some positive integers $s_1$ and $s_2$,  $\int\mathcal{K}^\ast_{\sigma_k}(\bu)[\otimes_{r=1}^j \bu]d\bu=0$ for all $j<s_k$, where $k=1,2$, and $\otimes_{r=1}^j \bu$ stands for executing $j$ times Kronecker product on $\bu$.

\item[(C8)] There exists a constant $p$ such that:  (a) $v(\cdot)$ and $\pi(\cdot)$ are at least $p$th order continuous differentiable;   (b) $p\ge s_k, k=1,2$; and (c) $n\sigma_{i,n}^{2p}\to 0$ and  $\frac{n\sigma_{i,n}^{2l+2}}{(\log n)^2}\to\infty$. 
\end{itemize}

Condition (C2) implies the boundedness of $\bX$ and $\bZ$, and the positiveness
and boundedness of the density of $Y$ or $\bX$. Conditions (C3) and (C4) are imposed to ensure the identifiablilty of $\bbeta^\ast$ and $\bgamma^\ast$; similar conditions have been used in \cite{Barendse2020}. By Condition (C5), $\kappa_v(Y,\bZ)$ and $\pi(\bX)$
are bounded away from zero and one almost surely. Condition (C6) implies that truncating $\hat{\kappa}_v$ by $c_{l,n}$ and $c_{u,n}$ would only lead to a negligible impact on the asymptotic results
with $c_{l,n}$ and $c_{u,n}$ approaching zero and one, respectively. 
Conditions (C7)--(C8)  are similar to the regularity conditions in \cite{Newey1994} for kernel estimators. By Condition (C8), we require that $v(\cdot)$ and $\pi(\cdot)$ to be smooth, and that the bandwidths satisfy $\sigma_{k,n}=o(n^{-1/(2p)}\wedge(\log n)^{1/(l+1)}n^{-1/(2l+2)})$ for $k=1,2$. Compared to existing regularity conditions in the context of estimating CQTE \citep{Abadie2002}, the proposed CRESTE estimator only additionally  requires the full rank condition in~(C4) to ensure  the identifiablilty of $\bgamma^\ast$. In practice, Condition (C3) is satisfied in many studies.

Given the regularity conditions, we now establish the theoretical properties of the proposed estimators $\hat{\bbeta}$ and $\hat{\bgamma}$ in Theorems~\ref{SQthm0}--\ref{SQthm2}. Let $\bfm_1(Y,\bZ,\alpha)=\bZ\{\alpha-I(Y<\bZ^\top\bbeta^\ast)\}$ and let  $\bH_1(\bX,\alpha)=\mathbb{E}\left[\bfm_1(Y,\bZ,\alpha)\left\{\frac{(1-D)v}{\pi^2}-\frac{D(1-v)}{(1-\pi)^2}\right\}|\bX\right]$.  In addition, let
\[ 
\bPhi(\alpha)=\bfm_1(Y,\bZ,\alpha)\left\{1-\frac{D(1-V)}{1-\pi}-\frac{(1-D)V}{\pi}\right\}+\bH_1(\bX,\alpha)\{V-\pi(\bX)\}.
\]
The following theorem establishes the asymptotic normality of $\hat{\bbeta}$.

\begin{theorem}[Consistency and asymptotic distribution of $\hat{\bbeta}$]
Under Conditions (C1)--(C8), we have 
\[
n^{1/2}(\hat\bbeta-\bbeta^\ast)\to_d N(\bzero, \bJ_1^{-1}\bOmega_1\bJ_1^{-1}),
\]
where $\bOmega_1=\mathbb{E}\{\bPhi(\alpha)\bPhi(\alpha)^\top\}$  and $\bJ_1=\mathbb{E}\{I(D_1>D_0)f(\bZ^\top\bbeta^\ast|\bZ,D_1>D_0)\bZ\bZ^\top\}$. 
\label{SQthm0}
\end{theorem}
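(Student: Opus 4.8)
\textbf{Proof proposal for Theorem~\ref{SQthm0}.}

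The plan is to treat $\hat\bbeta$ as an M-estimator defined by the weighted quantile objective in Step~2 of Section~\ref{SQEPS} with \emph{estimated} weights $\tilde\kappa_{v,i}$, and to run a standard two-part argument: first consistency, then a Bahadur-type linearization around $\bbeta^\ast$ in which the estimation of the nuisance weights $\tilde\kappa_{v,i}$ contributes an explicit extra term. For consistency, I would argue that the truncated estimated weights $\tilde\kappa_{v,i}$ converge uniformly to $\kappa_v(Y_i,\bZ_i)$: by~\eqref{eq:kappatilde} and~\eqref{eq:kappatilde2}, this reduces to uniform consistency of the Nadaraya--Watson-type estimators $\hat\pi(\cdot)$ and $\hat v_d(\cdot)$, which follows from Conditions (C2), (C5), (C7)--(C8) by classical kernel arguments (cf.\ \cite{Newey1994}), with the truncation in (C6) removing the boundary issue. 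Combined with compactness of $\mathcal K$ in (C2)(c), convexity of $\bbeta\mapsto\sum_i\tilde\kappa_{v,i}\rho_\alpha(Y_i-\bZ_i^\top\bbeta)$, and the population identification $\bbeta^\ast=\arg\min_{\bbeta}\mathbb E\{\kappa_v(Y,\bZ)\rho_\alpha(Y-\bZ^\top\bbeta)\}$ from~\eqref{SQEP0} (uniqueness guaranteed by (C3)--(C4)), a uniform law of large numbers gives $\hat\bbeta\to_p\bbeta^\ast$.

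For the limiting distribution I would work with the subgradient/estimating equation $\frac1n\sum_i\tilde\kappa_{v,i}\bZ_i\{\alpha-I(Y_i\le\bZ_i^\top\hat\bbeta)\}=o_p(n^{-1/2})$. Expand this in two directions. The $\bbeta$-direction: a standard empirical-process argument (stochastic equicontinuity of the class $\{\bZ\,I(Y\le\bZ^\top\bbeta)\}$, plus smoothness of $f$ from (C2)(b) and (C4)) yields the Hessian term $\bJ_1=\mathbb E\{I(D_1>D_0)f(\bZ^\top\bbeta^\ast|\bZ,D_1>D_0)\bZ\bZ^\top\}$, invertible by (C3)--(C4). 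The nuisance direction: replace $\tilde\kappa_{v,i}$ by $\kappa_{v,i}+(\tilde\kappa_{v,i}-\kappa_{v,i})$ and linearize $\tilde\kappa_{v,i}-\kappa_{v,i}$ in $\hat\pi-\pi$ and $\hat v-v$ using~\eqref{SQweight1}. Here the Newey-style kernel-estimator influence-function expansion converts $\frac1n\sum_i(\tilde\kappa_{v,i}-\kappa_{v,i})\bfm_1(Y_i,\bZ_i,\alpha)$ into an i.i.d.\ average; the $\partial\kappa_v/\partial\pi$ piece produces exactly the correction $\bH_1(\bX,\alpha)\{V-\pi(\bX)\}$ (the $D(1-v)/(1-\pi)^2-(1-D)v/\pi^2$ factor matches $\bH_1$), while the $v$-estimation piece, being a conditional-mean-zero projection, collapses so that replacing $\hat v$ by $v$ leaves the influence function unchanged — this is where the specific nonparametric construction in~\eqref{eq:discreteXvscontinousX} is used. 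Collecting terms, $\sqrt n(\hat\bbeta-\bbeta^\ast)=\bJ_1^{-1}\,\frac{1}{\sqrt n}\sum_i\bPhi_i(\alpha)+o_p(1)$ with $\bPhi$ as defined before the theorem, and the CLT gives asymptotic normality with sandwich variance $\bJ_1^{-1}\bOmega_1\bJ_1^{-1}$, $\bOmega_1=\mathbb E\{\bPhi(\alpha)\bPhi(\alpha)^\top\}$; note the cross term between $\bPhi$ and the leftover $v$-estimation noise vanishes because the latter is asymptotically negligible under the rate conditions in (C8).

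The main obstacle, I expect, is the nuisance-estimation step: showing that the two kernel estimators entering $\tilde\kappa_{v,i}$ contribute \emph{only} the single term $\bH_1(\bX,\alpha)\{V-\pi(\bX)\}$ and nothing from $\hat v$. This requires (i) a uniform stochastic expansion of $\hat\pi$ and $\hat v_d$ with remainder $o_p(n^{-1/2})$, which is why (C8) asks for the undersmoothing/bias condition $n\sigma_{k,n}^{2p}\to0$ together with $n\sigma_{k,n}^{2l+2}/(\log n)^2\to\infty$ and the higher-order kernels of (C7)(c); (ii) interchanging the kernel smoothing with the (non-smooth) indicator $I(Y\le\bZ^\top\bbeta)$, handled by a U-statistic/projection decomposition plus an equicontinuity bound; and (iii) verifying that the $\hat v$-contribution is a mean-zero conditional projection — i.e.\ $\mathbb E[\bfm_1(Y,\bZ,\alpha)\mid Y,\bX,D]$-type cancellations — so that its influence function is zero. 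Once that expansion is in hand, the rest is the routine M-estimation/empirical-process bookkeeping sketched above.
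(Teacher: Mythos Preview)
Your overall strategy matches the paper's: the paper localizes the quantile objective and applies Pollard's convexity lemma rather than working directly with the subgradient, but this is a cosmetic difference; both routes isolate $\bJ_1$ and then invoke the \cite{Newey1994}-style influence-function expansion for the estimated weights (the paper packages this as Lemma~\ref{SQlemma2}).

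There is, however, a concrete error in your treatment of the $\hat v$ contribution. You assert that the $v$-estimation piece ``collapses so that replacing $\hat v$ by $v$ leaves the influence function unchanged.'' It does not. Linearizing $\tilde\kappa_v-\kappa_v$ in $\hat v-v$ produces $\bfm_1(Y,\bZ,\alpha)\{D/(1-\pi)-(1-D)/\pi\}(\hat v-v)$; because the multiplier is already $(Y,\bZ)$-measurable, the Newey projection replaces $\hat v-v$ by $V-v$, \emph{not} by zero. This nondegenerate term then combines with the leading $\kappa_v\,\bfm_1$ piece to give
\[
\bfm_1(Y,\bZ,\alpha)\Bigl\{1-\frac{D(1-V)}{1-\pi}-\frac{(1-D)V}{\pi}\Bigr\},
\]
so the first component of $\bPhi(\alpha)$ carries the raw instrument $V$, not its conditional mean $v=v(Y,\bZ)$. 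If you drop the $\hat v$ correction as you propose, your influence function becomes $\kappa_v\,\bfm_1+\bH_1(\bX,\alpha)(V-\pi)$, which differs from $\bPhi(\alpha)$ by a mean-zero but nondegenerate term, and you obtain the wrong $\bOmega_1$. The paper's Lemma~\ref{SQlemma2} (via the analogue of~\eqref{SQlemma1:eq3}) makes this explicit: the Newey correction for $\hat v$ is the full $(V-v)$ fluctuation precisely because the conditioning set of $\hat v$ coincides with the arguments of $\bfm_1$, so nothing is averaged out.
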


Next, we establish consistency and asymptotic normality for $\hat{\bgamma}$ obtained from solving~\eqref{SQEP3}.   Let $\bfm_2(Y,\bZ,\alpha)=\bZ g_{\alpha}(\bZ,\bgamma^\ast,\bbeta^\ast)$ and let $\bH_2(\bX,\alpha)=\mathbb{E}\left[\bfm_2(Y,\bZ,\alpha)\left\{\frac{(1-D)v}{\pi^2}-\frac{D(1-v)}{(1-\pi)^2}\right\}|\bX\right]$.  In addition, let $\bPsi(\alpha)=\bfm_2(Y,\bZ,\alpha)\left\{1-\frac{D(1-V)}{1-\pi}-\frac{(1-D)V}{\pi}\right\}+\bH_2(\bX,\alpha)\{V-\pi(\bX)\}$.
Theorems~\ref{SQthm1} and~\ref{SQthm2} establish the consistency and asymptotic normality of $\hat{\bgamma}$, respectively.

\begin{theorem}[Consistency of $\hat{\bgamma}$]
Under Conditions (C1)--(C8), we have 
\[
\|\hat\bgamma-\bgamma^\ast\|\to_p 0,
\]
where $\|\cdot\|$ is the $\ell_2$ norm. 
\label{SQthm1}
\end{theorem}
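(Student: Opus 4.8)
The plan is to turn Theorem~\ref{SQthm1} into two law-of-large-numbers statements. Because $g_{\alpha,i}(\bgamma,\hat\bbeta)$ is affine in $\bgamma$, the estimating equation~\eqref{SQEP3} is solved in closed form by $\hat\bgamma=\hat\bJ_n^{-1}\hat\bc_n$, where $\hat\bJ_n=\tfrac1n\sum_{i=1}^n\tilde\kappa_{v,i}\bZ_i\bZ_i^\top$, $\hat\bc_n=\tfrac1n\sum_{i=1}^n\tilde\kappa_{v,i}\bZ_iW_i(\hat\bbeta)$, and $W_i(\bb):=\tfrac1\alpha(Y_i-\bZ_i^\top\bb)I(Y_i\le\bZ_i^\top\bb)+\bZ_i^\top\bb$ is the pseudo-response. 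Set $\bJ_\gamma:=\mathbb{E}\{\kappa_v(Y,\bZ)\bZ\bZ^\top\}$ and $\bc_\gamma:=\mathbb{E}\{\kappa_v(Y,\bZ)\bZ W(\bbeta^\ast)\}$, with $W$ the population analogue of $W_i$. By iterated expectation $\bJ_\gamma=\mathbb{E}\{I(D_1>D_0)\bZ\bZ^\top\}$, which is invertible under (C3), and by~\eqref{SQEP0} we have $\bJ_\gamma\bgamma^\ast-\bc_\gamma=\mathbb{E}\{\kappa_v\bZ g_\alpha(\bZ,\bgamma^\ast,\bbeta^\ast)\}=\bzero$, i.e.\ $\bJ_\gamma^{-1}\bc_\gamma=\bgamma^\ast$. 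Hence, by the continuous mapping theorem, it suffices to prove $\hat\bJ_n\to_p\bJ_\gamma$ and $\hat\bc_n\to_p\bc_\gamma$.

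The key preliminary is the uniform consistency of the estimated weights, $\max_{1\le i\le n}|\tilde\kappa_{v,i}-\kappa_{v,i}|\to_p0$. Under (C2) and (C7)--(C8) the multivariate kernel estimators $\hat\pi$ and $\hat v_d$ of~\eqref{eq:discreteXvscontinousX} are uniformly consistent over their compact supports along the lines of \cite{Newey1994}; here the density lower bound in (C2) is used to keep the random denominators bounded away from zero with probability tending to one. Combined with $\pi(\bX)$ bounded away from $0$ and $1$ by (C5), formula~\eqref{eq:kappatilde} yields $\max_i|\hat\kappa_{v,i}-\kappa_{v,i}|\to_p0$. Finally, the truncation map $t\mapsto\min\{\max(t,c_{l,n}),c_{u,n}\}$ is $1$-Lipschitz and moves any $\kappa_{v,i}\in[0,1]$ by at most $\max(c_{l,n},1-c_{u,n})$, which is $o(1)$ by (C6); therefore $\max_i|\tilde\kappa_{v,i}-\kappa_{v,i}|\le\max_i|\hat\kappa_{v,i}-\kappa_{v,i}|+\max(c_{l,n},1-c_{u,n})\to_p0$.

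For $\hat\bJ_n$, write it as $\tfrac1n\sum_i\kappa_{v,i}\bZ_i\bZ_i^\top+\tfrac1n\sum_i(\tilde\kappa_{v,i}-\kappa_{v,i})\bZ_i\bZ_i^\top$; the first term converges to $\bJ_\gamma$ by the LLN (finite moments by the boundedness in (C2)) and the second is bounded by $(\max_i|\tilde\kappa_{v,i}-\kappa_{v,i}|)\tfrac1n\sum_i\|\bZ_i\|^2=o_p(1)$. For $\hat\bc_n$, decompose $\hat\bc_n-\bc_\gamma$ into $\tfrac1n\sum_i(\tilde\kappa_{v,i}-\kappa_{v,i})\bZ_iW_i(\hat\bbeta)$, $\tfrac1n\sum_i\kappa_{v,i}\bZ_i\{W_i(\hat\bbeta)-W_i(\bbeta^\ast)\}$, and $\tfrac1n\sum_i\kappa_{v,i}\bZ_iW_i(\bbeta^\ast)-\bc_\gamma$. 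The last term is $o_p(1)$ by the LLN. Since $\hat\bbeta\in\mathcal{K}$ with probability tending to one by Theorem~\ref{SQthm0} and $W_i(\cdot)$ is uniformly bounded over $\mathcal{K}$ and the support, the first term is $O_p(1)\cdot\max_i|\tilde\kappa_{v,i}-\kappa_{v,i}|=o_p(1)$. For the middle term, the identity $(Y_i-\bZ_i^\top\bb)I(Y_i\le\bZ_i^\top\bb)=-(\bZ_i^\top\bb-Y_i)_+$ shows that $\bb\mapsto W_i(\bb)$ is Lipschitz with constant $(1+1/\alpha)\|\bZ_i\|$, so the middle term is at most $(1+1/\alpha)\|\hat\bbeta-\bbeta^\ast\|\tfrac1n\sum_i\|\bZ_i\|^2=o_p(1)$ by Theorem~\ref{SQthm0}. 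Collecting the pieces gives $\hat\bgamma\to_p\bgamma^\ast$.

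I expect the main obstacle to be the second step: establishing uniform-in-argument consistency of the multivariate kernel estimators $\hat\pi(\bx)$ and $\hat v_d(y,\bx)$ (in particular controlling their random denominators through the density lower bound in (C2)) and verifying that the truncation~\eqref{eq:kappatilde2} is asymptotically negligible under (C6). Everything downstream is a boundedness/Lipschitz argument; in particular, the non-smooth indicator inside $g_\alpha$ poses no difficulty for consistency, because the induced pseudo-response $W_i(\cdot)$ is globally Lipschitz in $\bbeta$, so no empirical-process equicontinuity is needed at this stage (it will be required for the $n^{1/2}$-rate in Theorem~\ref{SQthm2}).
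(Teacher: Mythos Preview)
Your proof is correct. It differs from the paper's argument in structure but shares the same substantive ingredients.

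The paper treats $\hat\bgamma$ as a Z-estimator and proves consistency by showing that $n^{-1}\sum_{i}\tilde\kappa_{v,i}\bZ_i g_{\alpha,i}(\bb,\hat\bbeta)$ converges to $\mathbb{E}\{\kappa_v\bZ g_\alpha(\bZ,\bb,\bbeta^\ast)\}$ \emph{uniformly} over $\bb\in\mathcal{B}$, decomposing the remainder into three pieces $\mathbf{I}_n(\bb),\mathbf{II}_n(\bb),\mathbf{III}_n(\bb)$ and then appealing to uniqueness of the root under (C3). You instead exploit the affine structure in $\bgamma$ to write $\hat\bgamma=\hat\bJ_n^{-1}\hat\bc_n$ explicitly and reduce the problem to two ordinary LLN statements plus continuous mapping. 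Both routes rely on the same three facts: (i) $\sup_i|\tilde\kappa_{v,i}-\kappa_{v,i}|=o_p(1)$ via \cite{Newey1994} and the truncation bound under (C5)--(C6); (ii) the Lipschitz inequality $|W_i(\bb_1)-W_i(\bb_2)|\le(1+1/\alpha)\|\bZ_i\|\,\|\bb_1-\bb_2\|$, which is exactly the paper's bound~\eqref{SQthm1:e2}; and (iii) consistency of $\hat\bbeta$ from Theorem~\ref{SQthm0}. Your approach is more elementary for this linear-in-$\bgamma$ model---no uniform convergence over $\mathcal{B}$ is needed---whereas the paper's Z-estimation framing would transfer unchanged to a nonlinear expected-shortfall specification. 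A minor point: you invoke ``$\hat\bbeta\in\mathcal{K}$ with probability tending to one'' to bound $W_i(\hat\bbeta)$; this tacitly assumes $\bbeta^\ast$ is interior to $\mathcal{K}$, but since you only need $\hat\bbeta$ to lie in \emph{some} fixed compact set eventually, consistency of $\hat\bbeta$ already delivers this.
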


\begin{theorem}[Asymptotic distribution of $\hat{\bgamma}$]
Under Conditions (C1)--(C3) and (C5)--(C8), if $\|\hat\bbeta-\bbeta^\ast\|=o_p(n^{-1/4})$,  we have 
\[
n^{1/2}(\hat\bgamma-\bgamma^\ast)\to_d N(\bzero, \bJ_2^{-1}\bOmega_2\bJ_2^{-1}),
\]
where  $\bOmega_2=\mathbb{E}\{\bPsi(\alpha)\bPsi(\alpha)^\top\}$ and $\bJ_2=\mathbb{E}\{I(D_1>D_0)\bZ\bZ^\top\}$.
\label{SQthm2}
\end{theorem}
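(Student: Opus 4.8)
\textbf{Proof plan for Theorem~\ref{SQthm2}.}
Since $g_{\alpha,i}(\bgamma,\bbeta)$ is affine in $\bgamma$ with $\partial g_{\alpha,i}/\partial\bgamma=\bZ_i^\top$, the estimating equation \eqref{SQEP3} is linear in $\bgamma$ and solvable in closed form. Writing $\widehat\bJ_2=n^{-1}\sum_{i=1}^n\tilde\kappa_{v,i}\bZ_i\bZ_i^\top$, this yields the exact identity
\[
n^{1/2}(\hat\bgamma-\bgamma^\ast)=-\,\widehat\bJ_2^{-1}\,n^{-1/2}\sum_{i=1}^n\tilde\kappa_{v,i}\bZ_i g_{\alpha,i}(\bgamma^\ast,\hat\bbeta).
\]
The first step is to show $\widehat\bJ_2\to_p\bJ_2$: the truncation in \eqref{eq:kappatilde2} is asymptotically negligible by (C6), the kernel estimators $\hat\pi$ and $\hat v$ are uniformly consistent under (C7)--(C8), and $\mathbb{E}\{\kappa_v(Y,\bZ)\bZ\bZ^\top\}=\mathbb{E}\{I(D_1>D_0)\bZ\bZ^\top\}=\bJ_2$ by the law of iterated expectations, with $\bJ_2$ invertible by (C3); consistency of $\hat\bgamma$ then follows (alternatively from Theorem~\ref{SQthm1}). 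It therefore remains to show that the score $n^{-1/2}\sum_i\tilde\kappa_{v,i}\bZ_i g_{\alpha,i}(\bgamma^\ast,\hat\bbeta)$ equals $n^{-1/2}\sum_i\bPsi_i(\alpha)+o_p(1)$, after which the Lindeberg--L\'evy CLT (with $\bOmega_2=\mathbb{E}\{\bPsi(\alpha)\bPsi(\alpha)^\top\}$ finite by (C2) and (C5)), Slutsky's theorem and symmetry of $\bJ_2$ give the conclusion.

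I would split the score into three pieces: (i) $n^{-1/2}\sum_i\tilde\kappa_{v,i}\bZ_i\{g_{\alpha,i}(\bgamma^\ast,\hat\bbeta)-g_{\alpha,i}(\bgamma^\ast,\bbeta^\ast)\}$, the cost of estimating $\bbeta^\ast$; (ii) $n^{-1/2}\sum_i(\tilde\kappa_{v,i}-\kappa_{v,i})\bZ_i g_{\alpha,i}(\bgamma^\ast,\bbeta^\ast)$, the cost of estimating the weights; and (iii) the oracle term $n^{-1/2}\sum_i\kappa_{v,i}\bfm_2(Y_i,\bZ_i,\alpha)$. Piece (i) is where Neyman-orthogonality enters. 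A direct computation gives $\partial g_\alpha(\bZ,\bgamma^\ast,\bb)/\partial\bb=\alpha^{-1}\bZ I(Y\le\bZ^\top\bb)-\bZ$, and since $\Pr(Y\le\bZ^\top\bbeta^\ast\mid\bZ,D_1>D_0)=\alpha$ under \eqref{SQeq2} (the conditional density exists by (C2)), the law of iterated expectations gives $\partial\,\mathbb{E}\{\kappa_v\bZ g_\alpha(\bZ,\bgamma^\ast,\bb)\}/\partial\bb\big|_{\bb=\bbeta^\ast}=\bzero$. Hence a second-order Taylor expansion (the second derivative is bounded by (C2)) bounds the population part of (i) by $O_p(\|\hat\bbeta-\bbeta^\ast\|^2)=o_p(n^{-1/2})$, contributing $o_p(1)$; the centered empirical-process part of (i) is $o_p(1)$ by stochastic equicontinuity, since $\{(y,\bz)\mapsto\bz\{g_\alpha(\bz,\bgamma^\ast,\bb)-g_\alpha(\bz,\bgamma^\ast,\bbeta^\ast)\}\}$ is a Donsker class with $L^2$-continuous paths (its only non-Lipschitz ingredient is an indicator, a VC class) and $\hat\bbeta\to_p\bbeta^\ast$; and replacing $\tilde\kappa_{v,i}$ by $\kappa_{v,i}$ in (i) costs at most $n^{1/2}\sup_i|\tilde\kappa_{v,i}-\kappa_{v,i}|\cdot O_p(\|\hat\bbeta-\bbeta^\ast\|)=n^{1/2}\,o_p(n^{-1/4})\,o_p(n^{-1/4})=o_p(1)$, using that (C7)--(C8) give $\sup_i|\tilde\kappa_{v,i}-\kappa_{v,i}|=o_p(n^{-1/4})$.

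For piece (ii), the main technical step, I would discard the truncation (negligible by (C6)) and Taylor-expand the closed-form weight in \eqref{eq:kappatilde} around $(\pi_i,v_i)$:
\[
\hat\kappa_{v,i}-\kappa_{v,i}=(\hat v_i-v_i)\Big(\tfrac{D_i}{1-\pi_i}-\tfrac{1-D_i}{\pi_i}\Big)+(\hat\pi_i-\pi_i)\Big(\tfrac{(1-D_i)v_i}{\pi_i^2}-\tfrac{D_i(1-v_i)}{(1-\pi_i)^2}\Big)+r_i,
\]
where the remainder $r_i$ is quadratic in the estimation errors and contributes $o_p(1)$ after scaling, by the $o_p(n^{-1/4})$ uniform rates. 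To each linear term I would apply the standard projection (Hoeffding-type) representation for Nadaraya--Watson estimators, as in \citet{Newey1994} and \citet{Abadie2002}: for bounded measurable $b(Y,\bZ)$, $n^{-1/2}\sum_i b(Y_i,\bZ_i)\{\hat\pi(\bX_i)-\pi(\bX_i)\}=n^{-1/2}\sum_j\mathbb{E}\{b(Y,\bZ)\mid\bX=\bX_j\}\{V_j-\pi(\bX_j)\}+o_p(1)$, with the kernel bias killed by $n\sigma_1^{2p}\to0$ and the degenerate $U$-statistic remainder by $n\sigma_1^{2l+2}/(\log n)^2\to\infty$, together with the analogous statement for $\hat v$ (conditioning on $(Y,\bX,D)$). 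Because the coefficient $\bfm_2(Y_i,\bZ_i,\alpha)\{D_i/(1-\pi_i)-(1-D_i)/\pi_i\}$ of $(\hat v_i-v_i)$ is already $\sigma(Y_i,\bZ_i)$-measurable, its projection equals itself, while $\mathbb{E}[\bfm_2\{(1-D)v/\pi^2-D(1-v)/(1-\pi)^2\}\mid\bX]=\bH_2(\bX,\alpha)$. Collecting, piece (ii) equals $n^{-1/2}\sum_i\big[\bfm_2(Y_i,\bZ_i,\alpha)(V_i-v_i)\{D_i/(1-\pi_i)-(1-D_i)/\pi_i\}+\bH_2(\bX_i,\alpha)(V_i-\pi_i)\big]+o_p(1)$. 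Adding piece (iii) and invoking the algebraic identity $\kappa_v+(V-v)\{D/(1-\pi)-(1-D)/\pi\}=1-D(1-V)/(1-\pi)-(1-D)V/\pi$ recombines (ii)$+$(iii) into $n^{-1/2}\sum_i\bPsi_i(\alpha)+o_p(1)$, as needed.

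Finally, $\mathbb{E}\{\bPsi(\alpha)\}=\bzero$ follows from \eqref{SQeq2.eq} for the $\bfm_2\kappa_v$ part together with $\mathbb{E}(V-v\mid Y,\bZ)=0$ and $\mathbb{E}(V-\pi\mid\bX)=0$ for the remaining terms, so the CLT applies and $n^{1/2}(\hat\bgamma-\bgamma^\ast)=-\bJ_2^{-1}n^{-1/2}\sum_i\bPsi_i(\alpha)+o_p(1)\to_d N(\bzero,\bJ_2^{-1}\bOmega_2\bJ_2^{-1})$. I expect the main obstacle to be the rigorous version of the third paragraph: establishing the projection representation uniformly and showing that the kernel bias and the degenerate $U$-statistic remainders are both $o_p(n^{-1/2})$, which is precisely where the bandwidth conditions (C7)--(C8) are consumed and where the Abadie-type nonparametric machinery does its work. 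A secondary, more routine, difficulty is the stochastic-equicontinuity step in piece (i), needed because $g_\alpha$ depends on $\bbeta$ only through an indicator; Neyman-orthogonality is exactly what lets the rate $\|\hat\bbeta-\bbeta^\ast\|=o_p(n^{-1/4})$, rather than $O_p(n^{-1/2})$, suffice.
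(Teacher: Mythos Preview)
Your proposal is correct and follows the same overall strategy as the paper (Neyman-orthogonality for the $\hat\bbeta$ piece, Newey-type projection for the estimated weights $\tilde\kappa_{v,i}$, Donsker/stochastic-equicontinuity for the empirical-process remainder), but your decomposition is organized differently and is in one respect cleaner. The paper does \emph{not} exploit the exact linearity of $g_{\alpha,i}(\bgamma,\bbeta)$ in $\bgamma$; instead it expands $n^{-1/2}\sum_i\tilde\kappa_{v,i}\bZ_ig_{\alpha,i}(\hat\bgamma,\hat\bbeta)$ around $(\bgamma^\ast,\bbeta^\ast)$ jointly, producing a population piece $n^{1/2}\bJ_2(\hat\bgamma-\bgamma^\ast)$, a cross term $\mathcal{T}_{n,1}$ in $(\tilde\kappa-\kappa)\times\{g(\hat\bgamma,\hat\bbeta)-g(\bgamma^\ast,\bbeta^\ast)\}$, the orthogonality term $\mathcal{T}_{n,2}$, and an empirical-process term $\mathcal{T}_{n,3}=G_n(\hat\bgamma,\hat\bbeta)-G_n(\bgamma^\ast,\bbeta^\ast)$. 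Because $\mathcal{T}_{n,1}$ and $\mathcal{T}_{n,3}$ involve $\hat\bgamma$, the paper must first run a separate argument to obtain the preliminary rate $\|\hat\bgamma-\bgamma^\ast\|=o_p(n^{-1/4})$ (via a uniform LIL on a Donsker class) before these remainders can be shown $o_p(1)$. Your closed-form identity $n^{1/2}(\hat\bgamma-\bgamma^\ast)=-\widehat\bJ_2^{-1}\,n^{-1/2}\sum_i\tilde\kappa_{v,i}\bZ_ig_{\alpha,i}(\bgamma^\ast,\hat\bbeta)$ eliminates $\hat\bgamma$ from the score entirely, so no preliminary rate on $\hat\bgamma$ is needed and the equicontinuity/cross-term arguments run only in the $\bbeta$ direction; the price is the extra (but easy) step $\widehat\bJ_2\to_p\bJ_2$. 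Your pieces (ii)+(iii) are exactly the content of the paper's Lemma~\ref{SQlemma1}, proved there by the same Taylor expansion of $\hat\kappa_{v,i}$ and the same \cite{Newey1994}-style projection, and your algebraic identity $\kappa_v+(V-v)\{D/(1-\pi)-(1-D)/\pi\}=1-D(1-V)/(1-\pi)-(1-D)V/\pi$ is precisely what the paper's equations \eqref{SQlemma1:eq2}--\eqref{SQlemma1:eq3} encode.
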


Theorem~\ref{SQthm2} requires only  $\|\hat\bbeta-\bbeta^\ast\|=o_p(n^{-1/4})$ to establish asymptotic normality  of $\hat{\bgamma}$: this is in contrast to the faster rate  $\|\hat\bbeta-\bbeta^\ast\|=O_p(n^{-1/2})$  needed in \cite{Chen2021} to obtain similar results. 
The weaker convergence requirement for $\hat{\bbeta}$ allows 
us to replace the linear quantile model in~\eqref{SQeq1} with other flexible approaches, such as the nonparametric quantile regression method \citep{He1994,De2003}, without affecting the asymptotic distribution of  $\hat{\bgamma}$. 
In addition, \cite{Chen2021} required $O_p(n^{-1/2})$ as the convergence rate for $\hat\pi(\bX)$ to establish the asymptotic distribution of $\hat{\bgamma}$. Typically, the above convergence rate can only be satisfied when a parametric estimator is used to estimate $\pi(\bX)$.  We instead adopt a non-parametric estimator of $\pi(\bX)$ with convergence rate $o_p(n^{-1/4})$ to establish the asymptotic distribution of $\hat\bgamma$. The non-parametric estimate of $\pi(\bX)$ does not require parametric modeling, and thus prevents the bias induced from model misspecifications.

\begin{remark}
\label{statinference}
From Theorem \ref{SQthm2}, we have  $n^{1/2}(\hat\bgamma-\bgamma^\ast)\to_d N(\bzero, \bJ_2^{-1}\bOmega_2\bJ_2^{-1})$. In principle, the sample-based variance estimator can be used to estimate $\bJ_2$ and $\bOmega_2$. However, $\bOmega_2$ includes evaluating the conditional expectation given $\bX$, $\bH_2(\bX,\alpha)$; this quantity is challenging to estimate when $\bX$ is continuous. 
We instead propose to perform statistical inference on $\bgamma^\ast$ using the bootstrap method. 
In particular, to estimate the asymptotic variance of  $\hat\bgamma$,  we resample $n$ data points with replacement, and obtain an estimator for $\bgamma^\ast$ by applying the proposed estimation method described in Section~\ref{SQEPS} to the resampled dataset. Let $\{\hat\bgamma_b\}_{b=1}^B$ be the estimates obtained by repeating the aforementioned bootstrap method for $B$ number of times. The variance of $\hat\bgamma$ can then be approximated by the empirical variance of $\{\hat\bgamma^\ast_b\}_{b=1}^B$. A formal justification for the presented nonparametric bootstrap inference procedure is provided in Appendix C.
\end{remark}

\section{Numerical Studies} \label{simu}
We assess the finite-sample performance of the proposed method via extensive numerical studies. 
We compare the proposed method to several approaches: (i) the naive method by directly applying the two-stage method to the entire dataset \citep{Barendse2020}; (ii) the oracle (but practically infeasible) method that applies the two-stage method to the latent complier subgroup that is unknown in practice; and (iii) the  joint regression method for estimating complier expected shortfall effect by \citet{Chen2021}.
Our proposed method involves estimating the conditional probability of complier group conditional on $Y_i$ and $\bZ_i$, i.e., $\kappa_{v,i}$.
We compute the non-parametric estimator $\tilde\kappa_{v,i}$ \eqref{eq:kappatilde2} using the second order Epanechnikov kernel with $c_{l, n}=10/n$ and $c_{u, n}=1-10/n$. The bandwidths for estimating $\hat\pi(\bX_i)$ and $\hat v(Y_i,\bZ_i)$ are selected from $\{0.1,0.2,\ldots,0.9\}$ based on cross-validation with the criterion described in Remark~\ref{remark:bandwidth}. 
We use the bootstrap approach to estimate the standard errors for all of the estimators with $B=1000$ number of bootstrapped samples. 
To assess the performance across different methods, we compute the average bias of the estimated coefficients, average variance estimates and the empirical variances, and coverage probabilities of the $95\%$ confidence interval estimates. 


We first  generate 
the latent compliance group memberships from a multinomial distribution such that $\Pr(\mathrm{Compliers})=2/3$ and $\Pr(\mathrm{Always\ takers})=\Pr(\mathrm{Never\ takers})=1/6$. 
We then generate two independent covariates $X_1\sim \mathrm{Unif}(0,1)$ and  $X_2\sim \mathrm{Bernoulli}(0.5)$. 
Given $\bX=(X_1,X_2)^{\top}$, the instrumental variable $V$ is generated from a Bernoulli distribution with probability
\[
\pi(\bX,\epsilon)=\frac{\exp(0.1X_2+X_1^2+X_1X_2+\epsilon)}{1+\exp(0.1X_2+X_1^2+X_1X_2+\epsilon)},
\]
where $\epsilon\sim N(0,0.5^2)$ controls the deviation of $\pi(\bX)$ from a logistic regression model. The treatment variable $D$ can then be determined based on $V$ and the latent compliance subgroup membership via the following equation
\[
D=\left\{
\begin{array}{cl}
V,&\, \mathrm{Compliers},\\
1,&\, \mathrm{Always\ takers},\\
0,&\, \mathrm{Never\ takers}.
\end{array}
\right.
\]

With $D$, $V$, and the latent compliance memberships, the response  $Y$ for compliers and non-compliers are generated from the following model:
$$
Y=\left\{\begin{array}{ll}
\log\tau-0.2X_1-0.3X_2+0.5\times\exp(0.3\tau)\times D, & \text{Compliers},\\
-0.1X_1-0.2X_2+0.2D+\epsilon_{\mathrm{nc}},&  \text{Otherwise},\\
\end{array}\right. 
$$
where $\tau\sim \mathrm{Unif}(0,1)$ and $\epsilon_{\mathrm{nc}}\sim N(0,0.5^2).$
We note that the data generating mechanism satisfies Assumptions (A1)--(A4).

Results with $n=\{ 500,3000\}$ for $\alpha = \{0.1,0.2,0.3,0.4,0.5\}$, averaged across 1000 replications, are presented in Table~\ref{SQsimu:t1}. 
We see that the performance of the proposed method is close to that of the oracle method, i.e, the estimated parameters of interest are close to their corresponding true underlying values.
Moreover, the empirical coverage probabilities of 95\% confidence intervals are close to the nominal level, and the bootstrap-based variance estimates agree well with the empirical variances.
  In contrast, the naive method produces substantially biased estimators and suffers from under-coverage. The numerical results confirm that ignoring treatment endogeneity can lead to biased estimation and inference. We see that as we increase the sample size to $n=3000$, the bias of the proposed method further diminishes. 

\begin{table}[!t]
	\fontsize{8}{9}\selectfont
	\centering
	\caption{Comparisons among estimators of $\beta_{1}(\alpha)$ and $\gamma_{1}(\alpha)$ from the proposed two-stage method, oracle method and naive methods in the simulation experiment with $n \in \{500,3000\}$ and $\alpha=\{0.1,0.2,\ldots, 0.5\}$. Bias, Emp var, Boot var and Cov 95 stand for average bias of the estimated coefficients, empirical variance, average variance estimates, coverage probabilities of the 95\% confidence intervals.  }
	\label{SQsimu:t1}
	 \begin{threeparttable}
	
\begin{tabular}{|c|c|l|cc|cc|cc|}
    \hline
\multirow{2}{*}{$\alpha$} &\multirow{2}{*}{$n$}&&\multicolumn{2}{c|}{Oracle method}  & \multicolumn{2}{c|}{Proposed method} &\multicolumn{2}{c|}{Naive method} \\
 \cline{3-9}
&  & &$\beta_{1}(\alpha)$&$\gamma_{1}(\alpha)$&$\beta_{1}(\alpha)$&$\gamma_{1}(\alpha)$&$\beta_{1}(\alpha)$&$\gamma_{1}(\alpha)$\\
\hline
 \multirow{8}{*}{0.1}&\multirow{4}{*}{500}&Bias&-0.006&0.023&-0.036&-0.047&-0.197&-0.208\\
     \cline{3-9}
     && Emp var&0.123&0.262&0.124&0.247&0.080&0.161\\   
      \cline{3-9}
     && Boot var&0.143&0.244&0.140&0.230&0.088&0.156\\   
       \cline{3-9}
      && Cov 95&0.955&0.931&0.948&0.931&0.892&0.896\\ 
       \cline{2-9}
       &{\multirow{4}{*}{3000}}&Bias&0.000&-0.003&-0.015&-0.018&-0.195&-0.200\\
     \cline{3-9}&& Emp var&0.022&0.047&0.021&0.043&0.013&0.029\\   
      \cline{3-9}&& Boot var&0.022&0.043&0.022&0.042&0.014&0.027\\   
      \cline{3-9}&& Cov 95&0.938&0.945&0.946&0.931&0.624&0.756\\  
      \hline
      \multirow{8}{*}{0.2}&\multirow{4}{*}{500}&Bias&0.001&-0.022&-0.039&-0.069&-0.201&-0.210\\
     \cline{3-9}&& Emp var&0.052&0.118&0.056&0.117&0.029&0.071\\   
      \cline{3-9}&& Boot var&0.062&0.120&0.063&0.116&0.033&0.073\\   
      \cline{3-9}&& Cov 95&0.961&0.943&0.946&0.935&0.789&0.874\\
      \cline{2-9}
       &{\multirow{4}{*}{3000}}&Bias&0.002&-0.002&-0.011&-0.014&-0.196&-0.197\\
     \cline{3-9}&& Emp var&0.009&0.019&0.009&0.021&0.005&0.012\\   
     \cline{3-9}&& Boot var&0.010&0.021&0.010&0.020&0.005&0.013\\   
     \cline{3-9}&& Cov 95&0.954&0.954&0.947&0.939&0.242&0.574\\   
      \hline
  \multirow{8}{*}{0.3}&\multirow{4}{*}{500}&Bias&0.001&0.001&-0.011&-0.023&-0.207&-0.202\\
     \cline{3-9}&& Emp var&0.033&0.080&0.034&0.080&0.014&0.046\\   
     \cline{3-9}&& Boot var&0.038&0.079&0.039&0.077&0.016&0.046\\   
      \cline{3-9}&& Cov 95&0.955&0.943&0.968&0.943&0.625&0.824\\  
      \cline{2-9}
       &{\multirow{4}{*}{3000}}&Bias&0.001&0.002&-0.015&-0.014&-0.202&-0.198\\
     \cline{3-9}&& Emp var&0.005&0.012&0.006&0.014&0.002&0.008\\   
     \cline{3-9}&& Boot var&0.006&0.013&0.006&0.013&0.002&0.008\\   
      \cline{3-9}&& Cov 95&0.965&0.965&0.938&0.937&0.026&0.376\\     
      \hline      
      
  \multirow{8}{*}{0.4}&\multirow{4}{*}{500}&Bias&-0.005&-0.008&-0.021&-0.028&-0.211&-0.207\\
     \cline{3-9}&& Emp var&0.023&0.055&0.022&0.053&0.009&0.032\\   
      \cline{3-9}&& Boot var&0.025&0.055&0.025&0.054&0.009&0.031\\   
      \cline{3-9}&& Cov 95&0.944&0.938&0.955&0.948&0.396&0.766\\  
      \cline{2-9}
       &{\multirow{4}{*}{3000}}&Bias&-0.007&-0.008&-0.012&-0.007&-0.208&-0.207\\
     \cline{3-9}&& Emp var&0.003&0.009&0.004&0.009&0.001&0.005\\   
      \cline{3-9}&& Boot var&0.004&0.009&0.004&0.009&0.002&0.005\\   
      \cline{3-9}&& Cov 95&0.958&0.960&0.951&0.951&0.000&0.160\\       
      \hline

      \multirow{8}{*}{0.5}&\multirow{4}{*}{500}&Bias&0.007&0.005&-0.034&-0.031&-0.204&-0.195\\
     \cline{3-9}&& Emp var&0.015&0.045&0.014&0.039&0.005&0.024\\   
      \cline{3-9}&& Boot var&0.017&0.041&0.017&0.041&0.006&0.022\\   
      \cline{3-9}&& Cov 95&0.952&0.935&0.952&0.953&0.263&0.723\\  
      \cline{2-9}
       &{\multirow{4}{*}{3000}}&Bias&0.001&-0.004&-0.018&-0.017&-0.209&-0.206\\
     \cline{3-9}&& Emp var&0.003&0.007&0.003&0.007&0.001&0.004\\   
     \cline{3-9}&& Boot var&0.003&0.007&0.003&0.007&0.001&0.004\\   
      \cline{3-9}&& Cov 95&0.949&0.943&0.937&0.940&0.000&0.077\\    
      \hline      
\end{tabular}
     \end{threeparttable}
\end{table}

Next, we compare the numerical performance between the proposed two-stage method and the joint regression method in \cite{Chen2021} with different choices of specification functions $G_1(\cdot)$ and $G_2(\cdot)$. Specifically, we consider  four combinations of $(G_1(\cdot),G_2(\cdot))$, where $G_1(z)=z$ or $0$, and $G_2(z)=\exp(z)$ or $\log\{1+\exp(z)\}$. The initial values for the joint regression method are generated from the normal distribution with mean equal to the estimator obtained from the proposed two-stage method and standard deviation equal to one. The non-parametric estimated conditional probability of complier group described above, $\tilde{\kappa}_{v}$, are adopted in both methods.  In each setting, we generate 1000 simulated data sets and choose $B=1000$ as the number of bootstrapped samples. 

Results with $n=3000$ for $\alpha = \{0.1,0.2,0.3,0.4,0.5\}$ are presented in Table~\ref{SQsimu:t2}. 	Histograms of $\hat{\gamma}_{1}(\alpha)$ across the 1000 simulated data sets are also shown in Figure~\ref{SQsimu:f1}. 
We found from Table~\ref{SQsimu:t2} and Figure~\ref{SQsimu:f1}, that the proposed two-stage method and joint regression method have two main differences in their numerical performance: (1) the distributions of 1000 estimators from the proposed two-stage method across all $\alpha$ levels are well approximated to a normal distribution, while the  distributions of 1000 estimators from the joint regression method are not  well approximated by a normal distribution for some $\alpha$ levels;
(2) the estimated coefficients from the proposed two-stage method are close to the true underlying value, while  the estimates from the joint regression method are not close to the true underlying values for some $\alpha$ levels.
One potential reason for the differences is that the joint regression estimators are obtained from solving a non-convex loss function for which global minimum is not guaranteed. 
Thus, the estimators may be sensitive to the choice of initial values.

In short, our numerical results suggest that without taking into account compliers can lead to substantial biased estimation and inference. Compared to the joint regression approach, our proposed two-stage method has a more robust and stable numerical performance.
Moreover, the proposed method can be implemented efficiently compared to that of the joint regression approach in \citet{Chen2021}.

\begin{table}[!t]
	\fontsize{8}{9}\selectfont
	\centering
	\caption{Results for the two stage method and the  joint regression method under four different combinations of $(G_1(\cdot),G_2(\cdot))$, denoted as Joint 1--4,  with $n=3000$ and $\alpha=\{0.1,0.2,\ldots, 0.5\}$. Bias, Emp var, Boot var, Boot SD, and Cov 95 stand for average bias of the estimated coefficients, empirical variance, average variance estimates, average standard error estimates, and coverage probabilities of the 95\% confidence intervals.}
	\label{SQsimu:t2}
	 \begin{threeparttable}
	
\begin{tabular}{|c|c|c|c|c|c|c|}
    \hline
 \multirow{2}{*}{$\alpha$}&&\multirow{2}{*}{Two-stage method} &\multicolumn{4}{c|}{Joint regression method} \\
 \cline{4-7}&&&Joint 1&Joint 2&Joint 3&Joint 4\\
 \hline
 \multirow{5}{*}{0.1}& Bias&-0.018&-0.004&-0.025&-0.029&-0.008\\
     \cline{2-7} &Emp var&0.043&0.046&0.199&0.198&0.045\\
      \cline{2-7} &Boot var&0.042&0.047&0.048&0.046&0.045\\
      \cline{2-7} &Cov 95&0.931&0.940&0.936&0.932&0.932\\
	   \cline{2-7}&Boot SD&0.204&0.215&0.217&0.213&0.211\\
	   \hline
	   
	    \multirow{5}{*}{0.2}&  Bias&-0.014&-0.036&-0.098&-0.099&-0.038\\
     \cline{2-7} &Emp var&0.021&0.157&0.958&0.961&0.159\\
      \cline{2-7}& Boot var&0.020&0.061&0.030&0.030&0.061\\
      \cline{2-7}& Cov 95&0.939&0.942&0.932&0.932&0.940\\
      \cline{2-7}& Boot SD&0.142&0.162&0.154&0.153&0.161\\
	   \hline
	   
	    \multirow{5}{*}{0.3}&   Bias&-0.014&-0.010&-0.120&-0.120&-0.011\\
    \cline{2-7}& Emp var&0.014&0.014&1.374&1.374&0.014\\
      \cline{2-7}& Boot var&0.013&0.020&0.022&0.022&0.020\\
     \cline{2-7}& Cov 95&0.937&0.939&0.929&0.928&0.938\\
      \cline{2-7}& Boot SD&0.114&0.120&0.125&0.123&0.120\\
	   \hline
	   
	    \multirow{5}{*}{0.4}&  Bias&-0.007&-0.031&-0.178&-0.178&-0.032\\
     \cline{2-7}& Emp var&0.009&0.168&3.196&3.196&0.169\\
      \cline{2-7}& Boot var&0.009&0.158&0.016&0.016&0.158\\
      \cline{2-7}& Cov 95&0.951&0.949&0.936&0.935&0.949\\
      \cline{2-7}& Boot SD&0.097&0.126&0.107&0.107&0.125\\
	   \hline
	   
	    \multirow{5}{*}{0.5}&  Bias&-0.017&-0.091&-0.342&-0.343&-0.092\\
     \cline{2-7}& Emp var&0.007&0.415&4.466&4.466&0.418\\
      \cline{2-7}& Boot var&0.007&0.290&0.019&0.019&0.290\\
      \cline{2-7}& Cov 95&0.940&0.937&0.910&0.910&0.935\\
      \cline{2-7}& Boot SD&0.085&0.151&0.101&0.100&0.151\\  
	   \hline
\end{tabular}
     \end{threeparttable}
\end{table}

\begin{figure}[!htp]
\centering
\includegraphics[scale=0.4543]{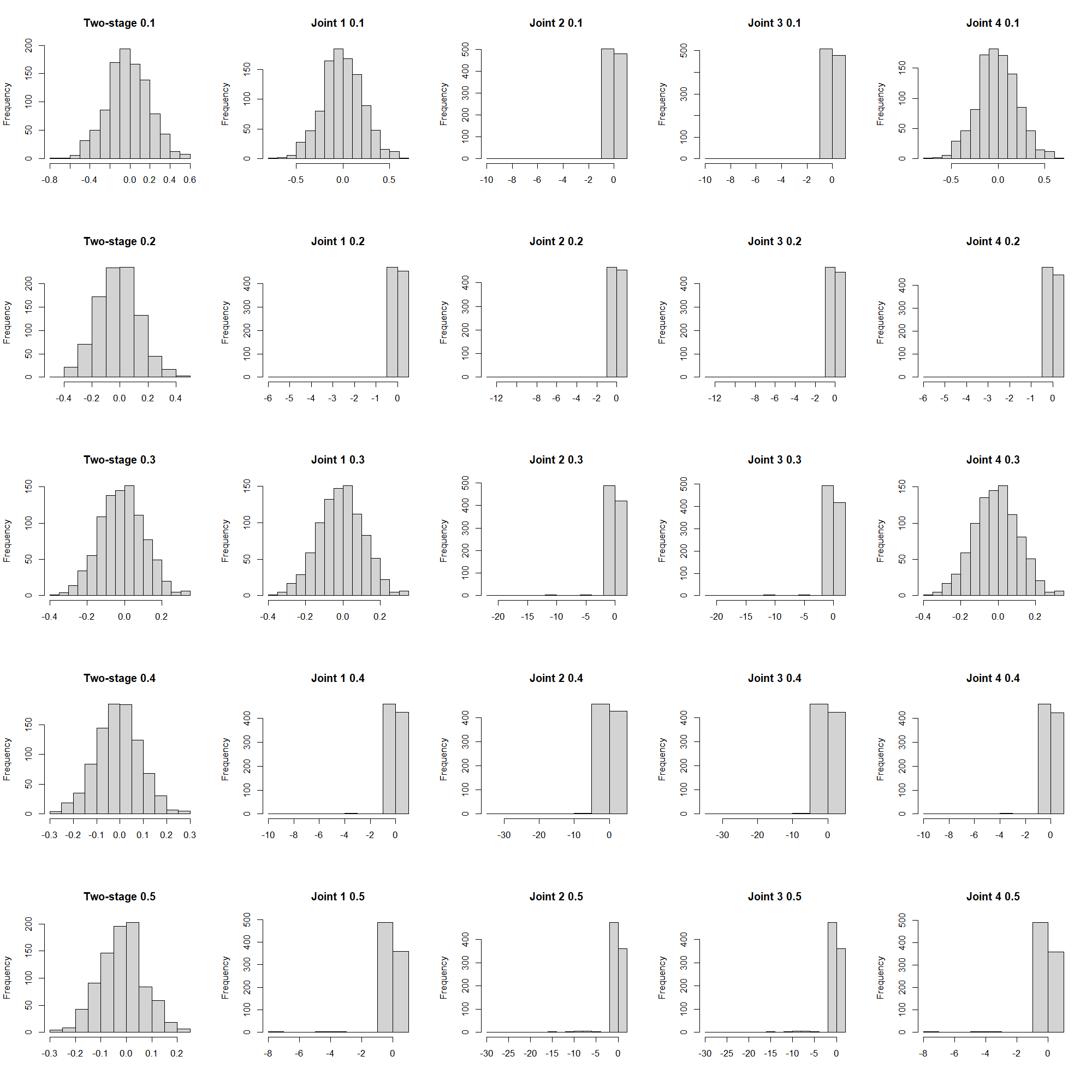}
\caption{The estimated complier expected shortfall treatment effect, $\hat{\gamma}_{1}(\alpha)$, based on  random initial values with sample size $n=3000$. From left to right, each column represents  $\hat{\gamma}_{1}(\alpha)$ from the proposed two-stage method and joint regression method with different specifications of $(G_1(\cdot),G_2(\cdot))$, which are denoted as Joint 1--4. From top to bottom, each row represents  $\hat{\gamma}_{1}(\alpha)$ calculated at $\alpha=\{0.1,\ldots, 0.5\}$.}
\label{SQsimu:f1}
\end{figure}

\section{An Application to the JTPA Dataset}
Job Training Partnership Act (JTPA) is a large publicly-funded training program that began in year 1983. Title II of JTPA, the largest component of JTPA, provides training for economically disadvantaged adults. In this study, applicants were randomized for JTPA trainings in the application process, but did not compel those offered services to participate in training. 
We consider a dataset from Title II of the JTPA study that includes 6102 adult women and 5102 adult men who applied for JTPA between years 1987 and 1989.  The objective is to quantify the effect of JTPA training for the low-income groups.   This dataset has also been considered in \citet{Abadie2002} in estimating  complier quantile treatment effect of JTPA training on the 30-months earnings. 

In this dataset, the response variable is the 30-months earnings, and the observed covariates include race of applicants, whether participants graduated from high-school, including general educational diploma (GED) holders,  marital status, age, aid for families with dependent children (AFDC) receipt (for women), whether worked at least 12 weeks in the 12 months preceding random assignment,  the original recommended service strategy (classroom training, on-the-job training (OJT)/job search assistance (JSA), and others), and whether the data are from the second follow-up survey. The summary statistics of the covariates for adult women and men are shown in Table \ref{SQrd:t1}. 
From Table \ref{SQrd:t1}, we see that only around 60\% percent of participants who are offered the JTPA services actually receive the JTPA training. Besides, the percentage of participants with high school degree in the trainee group is higher than that of the non-trainee group. These imbalances suggest the existence of some unmeasured confounders that affect both the JTPA training status and potential earnings. Without considering these confounders, the standard method may fail to provide a valid estimate of  the effect of JTPA training among low-income subpopulation. 

We first estimate the complier proportions for men and women group, $p_c$, by
\[
p_c=1-\frac{\sum_{i=1}^nD_i(1-V_i)}{\sum_{i=1}^n(1-V_i)}-\frac{\sum_{i=1}^n(1-D_i)V_i}{\sum_{i=1}^nV_i}.
\]
Specifically, the estimated complier proportions for men and women are $0.61$ and $0.64$,  respectively. 
We apply the proposed CRESTE method to estimate the effect of JTPA training for low-income adult men and women separately, using the offer of JTPA service as the instrumental variable. 
As discussed in \cite{Abadie2002}, the validity of using JTPA service as an instrumental variable is guaranteed by the fact that the training offer is randomly assigned. 
For both of the analyses, we adjust for the following covariates: dummy variables for black and Hispanic applicants, an indicator variable for high-school graduates (including GED holders), an indicator for married applicants, 5 age-group indicator variables (22-25, 26-29, 30-35, 36-44, 45-54), an indicator for AFDC receipt (for women), an indicator variable recording 
whether the applicant worked at least 12 weeks in the 12 months preceding random assignment, dummy variables for the original recommended service strategy (classroom, OJT/JSA, other), and an indicator for whether earnings data are from the second follow-up survey.

\begin{table}[!htp]
	\fontsize{8}{9}\selectfont
	\centering
	\caption{Descriptive statistics of variables for participants in JTPA dataset, overall and stratified by the JTPA training status for men and women.}
	\label{SQrd:t1}

\begin{tabular}{|c|c|c|c|c|c|c|}
    \hline
 \multirow{2}{*}{Gender}&\multirow{2}{*}{Variable}&\shortstack{Entire Sample\\ ($N=5102$)} &\multicolumn{2}{c|}{Assignment}&\multicolumn{2}{c|}{Treatment} \\
 \cline{4-7}&&&\shortstack{Training\\ ($N=3339$)}&\shortstack{Non-training \\($N=1703$)}&\shortstack{Trainee\\($N=2136$)}&\shortstack{Non-trainee \\($N=2966$)}\\
\hline
  \multirow{7}{*}{Men}&\shortstack{Training} &0.42&0.62&0.01 &-&-\\
  \cline{2-7}
  &\shortstack{High School or GED} &0.69&0.60&0.69 &0.71&0.68 \\
    \cline{2-7}
  & \shortstack{Age} &32.91 &32.85&33.04&32.76&33.02 \\
     \cline{2-7}
  & \shortstack{Married}&0.35 &0.36&0.34&0.37&0.34  \\
    \cline{2-7}
  &  \shortstack{Black} &0.25&0.25&0.25 &0.26&0.25  \\
    \cline{2-7}
   &\shortstack{Hispanic}&0.10&0.10&0.09 &0.10&0.09  \\
   \cline{2-7}
   &\shortstack{Worked less than 13 weeks}&0.40&0.40&0.40 &0.40&0.40  \\
   \cline{2-7}
      &\shortstack{Data from\\ the second follow-up survey
}&0.29&0.30&0.28 &0.30&0.29  \\
 \cline{2-7}
 &\shortstack{classroom training
}&0.20&0.21&0.19 &0.26&0.16  \\
    \cline{2-7}
     &\shortstack{OJT/JSA
}&0.50&0.50&0.50 &0.46&0.53  \\
 \hline
  \multirow{2}{*}{Gender}&\multirow{2}{*}{Variable}&\shortstack{Entire Sample\\ ($N=6102$)} &\multicolumn{2}{c|}{Assignment}&\multicolumn{2}{c|}{Treatment} \\
 \cline{4-7}&&&\shortstack{Training\\ ($N=4088$)}&\shortstack{Non-training \\($N=2014$)}&\shortstack{Trainee\\($N=2722$)}&\shortstack{Non-trainee \\($N=3380$)}\\
\hline
  \multirow{8}{*}{Women}&\shortstack{Training} &0.45&0.67&0.02 &-&-\\
    \cline{2-7}
  &\shortstack{High School or GED} &0.72&0.73&0.70 &0.75&0.70 \\
     \cline{2-7}
  & \shortstack{Age} &33.33 &33.33&33.35&33.11&33.52 \\
    \cline{2-7}
   &\shortstack{Married}&0.22 &0.22&0.21&0.22&0.21  \\
     \cline{2-7}
   & \shortstack{Black} &0.26&0.27&0.26 &0.26&0.27  \\
    \cline{2-7}
   &\shortstack{Hispanic}&0.12&0.12&0.12 &0.12&0.11  \\
     \cline{2-7}
   &\shortstack{Worked less than 13 weeks}&0.52&0.52&0.52 &0.51&0.53  \\
    \cline{2-7}
     &\shortstack{Data from\\ the second follow-up survey
}&0.26&0.26&0.25 &0.26&0.25  \\
 \cline{2-7}
 &\shortstack{classroom training
}&0.38&0.38&0.39 &0.45&0.33  \\
    \cline{2-7}
     &\shortstack{OJT/JSA
}&0.37&0.37&0.38 &0.32&0.42  \\
    \cline{2-7}
    & \shortstack{AFDC}&0.31&0.30&0.31&0.32&0.30 \\
    \hline
\end{tabular}
\end{table}

We perform our analysis for adult men and women  separately at $\alpha=0.25$ and $0.5$, which are the groups of men or women whose 30-months earnings are below the first  quartile and the  median of the 30-months earnings (after adjusting for the covariates), respectively.  We apply the proposed method and the as-treated method, which refers to applying the two-stage method of \cite{Barendse2020} to the entire dataset. 
  We use the same variables as those of \cite{Abadie2002} to estimate $\kappa_{v,i}$, where $\pi$ is estimated by the empirical estimator of $\mathbb{E}(V)$. We use 30-month earnings to estimate $v$ for adult men, and 30-month earnings and the classroom training indicator to estimate $v$ for women. We adopt the fourth-order Epanechnikov kernel in $\hat{v}_i$, and the bandwidth is selected via cross-validation. In our analysis, the selected bandwidths  are $5200$ and $7800$ for adult men and women, respectively.

Table~\ref{SQrd:t2} presents the estimated CQTE and CRESTE (with standard errors computed via the bootstrap) of the effect of JTPA services on earnings for both men and women subgroups.  
From Table~\ref{SQrd:t2}, we see that the proposed method concludes that JTPA training does not have a statistically significant effect on improving earnings for low-income men. In contrast, the as-treated method overestimates the effect of JTPA training on earnings and the conclusion obtained from the as-treated method can be misleading.  
From the discussion in \citet{Abadie2002}, the difference between these two methods  may be due to self-selection  or an effort by program operators to exclude men with low earnings potentials from  JTPA training.  
In contrast,  both results from the as-treated analysis and the proposed method show that the JTPA service has significant effect on improving earnings for low-income adult women, even though overestimation of the effects from the as-treated analysis remain visible. Quantitatively we note that for adult women in the lower half of their earnings (after adjusted for the covariates), the average JTPA training effect is about \$1086, much lower than the median effect of \$1760. It is notable that  CQTE and CRESTE at the $0.25$ quantile level do not differ  much for women, which indicates that  we do not have a significant spread toward lower values of the response below the first quartile for women.
The difference between CQTE and CRESTE at $\alpha = 0.25$ is greater for men than that of women, which suggests a greater lower tail spread for the effects of JTPA training in terms of the 30-month earnings for men than for women.

\begin{table}[!htp]
	\fontsize{8}{9}\selectfont
	\centering
	\caption{ The estimated coefficients (Est) and the corresponding estimated standard error (Boot SD) of JTPA trainings from the proposed CRESTE method and the as-treated method in JTPA dataset at $\alpha=0.25$ and $0.5$.}
	\label{SQrd:t2}

\begin{tabular}{|c|c|c|c|c|c|c|}
    \hline
 \multirow{2}{*}{Gender}&\multirow{2}{*}{$\alpha$}& &\multicolumn{2}{c|}{CRESTE}&\multicolumn{2}{c|}{As-treated} \\
 \cline{4-7}&&&$\beta_{1}(\alpha)$&$\gamma_{1}(\alpha)$&$\beta_{1}(\alpha)$&$\gamma_{1}(\alpha)$\\
\hline
     \multirow{4}{*} {Men}& \multirow{2}{*}{$0.25$}&Est&755&383&2510&1502\\
      \cline{3-7} &&  SD&588&408&397&211\\
        \cline{2-7}
    & \multirow{2}{*}{$0.5$}&Est&1663&921&4420&2984\\
      \cline{3-7} &&   SD&986&617&673&358\\
     \hline
     \multirow{4}{*} {Women}& \multirow{2}{*}{$0.25$}&Est&657&633&1013&792\\
      \cline{3-7} &&   SD&222&187&177&129\\
          \cline{2-7}
    & \multirow{2}{*}{$0.5$}&Est&1760&1086&2707&1633\\
     \cline{3-7} &&   SD&606&323&427&221\\ 
     \hline
\end{tabular}
\end{table}

\section{Discussion}
We consider estimating the CRESTE, i.e., the expected shortfall treatment effect for the compliers, in observational studies.  Different from the quantile treatment effect, the expected shortfall treatment effect measures the aggregate quantile treatment effect over the lower (or upper) tail of the conditional distribution for the response variable. Moreover, the average treatment effect can be treated as a special case of the expected shortfall treatment effect with $\alpha=1$. 

We propose a two-stage method to estimate CRESTE by utilizing the special characteristic of a binary instrumental variable. 
 The proposed two-stage estimation procedure involves solving a weighted quantile regression to obtain the thresholding quantile at the first stage, and solving a weighted least squares problem to estimate the expected shortfall parameter at the second stage.  The method can be readily implemented using existing software such as the \texttt{R} package \texttt{quantreg} for solving a weighted quantile regression, and the \texttt{lm} function in \texttt{base} \texttt{R} for solving the weighted least squares problem.    
 Compared to the approach in \citet{Chen2021} that requires solving a non-convex loss function, we demonstrate in numerical studies that the proposed two-stage method has a more stable solution and is computationally efficient. 
Theoretically, the estimated CRESTE from the proposed two-stage method is locally robust to the estimates of quantiles at the first stage due to Neyman-orthogonalization in the relevant expected shortfall score function. 

\section*{Acknowledgement}
\noindent  We thank Le-Yu Chen and Yu-Min Yen for sharing \texttt{R} code to estimate the complied expected shortfall treatment effect using the joint regression approach.

\appendix
\section{Propositions}
\label{SQpropappen}

\begin{prop}
Under Assumptions (A1) and (A2), Models~\eqref{SQeq1} and~\eqref{SQeq2} are equivalent. 
\label{SQprop2}
\end{prop}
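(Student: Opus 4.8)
The plan is to reduce the asserted model equivalence to a single distributional identity: for $d\in\{0,1\}$ and every $y$,
\[
\Pr(Y\le y\mid D=d,\bX,D_1>D_0)=\Pr(Y_d\le y\mid \bX,D_1>D_0).
\]
Granting this, the conditional $\alpha$th quantile and the conditional expected shortfall of $Y$ given $(D=d,\bX,D_1>D_0)$ agree with those of $Y_d$ given $(\bX,D_1>D_0)$, and since $D$ takes only the values $0$ and $1$ one may pass freely between the ``$d$'' appearing in~\eqref{SQeq1} and the ``$D$'' appearing in~\eqref{SQeq2}: on the event $\{D=d\}$ we have $D\beta_{1}^*(\alpha)=d\beta_{1}^*(\alpha)$ and $D\gamma_{1}^*(\alpha)=d\gamma_{1}^*(\alpha)$, so~\eqref{SQeq1} holds if and only if~\eqref{SQeq2} does, with the same coefficient vectors.

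To prove the identity I would first record two elementary facts valid on the complier event $\{D_1>D_0\}$. Since a complier has $D_1=1$ and $D_0=0$, the realized treatment equals the instrument, $D=D_V=V$, so $\{D=d\}\cap\{D_1>D_0\}=\{V=d\}\cap\{D_1>D_0\}$; and the observed outcome satisfies $Y=DY_1+(1-D)Y_0=Y_d$ on $\{D=d\}$. Next, by the exclusion restriction~(A2), $Y_{1d}=Y_{0d}$ almost surely, so $Y_d$ is well defined (free of $V$) and equals a.s.\ a component of the vector $(Y_{00},Y_{01},Y_{10},Y_{11})$, while $I(D_1>D_0)$ is a function of $(D_0,D_1)$; both objects appear in~(A1), which therefore yields the conditional independence $\bigl(Y_d,\,I(D_1>D_0)\bigr)\ci V\mid\bX$.

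Combining these, I would compute
\[
\Pr(Y\le y\mid D=d,\bX,D_1>D_0)=\Pr(Y_d\le y\mid V=d,\bX,D_1>D_0)=\frac{\Pr\bigl(Y_d\le y, D_1>D_0\mid\bX\bigr)\Pr(V=d\mid\bX)}{\Pr(D_1>D_0\mid\bX)\Pr(V=d\mid\bX)}=\Pr(Y_d\le y\mid\bX,D_1>D_0),
\]
using $Y=Y_d$ and $\{D=d\}=\{V=d\}$ on the complier event in the first step, the conditional independence in the second, and cancellation in the third (the positivity $\Pr(V=d\mid\bX)>0$ needed for the conditioning is implicit throughout, and in any case follows from~(A3)). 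Equality of all conditional quantiles, $Q_u(\cdot\mid D=d,\bX,D_1>D_0)=Q_u(\cdot\mid\bX,D_1>D_0)$ for every $u$, follows at once, and hence equality of $S_\alpha=\alpha^{-1}\int_0^\alpha Q_u\,du$ as well; this gives the identity and, by the reduction above, the proposition.

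I expect the only real obstacle to be bookkeeping rather than substance: one must invoke~(A2) at the right moment so that ``$Y_d$'' denotes the same random variable with or without conditioning on $V$, and one must identify $\{D=d\}$ with $\{V=d\}$ \emph{only after} restricting to compliers — conflating these is the easy mistake. It is also worth noting that~(A4) (no defiers) plays no role here; it is needed elsewhere, e.g.\ in the identification of $\kappa_v$, which is consistent with the statement invoking only~(A1) and~(A2).
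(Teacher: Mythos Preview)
Your proposal is correct and follows essentially the same route as the paper: reduce to the distributional identity $\Pr(Y\le y\mid D=d,\bX,D_1>D_0)=\Pr(Y_d\le y\mid\bX,D_1>D_0)$ by using $Y=Y_d$ on $\{D=d\}$, replacing $\{D=d\}$ with $\{V=d\}$ on the complier event, and then invoking (A1)--(A2) to remove the conditioning on $V$; the quantile and expected-shortfall equalities follow. Your write-up is in fact slightly more explicit than the paper's about the conditional-independence step and about why (A4) is not needed.
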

\begin{proof}
Given the fact that $Y=D\times Y_1+(1-D)\times Y_0$, we have 
\[
\begin{split}
Q_{\alpha}(Y|D=1,\bX,D_1>D_0)&=\inf\{y:\Pr(Y\le y|D=1,\bX,D_1>D_0)\ge \alpha\}\\
&=\inf\{y:\Pr(Y_1\le y|D=1,\bX,D_1>D_0)\ge \alpha\}.
\end{split}
\]
Since $V=D$ for the compliers, we have $\inf\{y:\Pr(Y_1\le y|D=1,\bX,D_1>D_0)\ge \alpha\}
=\inf\{y:\Pr(Y_1\le y|V=1,\bX,D_1>D_0)\ge \alpha\}$. Under Assumptions (A1) and (A2), we have 
\[
\begin{split}
\inf\{y:\Pr(Y_1\le y|V=1,\bX,D_1>D_0)\ge \alpha\}=&\inf\{y:\Pr(Y_1\le y|\bX,D_1>D_0)\ge \alpha\}\\
=&Q_{\alpha}(Y_1|\bX,D_1>D_0).
\end{split}
\]
Similarly, we have $Q_{\alpha}(Y|D=0,\bX,D_1>D_0)=Q_{\alpha}(Y_0|\bX,D_1>D_0)$.

The aforementioned equations imply that $Q_{\alpha}(Y|D=d,\bX,D_1>D_0)=Q_{\alpha}(Y_d|\bX,D_1>D_0)$ for $d=\{0,1\}$.  Similarly, we have
\[
\begin{split}
S_{\alpha}(Y|D=d,\bX,D_1>D_0)\doteq&\frac{1}{\alpha}\int_0^\alpha Q_{u}(Y|D=d,\bX,D_1>D_0)du\\
=&\frac{1}{\alpha}\int_0^\alpha Q_{u}(Y_d|\bX,D_1>D_0)du\\
\doteq&S_{\alpha}(Y_d|\bX,D_1>D_0).
\end{split}
\]
Thus, Model \eqref{SQeq1} and Model \eqref{SQeq2} are equivalent.
\end{proof}

\begin{prop}
Under Assumptions (A1) and (A2), in the one-sided compliance case where  subjects with $V=0$ have no access to the treatment  (i.e., $\Pr(D_0=0|\bX)=1$), $\gamma_{1}(\alpha)=S_{\alpha}(Y_1|\bX,D=1)-S_{\alpha}(Y_0|\bX,D=1)$.
\label{SQprop1}
\end{prop}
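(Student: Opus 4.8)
The plan is to show that in the one-sided compliance case, the complier subgroup $\{D_1 > D_0\}$ coincides (up to a $\bX$-conditional probability-one event) with the treated subgroup $\{D=1\}$, and then invoke Proposition~\ref{SQprop2} (equivalently, the argument in its proof) to pass from potential-outcome expected shortfalls to observed-outcome expected shortfalls. Concretely, under one-sided compliance we have $\Pr(D_0 = 0 \mid \bX) = 1$, so the only latent types with positive probability are compliers ($D_1 = 1, D_0 = 0$) and never takers ($D_1 = 0, D_0 = 0$); always takers and defiers have conditional probability zero. Since $D = D_V$ and $D_0 = 0$ almost surely, the event $\{D = 1\}$ forces $V=1$ and $D_1 = 1$, i.e. $\{D=1\} \subseteq \{D_1 > D_0\}$ modulo a null set conditional on $\bX$. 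Conversely, on $\{D_1 > D_0\}$ we have $D_1 = 1$, but $D = 1$ only when additionally $V = 1$; so $\{D=1\}$ is a strict sub-event of the complier set in general. This means I cannot simply identify the two conditioning events — instead I should condition further on $V$.

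The key step is therefore to show $S_{\alpha}(Y_d \mid \bX, D=1) = S_{\alpha}(Y_d \mid \bX, D_1 > D_0)$ for $d = 0,1$. On the event $\{D = 1\}$ we have (a.s. given $\bX$) that $D_1 > D_0$ holds and $V = 1$, so $\{D=1\} = \{D_1 > D_0\} \cap \{V = 1\}$ up to a null set. Hence $S_{\alpha}(Y_d \mid \bX, D=1) = S_{\alpha}(Y_d \mid \bX, D_1 > D_0, V = 1)$. Now apply Assumption~(A1): conditional on $\bX$, the vector $(Y_{00}, Y_{01}, Y_{10}, Y_{11}, D_0, D_1) \ci V$, and since $Y_d$ is a function of the $Y_{vd}$'s which under (A2) satisfy $Y_{1d} = Y_{0d}$ a.s., the conditional distribution of $Y_d$ given $(\bX, D_1 > D_0, V=1)$ equals that given $(\bX, D_1 > D_0)$. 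Because expected shortfall is a functional of the conditional distribution (via the quantile integral representation in~\eqref{SQeq0.1}), this gives $S_{\alpha}(Y_d \mid \bX, D=1) = S_{\alpha}(Y_d \mid \bX, D_1 > D_0)$. Subtracting the $d=0$ equation from the $d=1$ equation and using the definition $\gamma_1^*(\alpha) = S_{\alpha}(Y_1 \mid \bX, D_1 > D_0) - S_{\alpha}(Y_0 \mid \bX, D_1 > D_0)$ yields $\gamma_1^*(\alpha) = S_{\alpha}(Y_1 \mid \bX, D=1) - S_{\alpha}(Y_0 \mid \bX, D=1)$, which is the claim.

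I expect the main obstacle to be the careful bookkeeping of the conditional-probability-one statements: one must be precise that $\Pr(D_0 = 0 \mid \bX) = 1$ kills the always-taker and defier cells, that this forces the set equality $\{D=1\} = \{D_1 > D_0, V=1\}$ only up to $\bX$-conditional null sets, and that the independence in~(A1) is conditional on $\bX$ and applies to the joint vector including $(D_0, D_1)$ — so that conditioning on the complier event $\{D_1 > D_0\}$ (a function of $(D_0,D_1)$) does not interfere with the independence from $V$. The rest is a direct translation between the quantile and expected-shortfall representations exactly as in the proof of Proposition~\ref{SQprop2}; in fact, once the distributional identity for $Y_d$ is established the expected-shortfall identity is immediate from the integral formula, so I would keep that part brief and reference the analogous computation already carried out above.
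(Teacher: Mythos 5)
Your proof is correct and follows essentially the same route as the paper's: both use one-sided compliance to identify $\{D=1\}$ with $\{D_1>D_0\}\cap\{V=1\}$ up to ($\bX$-conditional) null sets, then invoke Assumption (A1) together with (A2) to drop the conditioning on $V$, and finally integrate the quantile identity to obtain the expected-shortfall identity. The only cosmetic difference is that you establish the distributional equality directly for the potential outcomes $Y_d$, whereas the paper runs the same chain of equalities through the observed outcome $Y$ via Proposition~\ref{SQprop2}.
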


\begin{proof}

From the proof of Proposition \eqref{SQprop2}, we have $S_{\alpha}(Y|D=d,\bX,D_1>D_0)=S_{\alpha}(Y_d|\bX,D_1>D_0)$. Thus, 
\[
\begin{split}
\gamma_{1}(\alpha)\doteq& S_{\alpha}(Y_1|\bX,D_1>D_0)-S_{\alpha}(Y_0|\bX,D_1>D_0)\\
=&S_{\alpha}(Y|D=1,\bX,D_1>D_0)-S_{\alpha}(Y|D=0,\bX,D_1>D_0).
\end{split}
\]
Note that $\Pr(D_0=0|\bX)=1$  implies that subjects with $D=1$ must belong to the complier group. Given the fact that $Y=D\times Y_1+(1-D)\times Y_0$, we have 
\[
\begin{split}
Q_{\alpha}(Y|D=1,\bX,D_1>D_0)&=\inf\{y:\Pr(Y_1\le y|D=1,\bX,D_1>D_0)\ge \alpha\}\\
&=\inf\{y:\Pr(Y_1\le y|D=1,\bX)\ge \alpha\}\\
&=Q_{\alpha}(Y_1|\bX,D=1).
\end{split}
\]
Moreover, 
\[
\begin{split}
Q_{\alpha}(Y|D=0,\bX,D_1>D_0)=&\inf\{y:\Pr(Y_0\le y|D=0,\bX,D_1>D_0)\ge \alpha\}\\
=&\inf\{y:\Pr(Y_0\le y|V=0,\bX,D_1=1)\ge \alpha\}\\
=&\inf\{y:\Pr(Y_0\le y|V=1,\bX,D_1=1)\ge \alpha\}\\
=&\inf\{y:\Pr(Y_0\le y|D=1,\bX)\ge \alpha\}\\
=&Q_{\alpha}(Y_0|\bX,D=1),
\end{split}
\]
where the second equality follows from the one-sided compliance constraint, $\Pr(D_0=0|\bX)=1$, and the third equality is ensured by Assumption (A1). Assumption (A1) along with $\Pr(D_0=0|\bX)=1$ further imply the fourth equality. 

For $d=\{0,1\}$, we have
\[
\begin{split}
S_{\alpha}(Y_d|D=1,\bX)=&\frac{1}{\alpha}\int_0^\alpha Q_{u}(Y_d|D=1,\bX)du\\
=&\frac{1}{\alpha}\int_0^\alpha Q_{u}(Y|D=d,\bX,D_1>D_0)du\\
=&S_{\alpha}(Y|D=d,\bX,D_1>D_0).
\end{split}
\]
Thus, $\gamma_{1}(\alpha)\doteq S_{\alpha}(Y_1|\bX,D_1>D_0)-S_{\alpha}(Y_0|\bX,D_1>D_0)$ is equivalent to $S_{\alpha}(Y_1|\bX,D=1)-S_{\alpha}(Y_0|\bX,D=1)$.
\end{proof}

\begin{prop}
Let  $\kappa_v(Y,\bZ)=\Pr(D_1>D_0|Y,\bZ)$. Under Assumptions (A1)--(A4), we have  
\[
\kappa_v(Y,\bZ)=1-\frac{D\{1-v(Y,\bZ)\}}{1-\pi(\bX)}-\frac{(1-D)v(Y,\bZ)}{\pi(\bX)},
\] 
where $v(Y,\bZ)=\Pr(V=1|Y,\bZ)$ and $\pi(\bX)=\Pr(V=1|\bX)$.
\label{SQprop3}
\end{prop}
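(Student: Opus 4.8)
The plan is to classify units by compliance type and exploit that, under monotonicity, each of the events $\{D=1\}$ and $\{D=0\}$ is consistent with only two types. Under Assumption (A4) there are no defiers, so every unit is a complier ($D_0=0,D_1=1$), an always-taker ($D_0=D_1=1$), or a never-taker ($D_0=D_1=0$); write $T\in\{\mathrm{C},\mathrm{AT},\mathrm{NT}\}$ for the latent type. Since $D=VD_1+(1-V)D_0$, an always-taker has $D=1$ regardless of $V$, a never-taker has $D=0$ regardless of $V$, and a complier has $D=V$. Hence on $\{D=1\}$ the type is $\mathrm{C}$ or $\mathrm{AT}$, so $\kappa_v(Y,\bZ)=1-\Pr(T=\mathrm{AT}\mid Y,D=1,\bX)$; on $\{D=0\}$ the type is $\mathrm{C}$ or $\mathrm{NT}$, so $\kappa_v(Y,\bZ)=1-\Pr(T=\mathrm{NT}\mid Y,D=0,\bX)$. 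It therefore suffices to show $\Pr(T=\mathrm{AT}\mid Y,D=1,\bX)=\{1-v(Y,\bZ)\}/\{1-\pi(\bX)\}$ and $\Pr(T=\mathrm{NT}\mid Y,D=0,\bX)=v(Y,\bZ)/\pi(\bX)$, after which the two cases combine into the displayed formula $\kappa_v(Y,\bZ)=1-D\{1-v(Y,\bZ)\}/\{1-\pi(\bX)\}-(1-D)v(Y,\bZ)/\pi(\bX)$.

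Next I would identify the ``pure-type'' events. A complier with $D=1$ must have $V=1$, while an always-taker has $D=1$ for any $V$; since defiers are excluded, the event $\{D=1,V=0\}$ coincides with $\{T=\mathrm{AT},\,V=0\}$, and symmetrically $\{D=0,V=1\}=\{T=\mathrm{NT},\,V=1\}$. Moreover, on $\{T=\mathrm{AT}\}$ we have $Y=Y_1$ and on $\{T=\mathrm{NT}\}$ we have $Y=Y_0$, directly from $Y=DY_1+(1-D)Y_0$; and under Assumptions (A1)--(A2) the vector $(Y_0,Y_1,D_0,D_1)$ is independent of $V$ given $\bX$ (exclusion collapses $Y_{vd}$ to $Y_d$, then (A1) supplies the independence). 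Since $I(T=\mathrm{AT})$ and $I(T=\mathrm{NT})$ are functions of $(D_0,D_1)$, it follows that, conditional on $\bX$, the pair $(Y,I(T=\mathrm{AT}))$ is independent of $V$, and likewise $(Y,I(T=\mathrm{NT}))$.

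I would then combine these facts through Bayes' rule, working with conditional densities of $Y$ (well defined, positive, and with the relevant conditioning events carrying positive probability by Condition (C2) and (A3)). From $\{D=1,V=0\}=\{T=\mathrm{AT},V=0\}$, the substitution $Y=Y_1$ on $\{T=\mathrm{AT}\}$, and the conditional independence above, one gets $\Pr(D=1,V=0\mid Y,\bX)=\{1-\pi(\bX)\}\Pr(T=\mathrm{AT}\mid Y,\bX)$; dividing by $\Pr(D=1\mid Y,\bX)$ and using that $T=\mathrm{AT}$ forces $D=1$ yields $\Pr(T=\mathrm{AT}\mid Y,D=1,\bX)=\Pr(V=0\mid Y,D=1,\bX)/\{1-\pi(\bX)\}=\{1-v(Y,\bZ)\}/\{1-\pi(\bX)\}$. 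Running the identical argument with $(\mathrm{AT},V=0,D=1)$ replaced by $(\mathrm{NT},V=1,D=0)$ gives $\Pr(T=\mathrm{NT}\mid Y,D=0,\bX)=v(Y,\bZ)/\pi(\bX)$. Substituting into the two expressions for $\kappa_v(Y,\bZ)$ from the first step and merging the $D=1$ and $D=0$ cases completes the proof. (This parallels the proof of Lemma~3.2 in \citet{Abadie2002}.)

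The main obstacle I anticipate is the bookkeeping around conditioning on the continuous outcome $Y$: all the ``$\Pr(\cdot\mid Y,\cdots)$'' manipulations must be read as statements about regular conditional distributions, and one must check that the conditional-independence claims genuinely survive the extra conditioning on $Y$ — this is precisely where the exclusion restriction (A2), not merely independence of the instrument, is doing the work — and that no denominators degenerate, which is why Condition (C2) and Assumption (A3) are invoked.
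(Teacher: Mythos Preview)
Your overall strategy is correct and is essentially the same as the paper's: both identify the ``pure-type'' events $\{D=1,V=0\}=\{T=\mathrm{AT},V=0\}$ and $\{D=0,V=1\}=\{T=\mathrm{NT},V=1\}$, then use the conditional independence in (A1)--(A2) to factor out $\Pr(V=0\mid\bX)$ and $\Pr(V=1\mid\bX)$. The paper does this by computing $\mathbb{E}\{D(1-V)\mid Y,\bZ\}$ and $\mathbb{E}\{(1-D)V\mid Y,\bZ\}$ directly (with $\bZ=(D,\bX)$), while you split on $D$ and pass through $\Pr(\cdot\mid Y,\bX)$; the difference is organizational.

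There is, however, one overstatement that you should fix. You claim that ``conditional on $\bX$, the pair $(Y,I(T=\mathrm{AT}))$ is independent of $V$.'' This is false in general: $Y=DY_1+(1-D)Y_0$ depends on $V$ through $D$ for compliers, so $Y$ is \emph{not} a function of $(Y_0,Y_1,D_0,D_1)$ alone, and the inference ``functions of an independent vector are independent'' does not apply. What you actually need, and what is true, is only the restriction of this statement to the event $\{T=\mathrm{AT}\}$: since $Y=Y_1$ there, the event $\{T=\mathrm{AT},\,Y\in dy\}=\{D_0=D_1=1,\,Y_1\in dy\}$ is determined by $(Y_0,Y_1,D_0,D_1)$, and (A1)--(A2) give
\[
\Pr(T=\mathrm{AT},\,V=0,\,Y\in dy\mid\bX)=\Pr(V=0\mid\bX)\,\Pr(T=\mathrm{AT},\,Y\in dy\mid\bX),
\]
which is exactly the factorization you use in the next line. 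The symmetric remark applies to the never-taker case. Once you replace the blanket independence claim with this event-specific version (which is precisely how the paper argues, writing $\Pr(V=0\mid D_1=D_0=1,Y_1,\bX)=\Pr(V=0\mid\bX)$), the rest of your argument goes through unchanged.
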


\begin{proof}
Note that  $D(1-V)$ only differs from zero when $D=1$ and $V=0$.  By the monotonicity assumption, $D_0=1$ implies $D_1=1$. Then
\[
\begin{split}
&\mathbb{E}\{D(1-V)|Y,\bZ\}=\Pr\{D(1-V)=1|Y,\bZ\}=\Pr(D_1=D_0=1,V=0|Y,\bZ)\\
&=\Pr(D_1=D_0=1|Y,\bZ)\Pr(V=0|D_1=D_0=1,Y_1,\bX)\\
&=\Pr(D_1=D_0=1|Y,\bZ)\Pr(V=0|\bX),
\end{split}
\]
where the last equality follows from Assumptions (A1) and (A4), i.e.,  $V$ is independent of $(D_1,D_0,Y_1,Y_0)$ conditional on $\bX$. 
Similarly, we can show that $\mathbb{E}\{(1-D)V|Y,\bZ\}=\Pr(D_1=D_0=0|Y,\bZ)\Pr(V=1|\bX)$. 
Therefore,
\[
\begin{split}
&1-\frac{D\{1-v(Y,\bZ)\}}{1-\pi(\bX)}-\frac{(1-D)v(Y,\bZ)}{\pi(\bX)}\\
&=\mathbb{E}\Big\{1-\frac{D(1-V)}{\Pr(V=0|\bX)}-\frac{(1-D)V}{\Pr(V=1|\bX)}|Y,\bZ\Big\}\\
&=1-\Pr(D_1=D_0=1|Y,\bZ)-\Pr(D_1=D_0=0|Y,\bZ)\\
&=\Pr(D_1>D_0|Y,\bZ),
\end{split}
\]
where the last equality follows from the monotonicity assumption.
\end{proof}

\section{Proof of Theorems \ref{SQthm0}--\ref{SQthm2} \label{AppB}}
We start with some lemmas that will be helpful in proving Theorems~\ref{SQthm0}--\ref{SQthm2}. The proofs of the Lemmas are deferred to  ~\ref{appendix:C}.

\begin{lemma}
Under Conditions (C1)--(C8), we have
\[
n^{-1/2}\sum_{i=1}^n\tilde{\kappa}_{v,i}\bZ_ig_{\alpha,i}(\bgamma^\ast,\bbeta^\ast)=n^{-1/2}\sum_{i=1}^n\bPsi_i(\alpha)+o_p(1),
\]
where $\bPsi_i(\alpha)=\bfm_2(Y_i,\bZ_i,\alpha)\{1-\frac{D_i(1-V_i)}{1-\pi_i}-\frac{(1-D_i)V_i}{\pi_i}\}+\bH_2(\bX_i,\alpha)\{V_i-\pi(\bX_i)\}$, with
$\bfm_2(Y_i,\bZ_i,\alpha)=\bZ_ig_{\alpha,i}(\bgamma^\ast,\bbeta^\ast)$ and $\bH_2(\bX_i,\alpha)=\mathbb{E}[\bfm_2(Y_i,\bZ_i,\alpha)\{\frac{(1-D_i)v_i}{\pi_i^2}-\frac{D_i(1-v_i)}{(1-\pi_i)^2}\}|\bX_i]$.
\label{SQlemma1}
\end{lemma}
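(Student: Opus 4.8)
The plan is to decompose $n^{-1/2}\sum_{i=1}^n\tilde\kappa_{v,i}\bZ_i g_{\alpha,i}(\bgamma^\ast,\bbeta^\ast)$ by replacing, one at a time, the truncated nonparametric weight $\tilde\kappa_{v,i}$ by the un-truncated estimate $\hat\kappa_{v,i}$, then by a linearization in the estimated nuisance components $\hat\pi$ and $\hat v$, and finally identifying the resulting influence function with $\bPsi_i(\alpha)$. Write $\bfm_2(Y_i,\bZ_i,\alpha)=\bZ_i g_{\alpha,i}(\bgamma^\ast,\bbeta^\ast)$ for brevity and note $\mathbb{E}\{\kappa_{v,i}\bfm_2(Y_i,\bZ_i,\alpha)\}=\bzero$ by~\eqref{SQEP0} evaluated at the truth. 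First I would show the truncation is asymptotically negligible: on the event $\{\hat\kappa_{v,i}\in[c_{l,n},c_{u,n}]\}$ the two weights agree, and the contribution from the complementary event is controlled using Condition (C6) ($c_{l,n}=o(n^{-1/2})$, $1-c_{u,n}=o(n^{-1/2})$) together with the uniform consistency of $\hat\kappa_{v,i}$ (which follows from Conditions (C7)--(C8) and the Newey-type kernel rates) and the boundedness of $\bZ_i$ and $g_{\alpha,i}$ from Condition (C2); hence $n^{-1/2}\sum_i(\tilde\kappa_{v,i}-\hat\kappa_{v,i})\bfm_2(Y_i,\bZ_i,\alpha)=o_p(1)$.

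Next I would substitute the explicit form~\eqref{SQweight1}--\eqref{eq:kappatilde}, so that
\[
\hat\kappa_{v,i}-\kappa_{v,i}=-D_i(1-\hat v_i)\Big\{\tfrac{1}{1-\hat\pi_i}-\tfrac{1}{1-\pi_i}\Big\}-(1-D_i)\hat v_i\Big\{\tfrac{1}{\hat\pi_i}-\tfrac{1}{\pi_i}\Big\}-\tfrac{D_i(v_i-\hat v_i)}{1-\pi_i}-\tfrac{(1-D_i)(\hat v_i-v_i)}{\pi_i}.
\]
A first-order Taylor expansion of $x\mapsto 1/(1-x)$ and $x\mapsto 1/x$ around $\pi_i$, valid since $\pi(\bX)$ is bounded away from $0$ and $1$ by Condition (C5)(b), reduces the problem to handling the linear terms in $(\hat\pi_i-\pi_i)$ and $(\hat v_i-v_i)$; the quadratic remainders are $O_p(\|\hat\pi-\pi\|_\infty^2+\|\hat v-v\|_\infty^2)$, which is $o_p(n^{-1/2})$ after multiplying by $n^{-1/2}\sum_i\|\bfm_2(Y_i,\bZ_i,\alpha)\|$ because Condition (C8)(c) forces both nonparametric rates to be $o_p(n^{-1/4})$. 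For the surviving linear terms I would invoke the standard asymptotic linearization for Nadaraya--Watson-type estimators (as in \cite{Newey1994}): $n^{-1/2}\sum_i \bfm_2(Y_i,\bZ_i,\alpha)\,a_i(\hat v_i-v_i)$ contributes, by a projection/U-statistic argument, the term $n^{-1/2}\sum_i\{$something in $\bfm_2\}\cdot\{I(D_i=d)V_i\text{-type residual}\}$ that after conditioning collapses to the first bracket of $\bPsi_i(\alpha)$, namely $\bfm_2(Y_i,\bZ_i,\alpha)\{1-\frac{D_i(1-V_i)}{1-\pi_i}-\frac{(1-D_i)V_i}{\pi_i}\}$ minus its (zero) mean; similarly the $(\hat\pi_i-\pi_i)$ terms produce $\mathbb{E}[\bfm_2(Y_i,\bZ_i,\alpha)\{\frac{(1-D_i)v_i}{\pi_i^2}-\frac{D_i(1-v_i)}{(1-\pi_i)^2}\}\mid\bX_i]\{V_i-\pi(\bX_i)\}=\bH_2(\bX_i,\alpha)\{V_i-\pi(\bX_i)\}$, the second bracket of $\bPsi_i(\alpha)$. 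Collecting the pieces and using $\mathbb{E}\{\bPsi_i(\alpha)\}=\bzero$ gives the claim.

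The main obstacle is the rigorous treatment of the linear nuisance terms: because $\hat v_i$ and $\hat\pi_i$ are evaluated at the same observations $(Y_i,\bZ_i)$ that appear in the outer sum, the expression is a degenerate-type V-statistic and one must carefully separate the diagonal (bias) contribution, bounded via Condition (C8)(c)'s undersmoothing requirement $n\sigma_{k,n}^{2p}\to0$, from the off-diagonal (variance/projection) contribution, controlled via $n\sigma_{k,n}^{2l+2}/(\log n)^2\to\infty$ and the smoothness and bounded-support kernel conditions (C7)--(C8) together with the bounded-density Condition (C2)(b). Handling the indicator $I(Y\le\bZ^\top\bbeta)$ inside $g_{\alpha,i}$ requires only that it is uniformly bounded and that the relevant conditional densities are smooth, so no empirical-process machinery beyond a Glivenko--Cantelli/stochastic-equicontinuity argument for the kernel averages is needed; this is where most of the technical bookkeeping, deferred to~\ref{appendix:C}, resides.
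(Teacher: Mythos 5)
Your proposal follows essentially the same route as the paper's proof: remove the truncation via Condition (C6), expand $\hat\kappa_{v,i}$ linearly in $(\hat\pi_i-\pi_i)$ and $(\hat v_i-v_i)$ with the cross and quadratic remainders killed by the uniform $o_p(n^{-1/4})$ kernel rates from Conditions (C7)--(C8) (Lemma B.3 of \cite{Newey1994}), and then apply the Newey (1994, Theorem 4.2) projection argument so that the $\hat v_i$ term collapses to $\bfm_2(Y_i,\bZ_i,\alpha)\{1-\frac{D_i(1-V_i)}{1-\pi_i}-\frac{(1-D_i)V_i}{\pi_i}\}$ and the $\hat\pi_i$ term to $\bH_2(\bX_i,\alpha)\{V_i-\pi_i\}$. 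The only piece the paper treats that you omit is the separate (and easier, parametric-rate) argument for discrete components of $\bX$ allowed by Condition (C2)(a), which is minor bookkeeping rather than a gap in the main argument.
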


\begin{lemma}
Under Conditions (C1)--(C8), we have
\[
n^{-1/2}\sum_{i=1}^n\tilde{\kappa}_{v,i}\bZ_i\{\alpha-I(Y_i<\bZ_i^\top\bbeta^\ast)\}=n^{-1/2}\sum_{i=1}^n\bPhi_i(\alpha)+o_p(1),
\]
where $\bPhi_i(\alpha)=\bfm_1(Y_i,\bZ_i,\alpha)\{1-\frac{D_i(1-V_i)}{1-\pi_i}-\frac{(1-D_i)V_i}{\pi_i}\}+\bH_1(\bX_i,\alpha)\{V_i-\pi(\bX_i)\}$. Here 
$\bfm_1(Y_i,\bZ_i,\alpha)=\bZ_i\{\alpha-I(Y_i<\bZ_i^\top\bbeta^\ast)\}$ and $\bH_1(\bX_i,\alpha)=\mathbb{E}[\bfm_1(Y_i,\bZ_i,\alpha)\{\frac{(1-D_i)v_i}{\pi_i^2}-\frac{D_i(1-v_i)}{(1-\pi_i)^2}\}|\bX_i]$.
\label{SQlemma2}
\end{lemma}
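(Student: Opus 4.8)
The plan is to follow the template of Lemma~\ref{SQlemma1}: the claim is an influence-function expansion of a plug-in functional whose first stage consists of the kernel estimators $\hat\pi$ and $\hat v$, and the only properties of $\bfm_2$ used there are boundedness and measurability with respect to $(Y,\bZ)$; both hold for $\bfm_1(Y,\bZ,\alpha)=\bZ\{\alpha-I(Y<\bZ^\top\bbeta^\ast)\}$, since $\bZ$ is bounded by Condition~(C2) and the bracket lies in $[\alpha-1,\alpha]$. First I would discard the truncation in~\eqref{eq:kappatilde2}: projection onto $[c_{l,n},c_{u,n}]$ is $1$-Lipschitz and $0\le\kappa_{v,i}\le1$, so $|\tilde\kappa_{v,i}-\kappa_{v,i}|\le|\hat\kappa_{v,i}-\kappa_{v,i}|+c_{l,n}+(1-c_{u,n})$, and by Condition~(C6) the effect of the last two terms on $n^{-1/2}\sum_i\tilde\kappa_{v,i}\bfm_1(Y_i,\bZ_i,\alpha)$ is $n^{1/2}o(n^{-1/2})O_p(1)=o_p(1)$. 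It then suffices to expand $n^{-1/2}\sum_i\hat\kappa_{v,i}\bfm_1(Y_i,\bZ_i,\alpha)$ about the true weight $\kappa_{v,i}=1-\frac{D_i(1-v_i)}{1-\pi_i}-\frac{(1-D_i)v_i}{\pi_i}$ supplied by Proposition~\ref{SQprop3}.

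Next I would linearize the first stage. Treating $\kappa$ as the smooth map $(\pi,v)\mapsto 1-\frac{D(1-v)}{1-\pi}-\frac{(1-D)v}{\pi}$ and using Condition~(C5)(b) together with the uniform consistency of $\hat\pi$ to keep all arguments bounded away from $\pi\in\{0,1\}$ with probability tending to one, a second-order Taylor expansion gives
\[
\hat\kappa_{v,i}-\kappa_{v,i}=\Big(\tfrac{D_i}{1-\pi_i}-\tfrac{1-D_i}{\pi_i}\Big)(\hat v_i-v_i)+\Big(\tfrac{(1-D_i)v_i}{\pi_i^{2}}-\tfrac{D_i(1-v_i)}{(1-\pi_i)^{2}}\Big)(\hat\pi_i-\pi_i)+R_i,
\]
with $|R_i|\le C\{(\hat\pi_i-\pi_i)^2+(\hat v_i-v_i)^2\}$ uniformly in $i$. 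The standard uniform rate for Nadaraya--Watson estimators under Conditions~(C7)--(C8) is $\sup|\hat\pi-\pi|\vee\sup|\hat v-v|=o_p(n^{-1/4})$ (the bias part from $n\sigma_{k,n}^{2p}\to0$, the stochastic part from $n\sigma_{k,n}^{2l+2}/(\log n)^2\to\infty$), so $n^{-1/2}\sum_iR_i\bfm_1(Y_i,\bZ_i,\alpha)=O_p\big(n^{1/2}o_p(n^{-1/2})\big)=o_p(1)$ and only the two linear pieces survive.

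For each linear piece I would invoke the usual kernel-projection argument, as in \cite{Newey1994} and as used for the same first-stage estimators in the CQTE analysis of \cite{Abadie2002}: write $\hat\pi_i-\pi_i$ and $\hat v_i-v_i$ as their leading Nadaraya--Watson averages plus smoothing bias, interchange the two summations to obtain a second-order $U$-process, drop the bias (negligible by the undersmoothing $n\sigma_{k,n}^{2p}\to0$) and the degenerate component (negligible by $n\sigma_{k,n}^{2l+2}/(\log n)^2\to\infty$ and the uniform rates), and project the leading term. Because $\bfm_1(Y_i,\bZ_i,\alpha)\big(\tfrac{D_i}{1-\pi_i}-\tfrac{1-D_i}{\pi_i}\big)$ is a function of $(Y_i,\bZ_i)$, the $\hat v$-piece reduces to $n^{-1/2}\sum_i(V_i-v_i)\bfm_1(Y_i,\bZ_i,\alpha)\big(\tfrac{D_i}{1-\pi_i}-\tfrac{1-D_i}{\pi_i}\big)$; because $\bH_1(\bX,\alpha)=\mathbb{E}\big[\bfm_1(Y,\bZ,\alpha)\{\tfrac{(1-D)v}{\pi^2}-\tfrac{D(1-v)}{(1-\pi)^2}\}\mid\bX\big]$, the $\hat\pi$-piece reduces to $n^{-1/2}\sum_i\bH_1(\bX_i,\alpha)\{V_i-\pi(\bX_i)\}$. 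Finally the identity $\kappa_{v,i}+(V_i-v_i)\big(\tfrac{D_i}{1-\pi_i}-\tfrac{1-D_i}{\pi_i}\big)=1-\tfrac{D_i(1-V_i)}{1-\pi_i}-\tfrac{(1-D_i)V_i}{\pi_i}$ merges the plug-in term $n^{-1/2}\sum_i\kappa_{v,i}\bfm_1(Y_i,\bZ_i,\alpha)$ with the $\hat v$-correction, which yields $n^{-1/2}\sum_i\bPhi_i(\alpha)+o_p(1)$ and completes the argument. The main obstacle is the $U$-process step: showing that both the smoothing bias and the degenerate part are negligible under the stated bandwidth conditions — this is exactly where Conditions~(C7)--(C8) enter, and where the proof of Lemma~\ref{SQlemma1} already carries out the work; the truncation and Taylor steps are routine.
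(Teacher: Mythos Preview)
Your proposal is correct and follows essentially the same approach as the paper: the paper's proof of Lemma~\ref{SQlemma2} simply reruns the proof of Lemma~\ref{SQlemma1} with $\bfm_1$ in place of $\bfm_2$, using Condition~(C6) to drop the truncation, a Taylor expansion of $\hat\kappa_{v,i}$ with remainders killed by the uniform rates $\sup_i|\hat\pi_i-\pi_i|\vee\sup_i|\hat v_i-v_i|=o_p(n^{-1/4})$ from Lemma~B.3 of \cite{Newey1994}, and then Theorem~4.2 of \cite{Newey1994} for the kernel projection. The only cosmetic difference is that the paper expands in $\hat\pi$ alone while keeping $\hat v$ in the leading term and then applies Newey's result to replace $\hat v$ by $V$ in one shot, whereas you expand in both $\hat\pi$ and $\hat v$ and recombine via the algebraic identity $\kappa_{v,i}+(V_i-v_i)\{D_i/(1-\pi_i)-(1-D_i)/\pi_i\}=1-D_i(1-V_i)/(1-\pi_i)-(1-D_i)V_i/\pi_i$; the two routes are equivalent.
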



\subsection{Proof of Theorem \ref{SQthm0}}
\begin{proof}
This proof of Theorem~\ref{SQthm0} is adapted from the proof of Theorem 3.1 in \cite{Abadie2002}, which we detail here for completeness.   Let $L_n(\bzeta,\kappa)=\sum_{i=1}^n l_i(\bzeta,\kappa)$ with
\[
l_i(\bzeta,\kappa)=\kappa_{v,i}\{\rho_{\alpha}(\epsilon_{i}-n^{-1/2}\bZ_i^\top\bzeta)-\rho_{\alpha}(\epsilon_{i})\},
\]
where $\epsilon_i=Y_i-\bZ_i^\top\bbeta^\ast$ and $\rho_{\alpha}(u)=u\{\alpha-I(u\le 0)\}$.  Similarly, let  $L_n(\bzeta,\tilde{\kappa})=\sum_{i=1}^n l_i(\bzeta,\tilde{\kappa})$ with
$l_i(\bzeta,\tilde{\kappa})=\tilde{\kappa}_{v,i}\{\rho_{\alpha}(\epsilon_{i}-n^{-1/2}\bZ_i^\top\bzeta)-\rho_{\alpha}(\epsilon_{i})\}.$

The function $L_n(\bzeta,\tilde{\kappa})$ is a convex function with respect to $\bzeta$ and a minimizer takes the form $\bzeta_n=n^{1/2}(\hat{\bbeta}-\bbeta^\ast)$. Besides,
\[
\frac{\partial l_i(\bzeta,\kappa)}{\partial \bzeta}=-n^{-1/2}\bZ_i \kappa_{v,i}\{\alpha-I(\epsilon_i-n^{-1/2}\bZ_i^\top\bzeta\le 0)\},
\]
almost surely. Denote $l(\bzeta,\kappa)=\kappa_{v}\{\rho_{\alpha}(\epsilon-n^{-1/2}\bZ^\top\bzeta)-\rho_{\alpha}(\epsilon)\}$, 
where $\epsilon=Y-\bZ^\top\bbeta^\ast$. By Condition (C4) and Weierstrass domination, we have 
\[
\frac{\partial \mathbb{E}\{l(\bzeta,\kappa)\}}{\partial \bzeta}\Big|_{\bzeta=\bzero}=-n^{-1/2}\mathbb{E}[\bZ \kappa_{v}\{\alpha-I(\epsilon\le 0)\}]=0,
\]
and 
\[
\begin{split}
\frac{\partial \mathbb{E}\{l(\bzeta,\kappa)\}}{\partial \bzeta\partial \bzeta^\top}\Big|_{\bzeta=\bzero}=&-n^{-1}\mathbb{E}\{f(\bZ^\top\bbeta^\ast|\bZ,D_1>D_0)\bZ\bZ^\top|D_1>D_0\}\Pr(D_1>D_0)\\
=&n^{-1}\mathbb{E}\{I(D_1>D_0)f(\bZ^\top\bbeta^\ast|\bZ,D_1>D_0)\bZ\bZ^\top\}\\
=&n^{-1}\bJ_1.
\end{split}
\]
From Conditions (C3) and (C4), $\bJ_1$ is nonsingular. By a Taylor's  expansion, we have
\begin{equation}
\mathbb{E}\{L_n(\bzeta,\kappa)\}=\frac{1}{2}\bzeta^\top\bJ_1\bzeta+o_p(1).
\label{SQthm0e0}
\end{equation}

To prove Theorem \ref{SQthm0}, we  first show
\begin{equation}
L_n(\bzeta,\tilde{\kappa})=\mathbb{E}\{L_n(\bzeta,\kappa)\}-n^{-1/2}\sum_{i=1}^n\tilde{\kappa}_{v,i}\bZ_i^\top\bzeta\{\alpha-I(\epsilon_i\le 0)\}+o_p(1).
\label{SQthm0e1}
\end{equation}
Note that
\[
\begin{split}
L_n(\bzeta,\tilde{\kappa})=&\mathbb{E}\{L_n(\bzeta,\kappa)\}+[L_n(\bzeta,\tilde{\kappa})-\mathbb{E}\{L_n(\bzeta,\kappa)\}]\\
=&\mathbb{E}\{L_n(\bzeta,\kappa)\}-n^{-1/2}\sum_{i=1}^n\tilde{\kappa}_{v,i}\bZ_i^\top\bzeta\{\alpha-I(\epsilon_i\le 0)\}\\
&+\Big[L_n(\bzeta,\tilde{\kappa})+n^{-1/2}\sum_{i=1}^n\tilde{\kappa}_{v,i}\bZ_i^\top\bzeta\{\alpha-I(\epsilon_i\le 0)\}-\mathbb{E}\{L_n(\bzeta,\kappa)\}\Big].
\end{split}
\]
For notational convenience, let  $U_n(\bZ_i,\kappa,\bzeta)=l_i(\bzeta,\kappa)+n^{-1/2}\kappa_{v,i}\bZ_i^\top\bzeta\{\alpha-I(\epsilon_i\le 0)\}$ and let $U_n(\bZ_i,\tilde{\kappa},\bzeta)=l_i(\bzeta,\tilde{\kappa})+n^{-1/2}\tilde{\kappa}_{v,i}\bZ_i^\top\bzeta\{\alpha-I(\epsilon_i\le 0)\}$.  Then, 
\[
\begin{split}
L_n(\bzeta,\tilde{\kappa})=&\mathbb{E}\{L_n(\bzeta,\kappa)\}+[L_n(\bzeta,\tilde{\kappa})-\mathbb{E}\{L_n(\bzeta,\kappa)\}]\\
=&\mathbb{E}\{L_n(\bzeta,\kappa)\}-n^{-1/2}\sum_{i=1}^n\tilde{\kappa}_{v,i}\bZ_i^\top\bzeta\{\alpha-I(\epsilon_i\le 0)\}\\
&+\sum_{i=1}^n U_n(\bZ_i,\tilde{\kappa},\bzeta)-\sum_{i=1}^n U_n(\bZ_i,\kappa,\bzeta)\\
&+\sum_{i=1}^n U_n(\bZ_i,\kappa,\bzeta)-\mathbb{E}\{U_n(\bZ_i,\kappa,\bzeta)\}.
\end{split}
\]
The last equation holds because $\mathbb{E}[\kappa_{v,i}\bZ_i\{\alpha-I(\epsilon_i\le 0)\}]=0$.

Let $f_{\epsilon|\bZ}(\cdot)$ and $F_{\epsilon|\bZ}(\cdot)$ be the conditional density function and cumulative distribution function of $\epsilon$ given $\bZ$, respectively. Since $\epsilon=Y-\bZ^\top\bbeta^\ast$ and 
\[
|\rho_{\alpha}(\epsilon_{i}-n^{-1/2}\bZ_i^\top\bzeta)-(\epsilon_{i}-n^{-1/2}\bZ_i^\top\bzeta)\{\alpha-I(\epsilon_i\le 0)\}|\le I(|\epsilon_{i}|\le |n^{-1/2}\bZ_i^\top\bzeta|)\cdot|n^{-1/2}\bZ_i^\top\bzeta|,
\]
we have 
\begin{equation}
\begin{split}
&\mathbb{E}[n\cdot|\rho_{\alpha}(\epsilon_{i}-n^{-1/2}\bZ_i^\top\bzeta)-(\epsilon_{i}-n^{-1/2}\bZ_i^\top\bzeta)\{\alpha-I(\epsilon_i\le 0)\}|]\\
\le& \mathbb{E}[I(|\epsilon_{i}|\le |n^{-1/2}\bZ_i^\top\bzeta|)\cdot|n^{1/2}\bZ_i^\top\bzeta|]\\
=&\mathbb{E}\Big[\frac{F_{\epsilon|\bZ}(|n^{-1/2}\bZ_i^\top\bzeta|)-F_{\epsilon|\bZ}(-|n^{-1/2}\bZ_i^\top\bzeta|)}{n^{-1/2}}|\bZ_i^\top\bzeta|\Big]\\
\to& 2\mathbb{E}\Big[f_{\epsilon|\bZ}(0)|\bZ_i^\top\bzeta|^2\Big]<\infty.
\end{split}
\label{SQthm0e2}
\end{equation}

Then,
\begin{equation}
\begin{split}
&|\sum_{i=1}^n\bU_n(\bZ_i,\tilde{\kappa},\bzeta)-U_n(\bZ_i,\kappa,\bzeta)|\\
\le&\sum_{i=1}^n|\tilde{\kappa}_{v,i}-\kappa_{v,i}|\cdot|\{\rho_{\alpha}(\epsilon_{i}-n^{-1/2}\bZ_i^\top\bzeta)-(\epsilon_{i}-n^{-1/2}\bZ_i^\top\bzeta)\{\alpha-I(\epsilon_i\le 0)\}\}|\\
\le&\sup_i|\tilde{\kappa}_{v,i}-\kappa_{v,i}|\frac{1}{n}\sum_{i=1}^n n\cdot|\rho_{\alpha}(\epsilon_{i}-n^{-1/2}\bZ_i^\top\bzeta)-(\epsilon_{i}-n^{-1/2}\bZ_i^\top\bzeta)\{\alpha-I(\epsilon_i\le 0)\}|.
\end{split}
\label{SQthm0e2_1}
\end{equation}

Besides, from Conditions (C7) and (C8), Lemma B.3 of \cite{Newey1994} implies $\sup_i|\hat{v}_i-v|=o_p(n^{-1/4})$ and $\sup_i|\hat{\pi}_i-\pi|=o_p(n^{-1/4})$. Then by Condition (C5), we have $\sup_i|\hat\kappa_{v,i}-\kappa_{v,i}|=o_p(n^{-1/4})$. From Condition (C6), we can obtain 
\begin{equation}
\sup_i|\tilde{\kappa}_{v,i}-\kappa_{v,i}|=o_p(n^{-1/4}).
\label{SQthm1:e1}
\end{equation}

Coupled with Equation~\eqref{SQthm0e2} and $\sup_i|\tilde{\kappa}_{v,i}-\kappa_{v,i}|=o_p(n^{-1/4})$, Equation \eqref{SQthm0e2_1} implies that
\begin{equation}
|\sum_{i=1}^n\bU_n(\bZ_i,\tilde{\kappa},\bzeta)-U_n(\bZ_i,\kappa,\bzeta)|=o_p(1).
\label{SQthm0e3}
\end{equation}
Besides,
\begin{equation}
\begin{split}
\mathbb{E}\Big[\Big(\sum_{i=1}^n\bU_n(\bZ_i,\kappa,\bzeta)-\mathbb{E}\{U_n(\bZ_i,\kappa,\bzeta)\}\Big)^2\Big]&\le \sum_{i=1}^n \mathbb{E}\{U_n(\bZ_i,\kappa,\bzeta)^2\}\\
&\le \mathbb{E}\{I(|\epsilon_{i}|\le |n^{-1/2}\bZ_i^\top\bzeta|)\cdot|\bZ_i^\top\bzeta|^2\}\\
&\to 0,
\end{split}
\label{SQthm0e4}
\end{equation}
where the first inequality follows from the cancellation of cross-product terms.
Based on Equations~\eqref{SQthm0e3} and \eqref{SQthm0e4}, we have
\[L_n(\bzeta,\tilde{\kappa})=\mathbb{E}\{L_n(\bzeta,\kappa)\}-n^{-1/2}\sum_{i=1}^n\tilde{\kappa}_{v,i}\bZ_i^\top\bzeta\{\alpha-I(\epsilon_i\le 0)\}+o_p(1).
\]

Next, we will show $\bzeta_n=n^{1/2}(\hat\bbeta-\bbeta^\ast)\to_d N(\bzero, \bJ_1^{-1}\bOmega_1\bJ_1^{-1})$. From Equations~\eqref{SQthm0e0} and \eqref{SQthm0e1}, for a given $\bzeta$,
\[
\begin{split}
L_n(\bzeta,\tilde{\kappa})=&\mathbb{E}\{L_n(\bzeta,\kappa)\}-n^{-1/2}\sum_{i=1}^n\tilde{\kappa}_{v,i}\bZ_i^\top\bzeta\{\alpha-I(\epsilon_i\le 0)\}+o_p(1)\\
=&\frac{1}{2}\bzeta^\top\bJ_1\bzeta-n^{-1/2}\sum_{i=1}^n\tilde{\kappa}_{v,i}\bZ_i^\top\bzeta\{\alpha-I(\epsilon_i\le 0)\}+o_p(1).
\end{split}
\]
Since $L_n(\bzeta,\tilde{\kappa})+n^{-1/2}\sum_{i=1}^n\tilde{\kappa}_{v,i}\bZ_i^\top\bzeta\{\alpha-I(\epsilon_i\le 0)\}$ is convex in $\bzeta$. From Pollard's convexity lemma \citep{Pollard1991}, for any compact subset $\mathcal{T}\subset\mathbb{R}^{l+1}$,  we have
\begin{equation}
\sup_{\bzeta\in\mathcal{T}}|L_n(\bzeta,\tilde{\kappa})+n^{-1/2}\sum_{i=1}^n\tilde{\kappa}_{v,i}\bZ_i^\top\bzeta\{\alpha-I(\epsilon_i\le 0)\}-\frac{1}{2}\bzeta^\top\bJ_1\bzeta|=o_p(1).
\label{SQthm0e5}
\end{equation}
Let $\bfeta_n=\bJ_1^{-1}n^{-1/2}\sum_{i=1}^n\tilde{\kappa}_{v,i}\bZ_i\{\alpha-I(\epsilon_i\le 0)\}$. Note that
\[
\frac{1}{2}(\bzeta-\bfeta_n)^\top\bJ_1(\bzeta-\bfeta_n)=\frac{1}{2}\bzeta^\top\bJ_1\bzeta-n^{-1/2}\sum_{i=1}^n\tilde{\kappa}_{v,i}\bZ_i^\top\bzeta\{\alpha-I(\epsilon_i\le 0)\}+\frac{1}{2}\bfeta_n^\top\bJ_1\bfeta_n.
\]
From Equation~\eqref{SQthm0e5}, for any compact subset $\mathcal{T}\subset\mathbb{R}^{l+1}$, we have
\[
\sup_{\bzeta\in\mathcal{T}}|L_n(\bzeta,\tilde{\kappa})-\frac{1}{2}(\bzeta-\bfeta_n)^\top\bJ_1(\bzeta-\bfeta_n)+\frac{1}{2}\bfeta_n^\top\bJ_1\bfeta_n|=o_p(1).
\]

Finally, by an application of  Lemma 3 in \cite{Buchinsky1998}, we have $\bzeta_n=\bfeta_n+o_p(1)$. From Lemma \ref{SQlemma2}, we have 
\[
n^{1/2}(\hat\bbeta-\bbeta^\ast)\to_d N(\bzero, \bJ_1^{-1}\bOmega_1\bJ_1^{-1})
\]
as desired.
\end{proof}

\subsection{Proof of Theorem \ref{SQthm1}}
\begin{proof}

From Theorem \ref{SQthm0}, we have $\|\hat\bbeta-\bbeta^\ast\|\to_p 0$. By the definition of $\bgamma^\ast$, $\mathbb{E}\{I(D_1>D_0)\bZ g_{\alpha}(\bZ,\bgamma^\ast,\bbeta^\ast)\}=\bzero$. For any $\bb\in\mathcal{B}$ and $\bb\neq \bgamma^\ast$, we have
\[
\begin{split}
&\mathbb{E}\{I(D_1>D_0)\bZ g_{\alpha}(\bZ,\bb,\bbeta^\ast)\}-\mathbb{E}\{I(D_1>D_0)\bZ g_{\alpha}(\bZ,\bgamma^\ast,\bbeta^\ast)\}\\
=&\mathbb{E}[I(D_1>D_0)\bZ(\bZ^\top\bb-\frac{1}{\alpha}(Y-\bZ^\top\bbeta^\ast)I(Y\le \bZ^\top\bbeta^\ast)-\bZ^\top\bbeta^\ast)\\
&-I(D_1>D_0)\bZ(\bZ^\top\bgamma^\ast-\frac{1}{\alpha}(Y-\bZ^\top\bbeta^\ast)I(Y\le \bZ^\top\bbeta^\ast)-\bZ^\top\bbeta^\ast)]\\
=&\mathbb{E}\{I(D_1>D_0)\bZ\bZ^\top\}(\bb-\bgamma^\ast).
\end{split}
\]
Thus, $\mathbb{E}\{I(D_1>D_0)\bZ g_{\alpha}(\bZ,\bb,\bbeta^\ast)\}=\mathbb{E}\{I(D_1>D_0)\bZ g_{\alpha}(\bZ,\bgamma^\ast,\bbeta^\ast)\}=\bzero$ if and only if $\mathbb{E}\{I(D_1>D_0)\bZ\bZ^\top\}(\bb-\bgamma^\ast)=\bzero$. By Condition (C3), i.e., $\mathbb{E}\{I(D_1>D_0)\bZ\bZ^\top\}$ is of full rank, we have $\mathbb{E}\{I(D_1>D_0)\bZ g_{\alpha}(\bZ,\bb,\bbeta^\ast)\}=\bzero$ if and only if $\bb=\bgamma^\ast$. Thus, $\bgamma^\ast$ is the unique solution of  $\mathbb{E}\{I(D_1>D_0)\bZ g_{\alpha}(\bZ,\bb,\bbeta^\ast)\}=\bzero$.

Next, we will show that $n^{-1}\sum_{i=1}^n\tilde{\kappa}_{v,i}\bZ_i g_{\alpha,i}(\bb,\hat{\bbeta})$ uniformly converges to $\mathbb{E}\{\kappa_v\bZ g_{\alpha}(\bZ,\bb,\bbeta^\ast)\}$ for $\bb\in\mathcal{B}$ and $\hat{\bbeta}$ given that $\|\hat{\bbeta}-{\bbeta}^{\ast}\|=o_p(1)$. 
By adding and subtracting terms, it can be shown that 
\begin{equation}
\begin{split}
&n^{-1}\sum_{i=1}^n\tilde{\kappa}_{v,i}\bZ_i g_{\alpha,i}(\bb,\hat{\bbeta})\\
=&\mathbb{E}\{\kappa_v\bZ g_{\alpha}(\bZ,\bb,\bbeta^\ast)\}+n^{-1}\sum_{i=1}^n{\kappa}_{v,i}\bZ_i g_{\alpha,i}(\bb,\bbeta^\ast)-\mathbb{E}\{\kappa_v\bZ g_{\alpha}(\bZ,\bb,\bbeta^\ast)\}\\
&+n^{-1}\sum_{i=1}^n(\tilde{\kappa}_{v,i}-\kappa_{v,i})\bZ_i g_{\alpha,i}(\bb,\hat{\bbeta})+n^{-1}\sum_{i=1}^n{\kappa}_{v,i}\bZ_i \{g_{\alpha,i}(\bb,\hat{\bbeta})-g_{\alpha,i}(\bb,\bbeta^\ast)\}.\\
\end{split}
\label{SQthm1eadd1}
\end{equation}

Let $\mathbf{I}_n(\bb)=n^{-1}\sum_{i=1}^n{\kappa}_{v,i}\bZ_i g_{\alpha,i}(\bb,\bbeta^\ast)-\mathbb{E}\{\kappa_v\bZ g_{\alpha}(\bZ,\bb,\bbeta^\ast)\}$, $\mathbf{II}_n(\bb)=n^{-1}\sum_{i=1}^n(\tilde{\kappa}_{v,i}-\kappa_{v,i})\bZ_i\cdot \\g_{\alpha,i}(\bb,\hat{\bbeta})$, and $\mathbf{III}_n(\bb)=n^{-1}\sum_{i=1}^n{\kappa}_{v,i}\bZ_i \{g_{\alpha,i}(\bb,\hat{\bbeta})-g_{\alpha,i}(\bb,\bbeta^\ast)\}$. To prove that $n^{-1}\sum_{i=1}\tilde{\kappa}_{v,i}\bZ_i\cdot \\g_{\alpha,i}(\bb,\hat{\bbeta})$ uniformly converges to $\mathbb{E}\{\kappa_v\bZ g_{\alpha}(\bZ,\bb,\bbeta^\ast)\}$ for $\bb\in\mathcal{B}$, it is sufficient to prove 
$\sup_{\bb\in\mathcal{B}}\|\mathbf{I}_n(\bb)\|=o_p(1)$, $\sup_{\bb\in\mathcal{B}}\|\mathbf{II}_n(\bb)\|=o_p(1)$,  and $\sup_{\bb\in\mathcal{B}}\|\mathbf{III}_n(\bb)\|=o_p(1)$.

Firstly, by the boundedness $\bb\in\mathcal{B}$, $Y_i$ and $\bZ_i$, $\|\mathbf{I}_n(\bb)\|\to_p 0$ pointwisely by weak law of large numbers. Since $\mathcal{B}$ is a compact space, we have $\|\mathbf{I}_n(\bb)\|$ uniformly converge to $0$ for $\bb\in\mathcal{B}$. Besides, from the compactness of $\bZ_i$ and $Y_i$, Equation \eqref{SQthm1:e1} implies that 
$\sup_{\bb\in\mathcal{B}}\|\mathbf{II}_n(\bb)\|=o_p(n^{-1/4})$. To prove $\sup_{\bb\in\mathcal{B}}\|\mathbf{III}_n(\bb)\|=o_p(1)$, by the fact that $|I(Y\le u)(Y-u)-I(Y\le v)(Y-v)|\le |v-u|$, we have 
\[
\begin{split}
&|g_{\alpha}(\bZ,\bb,\hat{\bbeta})-g_{\alpha}(\bZ,\bb,\bbeta^\ast)|\\
=&\left|\frac{1}{\alpha}(Y-\bZ^\top\bbeta^\ast)I\{Y\le \bZ^\top\bbeta^\ast\}+\bZ^\top\bbeta^\ast-\frac{1}{\alpha}(Y-\bZ^\top\hat{\bbeta})I(Y\le \bZ^\top\hat{\bbeta})-\bZ^\top\hat{\bbeta}\right|\\
\le&\frac{1}{\alpha}\left|(Y-\bZ^\top\bbeta^\ast)I(Y\le \bZ^\top\bbeta^\ast)-(Y-\bZ^\top\hat{\bbeta})I(Y\le \bZ^\top\hat{\bbeta})\right|+\left|\bZ^\top\bbeta^\ast-\bZ^\top\hat{\bbeta}\right|\\
\le &\left(\frac{1}{\alpha}+1\right)\left|\bZ^\top\bbeta^\ast-\bZ^\top\hat{\bbeta}\right|.
\end{split}
\]

By the assumption that $\|\bZ\|$ is bounded, there exist a constant $C$, such that  $\sup_{\bZ}|\bZ^\top\bbeta^\ast-\bZ^\top\hat{\bbeta}|\le C\|\hat{\bbeta}-{\bbeta}^{\ast}\|$.  Therefore, 
\begin{equation}
\sup_{\bZ,\bb\in\mathcal{B}}|g_{\alpha}(\bZ,\bb,\hat{\bbeta})-g_{\alpha}(\bZ,\bb,\bbeta^\ast)|\le C\left(\frac{1}{\alpha}+1\right)\|\hat{\bbeta}-{\bbeta}^\ast\|.
\label{SQthm1:e2}
\end{equation}

By $\|\hat{\bbeta}-{\bbeta}^\ast\|=o_p(1)$ from Theorem \ref{SQthm0}, we have
\begin{equation}
\begin{split}
\sup_{\bb\in\mathcal{B}}\|\mathbf{III}_n(\bb)\|\le& \sup_i|\kappa_{v,i}|\sup_i\|\bZ_i\|\sup_{\bZ,\bb\in\mathcal{B}}|g_{\alpha}(\bZ,\bb,\hat{\bbeta})-g_{\alpha}(\bZ,\bb,\bbeta^\ast)|\\
\le& \sup_i|\kappa_{v,i}|\sup_i\|\bZ_i\|C\left(\frac{1}{\alpha}+1\right)\|\hat{\bbeta}-{\bbeta}^\ast\|=o_p(1).
\end{split}
\label{SQthm1:e3}
\end{equation}

Therefore, we have $n^{-1}\sum_{i=1}\tilde{\kappa}_{v,i}\bZ_i^\top g_{\alpha,i}(\bb,\hat{\bbeta})$ uniformly converges to $\mathbb{E}\{\kappa_v\bZ g_{\alpha}(\bZ,\bb,\bbeta^\ast)\}$ for $\bb\in\mathcal{B}$. By the definition of $\hat{\bgamma}$ and ${\bgamma}^{\ast}$, we have
\[
\begin{split}
\bzero=&n^{-1}\sum_{i=1}^n\tilde{\kappa}_{v,i}\bZ_i g_{\alpha,i}(\hat{\bgamma},\hat{\bbeta})\\
=&\mathbb{E}\{\kappa_v\bZ g_{\alpha}(\bZ,\hat{\bgamma},\bbeta^\ast)\}-\mathbb{E}\{\kappa_v\bZ g_{\alpha}(\bZ,\bgamma^\ast,\bbeta^\ast)\}+o_p(1)\\
=&\mathbb{E}\{\kappa_v\bZ\bZ^\top\}(\hat{\bgamma}-\bgamma^\ast)+o_p(1).
\end{split}
\]

From Condition (C3), we have $\|\hat{\bgamma}-\bgamma^\ast\|=o_p(1)$.
\end{proof}
\subsection{Proof of Theorem \ref{SQthm2}}

\begin{proof}
In the first step, we will show that $\|\hat{\bgamma}-\bgamma^\ast\|=o_p(n^{-1/4})$ given that $\|\hat{\bbeta}-\bbeta^\ast\|=o_p(n^{-1/4})$. To prove it, we firstly show that $\sup_{\bb}\|n^{-1}\sum_{i=1}^n\tilde{\kappa}_{v,i}\bZ_i^\top g_{\alpha,i}(\bb,\hat{\bbeta})-\mathbb{E}\{\kappa_v\bZ g_{\alpha}(\bZ,\bb,\bbeta^\ast)\}\|=o_p(n^{-1/4})$. From  Equation~\eqref{SQthm1eadd1}, we have 
\[
n^{-1}\sum_{i=1}^n\tilde{\kappa}_{v,i}\bZ_i^\top g_{\alpha,i}(\bb,\hat{\bbeta})-\mathbb{E}\{\kappa_v\bZ g_{\alpha}(\bZ,\bb,\bbeta^\ast)\}=\mathbf{I}_n(\bb)+\mathbf{II}_n(\bb)+\mathbf{III}_n(\bb),
\]
where $\mathbf{I}_n(\bb)$, $\mathbf{II}_n(\bb)$ and $\mathbf{III}_n(\bb)$ are defined in the proof of Theorem \ref{SQthm1}. Thus, we only need to show that $\sup_{\bb\in\mathcal{B}}\|\mathbf{I}_n(\bb)\|=o_p(n^{-1/4})$, $\sup_{\bb\in\mathcal{B}}\|\mathbf{II}_n(\bb)\|=o_p(n^{-1/4})$, and $\sup_{\bb\in\mathcal{B}}\|\mathbf{III}_n(\bb)\|=o_p(n^{-1/4})$. In the proof of Theorem~\ref{SQthm1}, we have shown that $\sup_{\bb\in\mathcal{B}}\|\mathbf{II}_n(\bb)\|=o_p(n^{-1/4})$. From Equations~\eqref{SQthm1:e2} and \eqref{SQthm1:e3} in the proof of Theorem~\ref{SQthm1}, we have
\begin{equation}
\begin{split}
\sup_{\bb\in\mathcal{B}}\|\mathbf{III}_n(\bb)\|\le& \sup_i|\kappa_{v,i}|\sup_i\|\bZ_i\|\sup_{\bb\in\mathcal{B}}|g_{\alpha}(\bZ,\bb,\hat{\bbeta})-g_{\alpha}(\bZ,\bb,\bbeta^\ast)|\\
&\le \sup_i|\kappa_{v,i}|\sup_i\|\bZ_i\|C\left(\frac{1}{\alpha}+1\right)\|\bbeta^\ast-\hat{\bbeta}\|.
\end{split}
\end{equation}
Given that $\|\hat{\bbeta}-\bbeta^\ast\|=o_p(n^{-1/4})$, we have $\sup_{\bb\in\mathcal{B}}\|\mathbf{III}_n(\bb)\|=o_p(n^{-1/4})$. 

It remains to show that $\sup_{\bb\in\mathcal{B}}\|\mathbf{I}_n(\bb)\|=o_p(n^{-1/4})$. To prove it, we firstly show that $\{\kappa_v\bZ g_{\alpha}(\bZ,\bb,\bbeta^\ast):\bb\in\mathcal{B}\}$ is a Donsker class. Given the boundedness of $\bZ$ and $\kappa_{v}$, and the compactness of $\mathcal{B}$, $\left\|\kappa_v\bZ g_{\alpha}(\bZ,\bb,\bbeta^\ast)\right\|$ is uniformly bounded on $\mathcal{B}$. Moreover, for $\bb,\tilde{\bb}\in\mathcal{B}$, given the boundedness of $\bZ$, there exist constant $C_1$, such that 
\[
\|\kappa_v\bZ g_{\alpha}(\bZ,\bb,\bbeta^\ast)-\kappa_v\bZ g_{\alpha}(\bZ,\tilde{\bb},\bbeta^\ast)\|=\|\kappa_v\bZ\bZ^\top(\bb-\tilde{\bb})\|\le C_1\|\bb-\tilde{\bb}\|.
\]
Hence, $\kappa_v\bZ g_{\alpha}(\bZ,\bb,\bbeta^\ast)$ is a type IV (type 4) function defined in \cite{Andrews1994} with $p=2$ and satisfies Ossiander's $L_2$ entropy condition in \cite{Andrews1994}, which implies that  $\{\kappa_v\bZ g_{\alpha}(\bZ,\bb,\bbeta^\ast):\bb\in\mathcal{B}\}$
is a Donsker class. Thus, from the law of iterated logarithm for empirical process on Vapnik-Cervonenkis (VC) class \citep{Alexander1989}, we have $\sup_{\bb\in\mathcal{B}}\|\sum_{i=1}^n\kappa_{v,i}\bZ_i g_{\alpha,i}(\bb,\bbeta^\ast)-\mathbb{E}\{\kappa_v\bZ g_{\alpha}(\bZ,\bb,\bbeta^\ast)\}\|=O(n^{1/2}(\log\log n))$ almost surely, which implies that $\sup_{\bb\in\mathcal{B}}\|\mathbf{I}_n(\bb)\|=o_p(n^{-1/4})$. 
Coupled with Equation~\eqref{SQthm1eadd1}, we have 
\begin{equation}
\sup_{\bb\in\mathcal{B}}\|n^{1/4}[n^{-1}\sum_{i=1}^n\tilde{\kappa}_{v,i}\bZ_i^\top g_{\alpha,i}(\bb,\hat{\bbeta})-\mathbb{E}\{\kappa_v\bZ g_{\alpha}(\bZ,\bb,\bbeta^\ast)\}]\|=o_p(1).
\label{SQthm2e00}
\end{equation}

Since $n^{-1}\sum_{i=1}^n\tilde{\kappa}_{v,i}\bZ_i g_{\alpha,i}(\hat{\bgamma},\hat{\bbeta})=\bzero$ and $\mathbb{E}\{\kappa_v\bZ g_{\alpha}(\bZ,\bgamma^\ast,\bbeta^\ast)\}=\bzero$, we have 
\[
\begin{split}
\bzero=&n^{-1}\sum_{i=1}^n\tilde{\kappa}_{v,i}\bZ_i g_{\alpha,i}(\hat{\bgamma},\hat{\bbeta})\\
=&\mathbb{E}\{\kappa_v\bZ g_{\alpha}(\bZ,\hat{\bgamma},\bbeta^\ast)\}-\mathbb{E}\{\kappa_v\bZ g_{\alpha}(\bZ,\bgamma^\ast,\bbeta^\ast)\}+\Big[n^{-1}\sum_{i=1}^n\tilde{\kappa}_{v,i}\bZ_i^\top g_{\alpha,i}(\hat{\bgamma},\hat{\bbeta})-\mathbb{E}\{\kappa_v\bZ g_{\alpha}(\bZ,\hat{\bgamma},\bbeta^\ast)\}\Big]\\
=&\mathbb{E}\{\kappa_v\bZ\bZ^\top\}(\hat{\bgamma}-\bgamma^\ast)+\Big[n^{-1}\sum_{i=1}^n\tilde{\kappa}_{v,i}\bZ_i^\top g_{\alpha,i}(\hat{\bgamma},\hat{\bbeta})-\mathbb{E}\{\kappa_v\bZ g_{\alpha}(\bZ,\hat{\bgamma},\bbeta^\ast)\}\Big].
\end{split}
\]
From Condition (C3) and Equation~\eqref{SQthm2e00}, we have $\|\hat{\bgamma}-\bgamma^\ast\|=o_p(n^{-1/4})$.

Next step, we will prove that \[
n^{1/2}\mathbb{E}(\kappa_{v}\bZ\bZ^\top)(\hat{\bgamma}-\bgamma^\ast)=-n^{-1/2}\sum_{i=1}^n\tilde{\kappa}_{v,i}\bZ_i g_{\alpha,i}(\bgamma^\ast,\bbeta^\ast)+o_p(1).
\]
Note that 
\begin{equation}
\begin{split}
\bzero=&n^{-1/2}\sum_{i=1}^n\tilde{\kappa}_{v,i}\bZ_i g_{\alpha,i}(\hat{\bgamma},\hat{\bbeta})\\
=&n^{-1/2}\sum_{i=1}^n\tilde{\kappa}_{v,i}\bZ_i g_{\alpha,i}(\bgamma^\ast,\bbeta^\ast)\\
&+n^{-1/2}\sum_{i=1}^n(\tilde{\kappa}_{v,i}-\kappa_{v,i})\bZ_i\{g_{\alpha,i}(\hat{\bgamma},\hat{\bbeta})-g_{\alpha,i}(\bgamma^\ast,\bbeta^\ast)\}\\
&+n^{1/2}\mathbb{E}[\kappa_{v}\bZ \{g_{\alpha}(\bZ,\hat{\bgamma},\hat{\bbeta})-g_{\alpha}(\bZ,\bgamma^\ast,\bbeta^\ast)\}]\\
&+ G_n(\hat{\bgamma},\hat{\bbeta})-G_n(\bgamma^\ast,\bbeta^\ast)\\
=&n^{-1/2}\sum_{i=1}^n\tilde{\kappa}_{v,i}\bZ_i g_{\alpha,i}(\bgamma^\ast,\bbeta^\ast)\\
&+n^{-1/2}\sum_{i=1}^n(\tilde{\kappa}_{v,i}-\kappa_{v,i})\bZ_i \{g_{\alpha,i}(\hat{\bgamma},\hat{\bbeta})-g_{\alpha,i}(\bgamma^\ast,\bbeta^\ast)\}\\
&+n^{1/2}\mathbb{E}[\kappa_{v}\bZ \{g_{\alpha}(\bZ,\bgamma^\ast,\hat{\bbeta})-g_{\alpha}(\bZ,\bgamma^\ast,\bbeta^\ast)\}]\\
&+n^{1/2}\mathbb{E}(\kappa_{v}\bZ\bZ^\top)(\hat{\bgamma}-\bgamma^\ast) \\
&+ G_n(\hat{\bgamma},\hat{\bbeta})-G_n(\bgamma^\ast,\bbeta^\ast),
\end{split}
\label{SQthm2:eq0}
\end{equation}
where $G_n(\bb_1,\bb_2)=n^{-1/2}\sum_{i=1}^n{\kappa}_{v,i}\bZ_i g_{\alpha,i}(\bb_1,\bb_2)-n^{1/2}\mathbb{E}\{\kappa_{v}\bZ g_{\alpha}(\bZ,\bb_1,\bb_2)\}$.

Let   $\mathcal{T}_{n,1}=n^{-1/2}\sum_{i=1}^n(\tilde{\kappa}_{v,i}-\kappa_{v,i})\bZ_i\{g_{\alpha,i}(\hat{\bgamma},\hat{\bbeta})-g_{\alpha,i}(\bgamma^\ast,\bbeta^\ast)\}$, $\mathcal{T}_{n,2}=n^{1/2}\mathbb{E}[\kappa_{v}\bZ \{g_{\alpha}(\bZ,\bgamma^\ast,\hat{\bbeta})-g_{\alpha}(\bZ,\bgamma^\ast,\bbeta^\ast)\}]$, and $\mathcal{T}_{n,3}=G_n(\hat{\bgamma},\hat{\bbeta})-G_n(\bgamma^\ast,\bbeta^\ast)$. In the following, we will show that $\|\mathcal{T}_{n,i}\|=o_p(1)$ for $i=\{1,2,3\}$.

We start with showing $\|\mathcal{T}_{n,1}\|=o_p(1)$. Note that
\begin{equation}
\begin{split}
\|\mathcal{T}_{n,1}\|=&\|n^{-1/2}\sum_{i=1}^n(\tilde{\kappa}_{v,i}-\kappa_{v,i})\bZ_i\{g_{\alpha,i}(\hat{\bgamma},\hat{\bbeta})-g_{\alpha,i}(\bgamma^\ast,\hat{\bbeta})+g_{\alpha,i}(\bgamma^\ast,\hat{\bbeta})-g_{\alpha,i}(\bgamma^\ast,\bbeta^\ast)\}\|\\
\le& \|n^{-1/2}\sum_{i=1}^n(\tilde{\kappa}_{v,i}-\kappa_{v,i})\bZ_i\{g_{\alpha,i}(\hat{\bgamma},\hat{\bbeta})-g_{\alpha,i}(\bgamma^\ast,\hat{\bbeta})\}\|\\
&+\|n^{-1/2}\sum_{i=1}^n(\tilde{\kappa}_{v,i}-\kappa_{v,i})\bZ_i\{g_{\alpha,i}(\bgamma^\ast,\hat{\bbeta})-g_{\alpha,i}(\bgamma^\ast,\bbeta^\ast)\}\|.
\end{split}
\label{SQthm2:eq1_1}
\end{equation}

Moreover, we have
\begin{equation}
\begin{split}
&n^{-1/2}\sum_{i=1}^n(\tilde{\kappa}_{v,i}-\kappa_{v,i})\bZ_i\{g_{\alpha,i}(\hat{\bgamma},\hat{\bbeta})-g_{\alpha,i}(\bgamma^\ast,\hat{\bbeta})\}\\
=&n^{-1/2}\sum_{i=1}^n(\tilde{\kappa}_{v,i}-\kappa_{v,i})\bZ_i\Big[\{\bZ_i^\top\hat{\bgamma}-\frac{1}{\alpha}(Y_i-\bZ_i^\top\hat{\bbeta})I(Y_i\le \bZ_i^\top\hat{\bbeta})-\bZ_i^\top\hat{\bbeta}\}\\
&-\{\bZ_i^\top\bgamma^\ast-\frac{1}{\alpha}(Y_i-\bZ_i^\top\hat{\bbeta})I(Y_i\le \bZ_i^\top\hat{\bbeta})-\bZ_i^\top\hat{\bbeta}\}\Big]\\
=&n^{-1/2}\sum_{i=1}^n(\tilde{\kappa}_{v,i}-\kappa_{v,i})\bZ_i\bZ_i^\top(\hat{\bgamma}-\bgamma^\ast).
\end{split}
\label{SQthm2:eq1_2}
\end{equation}

Also, based on Equation ~\eqref{SQthm1:e2}, we have 
\begin{equation}
\begin{split}
&\|n^{-1/2}\sum_{i=1}^n(\tilde{\kappa}_{v,i}-\kappa_{v,i})\bZ_i\{g_{\alpha,i}(\bgamma^\ast,\hat{\bbeta})-g_{\alpha,i}(\bgamma^\ast,\bbeta^\ast)\}\|\\
\le& n^{-1/2}\sum_{i=1}^n|\tilde{\kappa}_{v,i}-\kappa_{v,i}|\|\bZ_i\| \sup_{\bZ,\bb\in\mathcal{B}}|g_{\alpha}(\bZ,\bb,\hat{\bbeta})-g_{\alpha}(\bZ,\bb,\bbeta^\ast)|\\
\le& n^{-1/2}\sum_{i=1}^n|\tilde{\kappa}_{v,i}-\kappa_{v,i}|\|\bZ_i\| C\left(\frac{1}{\alpha}+1\right)\|\hat{\bbeta}-{\bbeta}^\ast\|.
\end{split}
\label{SQthm2:eq1_3}
\end{equation}

Coupled with the Equations~\eqref{SQthm2:eq1_1}, \eqref{SQthm2:eq1_2} and ~\eqref{SQthm2:eq1_3},  we have
\begin{equation}
\begin{split}
\|\mathcal{T}_{n,1}\|\le& \|n^{-1/2}\sum_{i=1}^n(\tilde{\kappa}_{v,i}-\kappa_{v,i})\bZ_i\bZ_i^\top(\hat{\bgamma}-\bgamma^\ast)\|+n^{-1/2}\sum_{i=1}^n|\tilde{\kappa}_{v,i}-\kappa_{v,i}|\|\bZ_i\|\left(\frac{1}{\alpha}+1\right)C\|\hat{\bbeta}-\bbeta^\ast\|.
\end{split}
\label{SQthm2:eq1}
\end{equation}

Since  $\sup_{i}|\tilde{\kappa}_{v,i}-\kappa_{v,i}|=o_p(n^{-1/4})$, $\|\hat{\bbeta}-\bbeta^\ast\|=o_p(n^{-1/4})$, and
$\|\hat{\bgamma}-\bgamma^\ast\|=o_p(n^{-1/4})$, \eqref{SQthm2:eq1} implies that $\|\mathcal{T}_{n,1}\|=o_p(1)$.

Next, we will show that  $\|\mathcal{T}_{n,2}\|=o_p(1)$. Let $\bigtriangledown$ denotes vector differential operator. By a Taylor's expansion, we have
\begin{equation}
\begin{split}
&\|n^{1/2}\mathbb{E}[\kappa_{v}\bZ \{g_{\alpha}(\bZ,\bgamma^\ast,\hat{\bbeta})-g_{\alpha}(\bZ,\bgamma^\ast,\bbeta^\ast)\}]\|\\
&=\|n^{1/2}\mathbb{E}\{\kappa_{v}\bZ\bigtriangledown_{\bv}g_{\alpha}(\bZ,\bgamma^\ast,\bv)|_{\bv=\bbeta^\ast}(\hat{\bbeta}-\bbeta^\ast)\}\|+n^{1/2}O_p(\|\hat{\bbeta}-\bbeta^\ast\|^2).
\end{split}
\label{SQthm2:eq2}
\end{equation}

From the definition of $\bbeta^\ast$, we have
\[
\begin{split}
&\mathbb{E}\{\kappa_{v}\bZ\bigtriangledown_{\bv}g_{\alpha}(\bZ,\bgamma^\ast,\bv)|_{\bv=\bbeta^\ast}\}\\
=&\Pr(D_1>D_0)\mathbb{E}[\bZ\bigtriangledown_{\bv}g_{\alpha}(\bZ,\bgamma^\ast,\bv)|_{\bv=\bbeta^\ast}|D_1>D_0]\\
=&\Pr(D_1>D_0)\mathbb{E}\left[\bZ\left\{\frac{1}{\alpha}F_{Y}(\bZ^\top\bbeta^\ast|\bZ,D_1>D_0)-1\right\}\bZ^\top|D_1>D_0\right]\\
=&\bzero.
\end{split}
\]

Thus, by~\eqref{SQthm2:eq2}, we have $\|\mathcal{T}_{n,2}\|\le n^{1/2}O_p(\|\hat{\bbeta}-\bbeta^\ast\|^2)=o_p(1)$ by the assumption that $\|\hat{\bbeta}-\bbeta^\ast\|=o_p(n^{-1/4})$.

To prove $\|\mathcal{T}_{n,3}\|=o_p(1)$, we adopt a similar argument in the proof of Lemma 3 in \cite{Barendse2020}. Specifically, 
for any $\bb_1\in\mathcal{B}$ and $\bb_2\in\mathcal{K}$, $\|\kappa_{v}\bZ g_{\alpha}(\bZ,\bb_1,\bb_2)\|$ is uniformly bounded over $\mathcal{B}\times\mathcal{K}$ based on the boundedness of $\kappa_v$ and $\bZ$, and the compactness of $\mathcal{B}$ and $\mathcal{K}$. Moreover, for any $\bb_1,\tilde{\bb}_1\in\mathcal{B}$ and $\bb_2,\tilde{\bb}_2\in\mathcal{K}$, we have 
\begin{equation}
\begin{split}
&\|\kappa_{v}\bZ g_{\alpha}(\bZ,\bb_1,\bb_2)-\kappa_{v}\bZ g_{\alpha}(\bZ,\tilde{\bb}_1,\tilde{\bb}_2)\|\\
\le&\|\kappa_{v}\bZ g_{\alpha}(\bZ,\bb_1,\bb_2)-\kappa_{v}\bZ g_{\alpha}(\bZ,\tilde{\bb}_1,\bb_2)\|+\|\kappa_{v}\bZ g_{\alpha}(\bZ,\tilde{\bb}_1,\bb_2)-\kappa_{v}\bZ g_{\alpha}(\bZ,\tilde{\bb}_1,\tilde{\bb}_2)\|\\
\le& \sup\|\bZ\|\|\bb_1-\tilde{\bb}_1\|+\sup\|\bZ\| \left(\frac{1}{\alpha}+1\right)C\|\bb_2-\tilde{\bb}_2\|.
\end{split}
\label{SQthm2:eq2_1}
\end{equation}
The last equation follows from Equation \eqref{SQthm1:e2}. Hence, $\kappa_{v}\bZ g_{\alpha}(\bZ,\bb_1,\bb_2)$ is type IV (type 4) function in \cite{Andrews1994} and satisfies Ossiander's $L_2$ entropy condition. Following the similar argument in the proof of Lemma 3 in \cite{Barendse2020}, \cite{Doukhan1995} and Equation~\eqref{SQthm2:eq2_1} imply that stochastic equicontinuity holds for $\kappa_{v}\bZ g_{\alpha}(\bZ,\bb_1,\bb_2)$, which implies $G_n(\hat{\bgamma},\hat{\bbeta})-G_n(\bgamma^\ast,\bbeta^\ast)=o_p(1)$.

From equation \eqref{SQthm2:eq0}, we have the following equation 
\[
n^{1/2}\mathbb{E}(\kappa_{v}\bZ\bZ^\top)(\hat{\bgamma}-\bgamma^\ast)=-n^{-1/2}\sum_{i=1}^n\tilde{\kappa}_{v,i}\bZ_i g_{\alpha,i}(\bgamma^\ast,\bbeta^\ast)+o_p(1).
\]

Coupled with Lemma \ref{SQlemma1}, we have 
\begin{equation}
n^{1/2}\mathbb{E}\{I(D_1>D_0)\bZ\bZ^\top\}(\hat{\bgamma}-\bgamma^\ast)=n^{-1/2}\sum_{i=1}^n\bPsi_i(\alpha)+o_p(1).
\label{SQthm2:eq4}
\end{equation}
Therefore, $n^{1/2}(\hat{\bgamma}-\bgamma^\ast)\to_d N(0,\bJ_2^{-1}\bOmega_2\bJ_2^{-1})$, where $\bJ_2=\mathbb{E}\{I(D_1>D_0)\bZ\bZ^\top\}$ and $\bOmega_2=\mathbb{E}\{\bPsi(\alpha)\bPsi(\alpha)^\top\}$.
\end{proof}

\section{Theoretical Justification for the Standard Nonparametric Bootstrap Inference
Procedure}

 In this section,  we will prove that $n^{1/2}\{\bbeta^{\dagger}-\hat{\bbeta}\}$ given the observed data is asymptotically equivalent to $n^{1/2}\{\hat{\bbeta}-\bbeta^\ast\}$ given Conditions (C1)-(C8), where $\bbeta^{\dagger}$ is bootstrapped counterpart of $\hat{\bbeta}$. We will also prove that $n^{1/2}\{\bgamma^{\dagger}-\hat{\bgamma}\}$ given the observed data is asymptotically equivalent to $n^{1/2}\{\hat{\bgamma}-\bgamma^\ast\}$ given Conditions (C1)-(C8), where $\bgamma^{\dagger}$ is bootstrapped counterpart of $\hat{\bgamma}$.

We firstly state  data notations that used in this justification. Let $\bU=\{D,\bX,V,Y\}$, and $\bU_i=\{D_i,\bX_i,V_i,Y_i\}$. Denote $\delta_{\bU_i}$ as the probability measure that assigns a mass of 1 to $\bU_i$. The empirical
measure based on the observed data is given by $\mathbb{P}_n=n^{-1}\sum_{i=1}^n\delta_{\bU_i}$. The bootstrap empirical
measure corresponding to the standard nonparametric bootstrap inference procedure
is given by $\hat{\mathbb{P}}_n=n^{-1}\sum_{i=1}^n W_{n,i}\delta_{\bU_i}$, where $\vec{\bW}_n= (W_{n1},\dots,W_{nn})^{\top}$ is a multinomial
vector with probabilities $(1/n,\dots,1/n)^{\top}$ and index $n$, and $\vec{\bW}_n$ is independent of the observed data $\{\bU_i\}_{i=1}^n$. Let $\mathbb{G}_n=n^{-1/2}\sum_{i=1}^n(\delta_{\bU_i}-\mathbb{P})$, and $\hat{\mathbb{G}}_n=n^{-1/2}\sum_{i=1}^nW_{n,i}(\delta_{\bU_i}-\mathbb{P}_n)$, where $\mathbb{P}$ is the probability measure that governs $\bU$.


Throughout this section, for notational convenience, we include a superscript for the bootstrapped counterparts of the estimators, estimating functions, and other quantities of interest.  
Specifically, let
\[
\pi^{\dagger}(\bx)=\frac{\sum_{i=1}^nW_{n,i}\mathcal{K}^{\ast}_{\sigma_1}(\bx-\bX_i)V_i}{\sum_{i=1}^nW_{n,i}\mathcal{K}^{\ast}_{\sigma_1}(\bx-\bX_i)},\ \
v_{d}^{\dagger}(y,\bx)=\frac{\sum_{i=1}^n W_{n,i}I(D_i=d)\mathcal{K}^{\ast\ast}_{\sigma_2}\{(y,\bx^\top)^\top-(Y_i,\bX_i^\top)^\top\}V_i}{\sum_{i=1}^n W_{n,i}I(D_i=d)\mathcal{K}^{\ast\ast}_{\sigma_2}\{(y,\bx^\top)^\top-(Y_i,\bX_i^\top)^\top\}},
\]
$v^{\dagger}(Y_i,\bZ_i)=I(D_i=1)v_{1}^{\dagger}(Y_i,\bX_i)+I(D_i=0)v_{0}^{\dagger}(Y_i,\bX_i)$,  $\kappa_{v,i}^{\dagger}=1-\frac{D_i\{1- v^{\dagger}(Y_i,\bZ_i)\}}{1-\pi^{\dagger}(\bX_i)}-\frac{(1-D_i)v^{\dagger}(Y_i,\bZ_i)}{\pi^{\dagger}(\bX_i)}$, and $\tilde{\kappa}_{v,i}^{\dagger}=\min\{\max(\kappa_{v,i}^{\dagger},c_{l,n}),c_{u,n}\}$. Let $\pi_i^{\dagger}$ and $v_i^{\dagger}$ be shorthand notation for $\pi^{\dagger}(\bX_i)$ and
$v^{\dagger}(Y_i,\bZ_i)$ respectively. 

Based on the results on the error rate of empirical bootstrap approximation rate for kernel estimator \citep{Neumann1998,Chernozhukov2014,Chernozhukov2016}, under Condition (C8), we have $\sup_i|\pi_i^{\dagger}-\pi_i|=o_p(n^{-1/4})$, $\sup_i|v_i^{\dagger}-v_i|=o_p(n^{-1/4})$ and $\sup_i|\kappa_{v,i}^{\dagger}-\kappa_{v,i}|=o_p(n^{-1/4})$. Under Condition (C6), we have $\sup_i|\tilde{\kappa}_{v,i}^{\dagger}-\kappa_{v,i}|=o_p(n^{-1/4})$.

We start with some lemmas that will be helpful in justifying the standard non-parametric bootstrap inference procedure. The proofs of the Lemmas are deferred to  ~\ref{appendix:C}.

\begin{lemma}
Under Conditions (C1)--(C8), we have
\[
n^{-1/2}\sum_{i=1}^nW_{n,i}\tilde{\kappa}_{v,i}^{\dagger}\bZ_ig_{\alpha,i}(\bgamma^\ast,\bbeta^\ast)=n^{-1/2}\sum_{i=1}^nW_{n,i}\bPsi_i(\alpha)+o_p(1),
\]
where $\bPsi_i(\alpha)=\bfm_2(Y_i,\bZ_i,\alpha)\{1-\frac{D_i(1-V_i)}{1-\pi_i}-\frac{(1-D_i)V_i}{\pi_i}\}+\bH_2(\bX_i,\alpha)\{V_i-\pi(\bX_i)\}$, with
$\bfm_2(Y_i,\bZ_i,\alpha)=\bZ_ig_{\alpha,i}(\bgamma^\ast,\bbeta^\ast)$ and $\bH_2(\bX_i,\alpha)=\mathbb{E}[\bfm_2(Y_i,\bZ_i,\alpha)\{\frac{(1-D_i)v_i}{\pi_i^2}-\frac{D_i(1-v_i)}{(1-\pi_i)^2}\}|\bX_i]$.
\label{SQlemma3}
\end{lemma}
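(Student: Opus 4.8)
\noindent\textbf{Proof plan for Lemma~\ref{SQlemma3}.}
The plan is to reproduce, with the multinomial weights $W_{n,i}$ carried through every step and the kernel estimators replaced by their bootstrap counterparts, the influence-function expansion behind Lemma~\ref{SQlemma1}. Write $\bfm_{2,i}=\bZ_i g_{\alpha,i}(\bgamma^\ast,\bbeta^\ast)$. The first step is to linearize the weight: using the closed form for $\kappa_v$ in Proposition~\ref{SQprop3} and a first-order Taylor expansion of $(v,\pi)\mapsto(1-v)/(1-\pi)$ and $(v,\pi)\mapsto v/\pi$ about $(v_i,\pi_i)$, write
\[
\kappa_{v,i}^{\dagger}-\kappa_{v,i}=a_i\,(v_i^{\dagger}-v_i)+b_i\,(\pi_i^{\dagger}-\pi_i)+r_i,\qquad
a_i=\frac{D_i}{1-\pi_i}-\frac{1-D_i}{\pi_i},\quad
b_i=\frac{(1-D_i)v_i}{\pi_i^{2}}-\frac{D_i(1-v_i)}{(1-\pi_i)^{2}},
\]
where the quadratic remainder satisfies $\sup_i|r_i|=o_p(n^{-1/2})$ by Condition~(C5) together with the uniform bootstrap rates $\sup_i|v_i^{\dagger}-v_i|=o_p(n^{-1/4})$ and $\sup_i|\pi_i^{\dagger}-\pi_i|=o_p(n^{-1/4})$ recorded in the text. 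Since $\bfm_{2,i}$ is bounded by Condition~(C2) and $n^{-1}\sum_i W_{n,i}=1$, the $r_i$ contribution to $n^{-1/2}\sum_i W_{n,i}\bfm_{2,i}(\kappa_{v,i}^{\dagger}-\kappa_{v,i})$ is $o_p(1)$, and Condition~(C6) lets us pass from $\kappa_{v,i}^{\dagger}$ to the truncated $\tilde\kappa_{v,i}^{\dagger}$ at no extra cost.

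The second step is to linearize the bootstrap kernel estimators. Dividing the $W_{n,\cdot}$-weighted kernel sums through by their (uniformly consistent) denominators gives, uniformly in $i$,
\[
\pi_i^{\dagger}-\pi_i=\frac1n\sum_{j=1}^{n}W_{n,j}\,\frac{\mathcal{K}^{\ast}_{\sigma_1}(\bX_i-\bX_j)}{f_{\bX}(\bX_i)}\{V_j-\pi(\bX_j)\}+o_p(n^{-1/2}),
\]
with an analogous expansion for $v_i^{\dagger}-v_i$ in which the smoothing, with kernel $\mathcal{K}^{\ast\ast}_{\sigma_2}$, runs over $(Y_i,\bX_i^\top)^\top$ within the stratum $D_j=D_i$. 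The bias terms are $o(n^{-1/2})$ uniformly by the undersmoothing and smoothness imposed in Conditions~(C7)--(C8), and the validity of these expansions when the weights $W_{n,j}$ are attached is precisely the content of the bootstrap error-rate results for kernel estimators cited above \citep{Neumann1998,Chernozhukov2014,Chernozhukov2016}.

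The third step is to substitute these expansions, interchange the order of summation in the resulting double sums, discard the $O(1/n)$ diagonal terms, and replace the inner ($i$-indexed) $W_{n,i}$-weighted kernel averages by their conditional-expectation limits via a conditional uniform law of large numbers. Because conditioning on $(Y_j,\bX_j,D_j)$ determines $\bfm_{2,j}$ and $a_j$, the $a_i(v_i^{\dagger}-v_i)$ part collapses to $n^{-1/2}\sum_j W_{n,j}a_j\bfm_{2,j}\{V_j-v_j\}$; and since $\bH_2(\bX,\alpha)=\mathbb{E}[\bfm_2(Y,\bZ,\alpha)\{\tfrac{(1-D)v}{\pi^{2}}-\tfrac{D(1-v)}{(1-\pi)^{2}}\}\mid\bX]$, the $b_i(\pi_i^{\dagger}-\pi_i)$ part collapses to $n^{-1/2}\sum_j W_{n,j}\bH_2(\bX_j,\alpha)\{V_j-\pi(\bX_j)\}$. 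Adding back the plain bootstrap average $n^{-1/2}\sum_j W_{n,j}\kappa_{v,j}\bfm_{2,j}$ and invoking the algebraic identity $\kappa_{v,j}+a_j\{V_j-v_j\}=1-\frac{D_j(1-V_j)}{1-\pi_j}-\frac{(1-D_j)V_j}{\pi_j}$ recovers $n^{-1/2}\sum_j W_{n,j}\bPsi_j(\alpha)$, which is the claim. The errors produced by these replacements are controlled via stochastic equicontinuity of the relevant bootstrap empirical processes, which holds because the classes $\{\bfm_2(Y,\bZ,\alpha)\}$, $\{\bH_2(\bX,\alpha)\}$ and the kernel-indexed classes are uniformly bounded and of type IV in the sense of \citet{Andrews1994}, so that the conditional multiplier central limit theorem transfers the $o_p(1)$ bounds available for $\mathbb{G}_n$ to the bootstrap process $\hat{\mathbb{G}}_n$.

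The main obstacle is the third step: unlike in Lemma~\ref{SQlemma1}, the double sums now carry the non-i.i.d.\ multinomial weights $W_{n,j}$ on \emph{both} the inner and the outer index, so the ordinary $U$-statistic projection and Donsker arguments must be replaced by their bootstrap analogues---Poissonization or the exchangeable-weights device for the degenerate remainder terms, together with a conditional uniform law of large numbers for the bootstrap-weighted kernel regressions. By comparison, the bias control in the second step is routine given the undersmoothing in Condition~(C8), and the first step is purely algebraic once the uniform rate $\sup_i|\kappa_{v,i}^{\dagger}-\kappa_{v,i}|=o_p(n^{-1/4})$ is available.
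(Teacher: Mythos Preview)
Your proposal is correct and follows essentially the same strategy as the paper: Taylor-expand the bootstrap weight $\kappa_{v,i}^{\dagger}$ about $(v_i,\pi_i)$, bound the quadratic remainders via the uniform $o_p(n^{-1/4})$ rates, and then replace the kernel pieces by their influence functions. The paper's decomposition is organized slightly differently---rather than linearizing symmetrically in $(v,\pi)$ as you do, it first replaces $\pi_i^{\dagger}$ by $\pi_i$ in the denominators while keeping $v_i^{\dagger}$ intact (so the $v$-part is handled in one shot via equation~\eqref{SQlemma3:eq3}), and isolates the $\pi$-linearization separately (equation~\eqref{SQlemma3:eq2}); it then simply invokes the technique of \cite{Newey1994} for both pieces rather than writing out the kernel expansion and double-sum projection that you sketch in your second and third steps. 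Your explicit route is equivalent but, as you correctly flag, forces you to confront the $W_{n,i}W_{n,j}$ dependence head-on, whereas the paper sidesteps this by appealing directly to the Newey machinery (adapted to the bootstrap setting) without spelling out the details.
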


\begin{lemma}
Under Conditions (C1)--(C8), we have
\[
n^{-1/2}\sum_{i=1}^nW_{n,i}\tilde{\kappa}_{v,i}^{\dagger}\bZ_i\{\alpha-I(Y_i<\bZ_i^\top\bbeta^\ast)\}=n^{-1/2}\sum_{i=1}^nW_{n,i}\bPhi_i(\alpha)+o_p(1),
\]
where $\bPhi_i(\alpha)=\bfm_1(Y_i,\bZ_i,\alpha)\{1-\frac{D_i(1-V_i)}{1-\pi_i}-\frac{(1-D_i)V_i}{\pi_i}\}+\bH_1(\bX_i,\alpha)\{V_i-\pi(\bX_i)\}$. Here 
$\bfm_1(Y_i,\bZ_i,\alpha)=\bZ_i\{\alpha-I(Y_i<\bZ_i^\top\bbeta^\ast)\}$ and $\bH_1(\bX_i,\alpha)=\mathbb{E}[\bfm_1(Y_i,\bZ_i,\alpha)\{\frac{(1-D_i)v_i}{\pi_i^2}-\frac{D_i(1-v_i)}{(1-\pi_i)^2}\}|\bX_i]$.
\label{SQlemma4}
\end{lemma}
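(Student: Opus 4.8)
The plan is to repeat the proof of Lemma~\ref{SQlemma2}, carrying the multinomial weights $W_{n,i}$ along every sum and replacing the in-sample kernel rates by the bootstrap analogues $\sup_i|\pi_i^\dagger-\pi_i|=o_p(n^{-1/4})$, $\sup_i|v_i^\dagger-v_i|=o_p(n^{-1/4})$, $\sup_i|\tilde\kappa_{v,i}^\dagger-\kappa_{v,i}|=o_p(n^{-1/4})$ recorded above. Write $\bfm_1(Y_i,\bZ_i,\alpha)=\bZ_i\{\alpha-I(Y_i<\bZ_i^\top\bbeta^\ast)\}$ and recall that $\kappa_{v,i}=1-\frac{D_i(1-v_i)}{1-\pi_i}-\frac{(1-D_i)v_i}{\pi_i}$, with $\mathbb{E}\{\kappa_{v}\bfm_1(Y,\bZ,\alpha)\}=\mathbb{E}\{I(D_1>D_0)\bfm_1(Y,\bZ,\alpha)\}=\bzero$ by the first-order condition defining $\bbeta^\ast$. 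First I would drop the truncation: since $\sup_i|\tilde\kappa_{v,i}^\dagger-\kappa_{v,i}^\dagger|=o_p(n^{-1/2})$ under (C5)--(C6), $\|\bfm_1(Y_i,\bZ_i,\alpha)\|$ is bounded, and $n^{-1}\sum_iW_{n,i}=1$, replacing $\tilde\kappa_{v,i}^\dagger$ by $\kappa_{v,i}^\dagger$ changes the left-hand side by at most $n^{1/2}\cdot o_p(n^{-1/2})=o_p(1)$.

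Next I would Taylor-expand $\kappa_{v,i}^\dagger$, viewed as a smooth function of $(v_i^\dagger,\pi_i^\dagger)$ about $(v_i,\pi_i)$:
\[
\kappa_{v,i}^\dagger-\kappa_{v,i}=A_i\,(v_i^\dagger-v_i)+B_i\,(\pi_i^\dagger-\pi_i)+R_i,\qquad A_i=\frac{D_i}{1-\pi_i}-\frac{1-D_i}{\pi_i},\quad B_i=\frac{(1-D_i)v_i}{\pi_i^2}-\frac{D_i(1-v_i)}{(1-\pi_i)^2}.
\]
By (C5) the denominators are bounded away from zero, so $\sup_i|R_i|=O_p(\sup_i|v_i^\dagger-v_i|^2+\sup_i|\pi_i^\dagger-\pi_i|^2)=o_p(n^{-1/2})$, whence $n^{-1/2}\sum_iW_{n,i}R_i\bfm_1(Y_i,\bZ_i,\alpha)=o_p(1)$ by the bound above. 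Using the algebraic identity $\kappa_{v,i}+A_i(V_i-v_i)=1-\frac{D_i(1-V_i)}{1-\pi_i}-\frac{(1-D_i)V_i}{\pi_i}$ (direct substitution), the claim reduces to the two linearization statements
\[
n^{-1/2}\sum_{i=1}^nW_{n,i}A_i\bfm_1(Y_i,\bZ_i,\alpha)(v_i^\dagger-v_i)=n^{-1/2}\sum_{i=1}^nW_{n,i}A_i\bfm_1(Y_i,\bZ_i,\alpha)(V_i-v_i)+o_p(1),
\]
\[
n^{-1/2}\sum_{i=1}^nW_{n,i}B_i\bfm_1(Y_i,\bZ_i,\alpha)(\pi_i^\dagger-\pi_i)=n^{-1/2}\sum_{i=1}^nW_{n,i}\bH_1(\bX_i,\alpha)(V_i-\pi(\bX_i))+o_p(1),
\]
which are the bootstrap analogues of the two kernel-linearization steps internal to Lemma~\ref{SQlemma2}; summing the four resulting pieces then gives $n^{-1/2}\sum_iW_{n,i}\bPhi_i(\alpha)+o_p(1)$.

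To establish the two displays I would substitute the leading Nadaraya--Watson expansions of $\pi_i^\dagger-\pi_i$ and of $v_i^\dagger-v_i$ (the latter restricted to $j$ with $D_j=D_i$), turning each left-hand side into a $W$-weighted V-statistic over pairs $(i,j)$. Conditioning on $\vec{\bW}_n$ and using the multinomial-weight moments $\mathbb{E}W_{n,i}=1$, $n^{-1}\sum_iW_{n,i}^2=O_p(1)$, $\mathrm{Cov}(W_{n,i},W_{n,j})=-n^{-1}$, the Hájek projection onto the $j$-coordinate returns $n^{-1/2}\sum_jW_{n,j}A_j\bfm_1(Y_j,\bZ_j,\alpha)(V_j-v_j)+o_p(1)$ and $n^{-1/2}\sum_jW_{n,j}\mathbb{E}\{B\bfm_1(Y,\bZ,\alpha)\mid\bX=\bX_j\}(V_j-\pi(\bX_j))+o_p(1)=n^{-1/2}\sum_jW_{n,j}\bH_1(\bX_j,\alpha)(V_j-\pi(\bX_j))+o_p(1)$ respectively, because the kernels integrate to one and $A$, $\bfm_1$, $B$ are $(Y,\bZ)$-measurable, so the kernel averaging collapses to the conditional expectation defining $\bH_1$. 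The diagonal contribution is $O_p(n^{-1/2}\sigma_{k,n}^{-(l+1)})=o_p(1)$ and the degenerate off-diagonal remainder is $o_p(1)$ in conditional second moment, both by the lower bandwidth bound $n\sigma_{k,n}^{2l+2}/(\log n)^2\to\infty$ in (C8); the projection bias created by the nonzero bandwidth is negligible by the vanishing-moment conditions (C7) together with $p\ge s_k$ and $n\sigma_{k,n}^{2p}\to0$ in (C8), exactly as in the corresponding step of Lemma~\ref{SQlemma2}. The remainders from linearizing the kernel estimators and from the data-dependent denominators ($f_{\bX}$, $\pi_i^\dagger$) are absorbed by bootstrap uniform laws of large numbers and stochastic equicontinuity, using $\hat{\mathbb{G}}_n\rightsquigarrow\mathbb{G}$ over the relevant Donsker classes (exchangeable-bootstrap theory) and the bootstrap kernel-approximation bounds of \cite{Neumann1998,Chernozhukov2014,Chernozhukov2016}.

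The main obstacle is this last step: the $W$-weighted V-statistic is not a standard one, since the pair coefficients $W_{n,i}W_{n,j}$ are neither independent across pairs nor bounded, so the Hájek-projection bookkeeping must be redone carefully, simultaneously controlling the projection bias from the nonzero bandwidth, the diagonal term involving $\mathcal{K}_{\sigma_k}(\bzero)$, and the degenerate off-diagonal remainder, each against the narrow bandwidth window of (C8) and the multinomial-weight moment bounds. Everything else is a transcription of the proof of Lemma~\ref{SQlemma2}, with $\mathbb{G}_n$ replaced throughout by $\hat{\mathbb{G}}_n$.
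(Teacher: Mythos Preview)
Your proposal is correct and follows essentially the same route as the paper. The paper's proof of Lemma~\ref{SQlemma4} is simply ``identical to the proof of Lemma~\ref{SQlemma3} using $\{\alpha-I(Y_i<\bZ_i^\top\bbeta^\ast)\}$ in place of $g_{\alpha,i}(\bgamma^\ast,\bbeta^\ast)$,'' and Lemma~\ref{SQlemma3} in turn reproduces the algebraic expansion of Lemma~\ref{SQlemma1} with the weights $W_{n,i}$ carried through: drop the truncation, expand $\kappa_{v,i}^\dagger$ about $(v_i,\pi_i)$, show the second-order remainders $\bR_{n,1}^\dagger,\bR_{n,2}^\dagger$ are $o_p(1)$ via $\sup_i|\pi_i^\dagger-\pi_i|,\sup_i|v_i^\dagger-v_i|=o_p(n^{-1/4})$, and then establish the two linearization displays you wrote (which are exactly the $\bfm_1$-analogues of the paper's equations for Lemma~\ref{SQlemma3}). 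Where the paper simply writes ``combining the technique of \cite{Newey1994}'' for the bootstrap-weighted linearization, you spell out the underlying $W$-weighted V-statistic H\'ajek projection; this is a more explicit rendering of the same step rather than a different argument.
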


\subsection{Theoretical Justification for the Standard Nonparametric Bootstrap Inference
Procedure for $\bbeta^\ast$ \label{bootbeta}}

We will prove that  the conditional distribution of  $n^{1/2}\{\bbeta^{\dagger}-\hat{\bbeta}\}$ given the observed data is asymptotically equivalent to $n^{1/2}\{\hat{\bbeta}-\bbeta^\ast\}$ to justify  the standard non-parametric bootstrap inference procedure for $\bbeta^\ast$. To prove this result, We only need to prove that the conditional distribution of  $n^{1/2}\{\bbeta^{\dagger}-\hat{\bbeta}\}$ given the observed data is asymptotically equivalent to $N(\bzero, \bJ_1^{-1}\bOmega_1\bJ_1^{-1})$. This result can be proved by  adapting the similar arguments in the proof of Theorem \ref{SQthm0} and the techniques in the justification of the standard nonparametric bootstrap inference procedure in \cite{Wei2021}. In detail, let
 $L_n^{\dagger}(\bzeta,\kappa)=\sum_{i=1}^n W_{n,i}l_i(\bzeta,\kappa)$ with
\[
l_i(\bzeta,\kappa)=\kappa_{v,i}\{\rho_{\alpha}(\epsilon_{i}-n^{-1/2}\bZ_i^\top\bzeta)-\rho_{\alpha}(\epsilon_{i})\},
\]
where $\epsilon_i=Y_i-\bZ_i^\top\bbeta^\ast$ and $\rho_{\alpha}(u)=u\{\alpha-I(u\le 0)\}$.  Similarly, let  $L_n^{\dagger}(\bzeta,\tilde{\kappa}^{\dagger})=\sum_{i=1}^n W_{n,i}l_i(\bzeta,\tilde{\kappa}^{\dagger})$ with
$l_i(\bzeta,\tilde{\kappa}^{\dagger})=\tilde{\kappa}_{v,i}^{\dagger}\{\rho_{\alpha}(\epsilon_{i}-n^{-1/2}\bZ_i^\top\bzeta)-\rho_{\alpha}(\epsilon_{i})\}.$

The function $L_n(\bzeta,\tilde{\kappa})$ is a convex function with respect to $\bzeta$ and a minimizer takes the form $\bzeta_n^{\dagger}=n^{1/2}(\bbeta^{\dagger}-\bbeta^\ast)$. 

Note that  $\vec{\bW}_n$ is independent of the observed data, following the similar argument in the proof of Equation \eqref{SQthm0e0} in Theorem \ref{SQthm0}, by a Taylor's  expansion, we have
\begin{equation}
\mathbb{E}\{L_n^{\dagger}(\bzeta,\kappa)\}=\frac{1}{2}\bzeta^\top\bJ_1\bzeta+o_p(1).
\label{bootthm0e0}
\end{equation}

Similarly, by adapting the similar argument in the proof of Equation \eqref{SQthm0e0} in Theorem \ref{SQthm0}, we can obtain 
\begin{equation}
L_n^{\dagger}(\bzeta,\tilde{\kappa}^\dagger)=\mathbb{E}\{L_n^{\dagger}(\bzeta,\kappa)\}-n^{-1/2}\sum_{i=1}^nW_{n,i}\tilde{\kappa}_{v,i}^{\dagger}\bZ_i^\top\bzeta\{\alpha-I(\epsilon_i\le 0)\}+o_p(1).
\label{bootthm0e1}
\end{equation}

From Equations~\eqref{bootthm0e0} and \eqref{bootthm0e1}, for a given $\bzeta$,
\[
\begin{split}
L_n^\dagger(\bzeta,\tilde{\kappa}^{\dagger})=&\mathbb{E}\{L_n^{\dagger}(\bzeta,\kappa)\}-n^{-1/2}\sum_{i=1}^nW_{n,i}\tilde{\kappa}_{v,i}^{\dagger}\bZ_i^\top\bzeta\{\alpha-I(\epsilon_i\le 0)\}+o_p(1)\\
=&\frac{1}{2}\bzeta^\top\bJ_1\bzeta-n^{-1/2}\sum_{i=1}^nW_{n,i}\tilde{\kappa}_{v,i}^{\dagger}\bZ_i^\top\bzeta\{\alpha-I(\epsilon_i\le 0)\}+o_p(1).
\end{split}
\]
Since $L_n^{\dagger}(\bzeta,\tilde{\kappa}^{\dagger})+n^{-1/2}\sum_{i=1}^nW_{n,i}\tilde{\kappa}_{v,i}^{\dagger}\bZ_i^\top\bzeta\{\alpha-I(\epsilon_i\le 0)\}$ is convex in $\bzeta$. From Pollard's convexity lemma \citep{Pollard1991}, for any compact subset $\mathcal{T}\subset\mathbb{R}^{l+1}$,  we have
\begin{equation}
\sup_{\bzeta\in\mathcal{T}}|L_n^{\dagger}(\bzeta,\tilde{\kappa}^{\dagger})+n^{-1/2}\sum_{i=1}^nW_{n,i}\tilde{\kappa}_{v,i}^{\dagger}\bZ_i^\top\bzeta\{\alpha-I(\epsilon_i\le 0)\}-\frac{1}{2}\bzeta^\top\bJ_1\bzeta|=o_p(1).
\label{bootthm0e5}
\end{equation}
Let $\bfeta_n^{\dagger}=\bJ_1^{-1}n^{-1/2}\sum_{i=1}^nW_{n,i}\tilde{\kappa}_{v,i}^{\dagger}\bZ_i\{\alpha-I(\epsilon_i\le 0)\}$. Note that
\[
\frac{1}{2}(\bzeta-\bfeta_n^{\dagger})^\top\bJ_1(\bzeta-\bfeta_n^{\dagger})=\frac{1}{2}\bzeta^\top\bJ_1\bzeta-n^{-1/2}\sum_{i=1}^nW_{n,i}\tilde{\kappa}_{v,i}^{\dagger}\bZ_i^\top\bzeta\{\alpha-I(\epsilon_i\le 0)\}+\frac{1}{2}(\bfeta_n^{\dagger})^\top\bJ_1\bfeta_n^{\dagger}.
\]
From Equation~\eqref{bootthm0e5}, for any compact subset $\mathcal{T}\subset\mathbb{R}^{l+1}$, we have
\[
\sup_{\bzeta\in\mathcal{T}}|L_n^{\dagger}(\bzeta,\tilde{\kappa}^{\dagger})-\frac{1}{2}(\bzeta-\bfeta_n^{\dagger})^\top\bJ_1(\bzeta-\bfeta_n^{\dagger})+\frac{1}{2}(\bfeta_n^{\dagger})^\top\bJ_1\bfeta_n^{\dagger}|=o_p(1).
\]

Finally, by an application of  Lemma 3 in \cite{Buchinsky1998}, we have $\bzeta_n^{\dagger}=\bfeta_n^{\dagger}+o_p(1)$. From Lemma \ref{SQlemma4},  we have 
\begin{equation}
n^{1/2}(\bbeta^{\dagger}-\bbeta^\ast)=n^{-1/2}\sum_{i=1}^nW_{n,i}\bPhi_i(\alpha)+o_p(1).
\label{bootthm0e6}
\end{equation}

Note that $n^{1/2}(\hat{\bbeta}-\bbeta^\ast)=n^{-1/2}\sum_{i=1}^n\bPhi_i(\alpha)+o_p(1)$ from the proof of Theorem \ref{SQthm0}. Coupled with Equation \eqref{bootthm0e6} we have 
\begin{equation}
n^{1/2}(\bbeta^{\dagger}-\hat{\bbeta})=n^{-1/2}\sum_{i=1}^nW_{n,i}\bPhi_i(\alpha)-n^{-1/2}\sum_{i=1}^n\bPhi_i(\alpha)+o_p(1).
\label{bootthm0e7}
\end{equation}

To prove that $n^{1/2}\{\bbeta^{\dagger}-\hat{\bbeta}\}$ given the observed data is asymptotically equivalent to $N(\bzero, \bJ_1^{-1}\bOmega_1\bJ_1^{-1})$,  we can adapt similar arguments in \cite{Wei2021}. In detail, since $\{\bPhi_i(\alpha):\alpha\in(0,1)\}$ is a Donsker class from the definition of $\bPhi_i(\alpha)$, it is implied by bootstrap consistency for Donsker classes (e.g. Theorem 2.6 of \cite{Kosorok2008}) that the difference between the conditional random law of $\hat{\mathbb{G}}_n\bPhi_i(\alpha)$ and the unconditional law of $\mathbb{G}_n\bPhi_i(\alpha)$ converges to zero almost surely. It implies that the conditional distribution of $n^{-1/2}\sum_{i=1}^nW_{n,i}\bPhi_i(\alpha)-n^{-1/2}\sum_{i=1}^n\bPhi_i(\alpha)$ given the observed data is asymptotically equivalent to the distribution of $n^{-1/2}\sum_{i=1}^n\bPhi_i(\alpha)-n^{1/2}\mathbb{E}\{\bPhi_i(\alpha)\}$. Note that $\mathbb{E}\{\bPhi_i(\alpha)\}=0$ and $\|n^{-1/2}\sum_{i=1}^n\bPhi_i(\alpha)\|=O_p(1)$, and the distribution of $n^{-1/2}\sum_{i=1}^n\bPhi_i(\alpha)-n^{1/2}\mathbb{E}\{\bPhi_i(\alpha)\}$ is asymptotically equivalent to $N(\bzero, \bJ_1^{-1}\bOmega_1\bJ_1^{-1})$ given the result of  Theorem \ref{SQthm0}. We then obtained that $n^{1/2}\{\bbeta^{\dagger}-\hat{\bbeta}\}$ given the observed data is asymptotically equivalent to $N(\bzero, \bJ_1^{-1}\bOmega_1\bJ_1^{-1})$.

\subsection{Theoretical Justification for the Standard Nonparametric Bootstrap Inference
Procedure for $\bgamma^\ast$}

We will prove that  the conditional distribution of  $n^{1/2}\{\bgamma^{\dagger}-\hat{\bgamma}\}$ given the observed data is asymptotically
equivalent to $N(\bzero, \bJ_2^{-1}\bOmega_2\bJ_2^{-1})$ to justify the  standard nonparametric bootstrap inference
procedure for $\gamma^\ast$. Its proof can be given by adapting the similar techniques in  \ref{bootbeta} and Theorem \ref{SQthm2}. 

Firstly,  we will show that 
\begin{equation}
\sup_{\bb\in\mathcal{B}}\|n^{-1}\sum_{i=1}^nW_{n,i}\tilde{\kappa}_{v,i}^{\dagger}\bZ_i^\top g_{\alpha,i}(\bb,\bbeta^{\dagger})-\mathbb{E}\{\kappa_v\bZ g_{\alpha}(\bZ,\bb,\bbeta^\ast)\}\|=o_p(n^{-1/4})
\label{bootthm2e0}
\end{equation}. 

Adapting the similar argument in Equation~\eqref{SQthm1eadd1}, we have 
\begin{equation}
\begin{split}
&n^{-1}\sum_{i=1}^nW_{n,i}\tilde{\kappa}_{v,i}^{\dagger}\bZ_i g_{\alpha,i}(\bb,\bbeta^{\dagger})\\
=&\mathbb{E}\{\kappa_v\bZ g_{\alpha}(\bZ,\bb,\bbeta^\ast)\}+n^{-1}\sum_{i=1}^nW_{n,i}{\kappa}_{v,i}\bZ_i g_{\alpha,i}(\bb,\bbeta^\ast)-\mathbb{E}\{\kappa_v\bZ g_{\alpha}(\bZ,\bb,\bbeta^\ast)\}\\
&+n^{-1}\sum_{i=1}^nW_{n,i}(\tilde{\kappa}_{v,i}^{\dagger}-\kappa_{v,i})\bZ_i g_{\alpha,i}(\bb,\bbeta^{\dagger})+n^{-1}\sum_{i=1}^nW_{n,i}{\kappa}_{v,i}\bZ_i \{g_{\alpha,i}(\bb,\bbeta^{\dagger})-g_{\alpha,i}(\bb,\bbeta^\ast)\}.\\
\end{split}
\label{bootthm1eadd1}
\end{equation}

Let $\mathbf{I}_n^{\dagger}(\bb)=n^{-1}\sum_{i=1}^nW_{n,i}{\kappa}_{v,i}\bZ_i g_{\alpha,i}(\bb,\bbeta^\ast)-\mathbb{E}\{\kappa_v\bZ g_{\alpha}(\bZ,\bb,\bbeta^\ast)\}$, $\mathbf{II}_n^{\dagger}(\bb)=n^{-1}\sum_{i=1}^nW_{n,i}(\tilde{\kappa}_{v,i}^{\dagger}-\kappa_{v,i})\bZ_i\cdot g_{\alpha,i}(\bb,\bbeta^{\dagger})$, and $\mathbf{III}_n^{\dagger}(\bb)=n^{-1}\sum_{i=1}^nW_{n,i}{\kappa}_{v,i}\bZ_i \{g_{\alpha,i}(\bb,\bbeta^{\dagger})-g_{\alpha,i}(\bb,\bbeta^\ast)\}$.  To prove Equation~\eqref{bootthm2e0}, it is sufficient to prove 
$\sup_{\bb\in\mathcal{B}}\|\mathbf{I}_n^{\dagger}(\bb)\|=o_p(n^{-1/4})$, $\sup_{\bb\in\mathcal{B}}\|\mathbf{II}_n^{\dagger}(\bb)\|=o_p(n^{-1/4})$,  and $\sup_{\bb\in\mathcal{B}}\|\mathbf{III}_n^{\dagger}(\bb)\|=o_p(n^{-1/4})$. 

Note that $\sup_{i}|\tilde{\kappa}_{v,i}^{\dagger}-\kappa_{v,i}|=o_p(n^{-1/4})$. Adapting the similar arguments in Theorem~\ref{SQthm1}, we can shown that $\sup_{\bb\in\mathcal{B}}\|\mathbf{II}_n^{\dagger}(\bb)\|=o_p(n^{-1/4})$. Following the similar arguments in Equations~\eqref{SQthm1:e2} and \eqref{SQthm1:e3} in the proof of Theorem~\ref{SQthm1}, we have
\begin{equation}
\begin{split}
\sup_{\bb\in\mathcal{B}}\|\mathbf{III}_n^{\dagger}(\bb)\|\le&\Big(\frac{1}{n}\sum_{i=1}^nW_{n,i} \Big)\sup_i|\kappa_{v,i}|\sup_i\|\bZ_i\|\sup_{\bb\in\mathcal{B}}|g_{\alpha}(\bZ,\bb,\bbeta^{\dagger})-g_{\alpha}(\bZ,\bb,\bbeta^\ast)|\\
&\le \sup_i|\kappa_{v,i}|\sup_i\|\bZ_i\|C\left(\frac{1}{\alpha}+1\right)\|\bbeta^\ast-\bbeta^{\dagger}\|.
\end{split}
\end{equation}
Given that $\|\bbeta^{\dagger}-\bbeta^\ast\|=o_p(n^{-1/4})$ from the result in \ref{bootbeta}, we have $\sup_{\bb\in\mathcal{B}}\|\mathbf{III}_n^{\dagger}(\bb)\|=o_p(n^{-1/4})$. 

To prove $\sup_{\bb\in\mathcal{B}}\|\mathbf{I}_n^{\dagger}(\bb)\|=o_p(n^{-1/4})$, we first know that  $\{\kappa_v\bZ g_{\alpha}(\bZ,\bb,\bbeta^\ast):\bb\in\mathcal{B}\}$ is a Donsker class from the proof in Theorem~\eqref{SQthm2}. Adapting the similar arguments in Theorem~\eqref{SQthm2}, we have
\begin{equation}
\sup_{b\in\mathcal{B}}\|n^{-1}\sum_{i=1}^n{\kappa}_{v,i}\bZ_i g_{\alpha,i}(\bb,\bbeta^\ast)-\mathbb{E}\{\kappa_v\bZ g_{\alpha}(\bZ,\bb,\bbeta^\ast)\}\|=o_p(n^{-1/4}).
\label{bootthm2eq01}
\end{equation}
By bootstrap consistency for Donsker class (e.g Theorem 2.6 in \cite{Kosorok2008}), we have \\$n^{1/2}\{n^{-1}\sum_{i=1}^nW_{n,i}{\kappa}_{v,i}\bZ_i g_{\alpha,i}(\bb,\bbeta^\ast)-n^{-1}\sum_{i=1}^n{\kappa}_{v,i}\bZ_i g_{\alpha,i}(\bb,\bbeta^\ast)\}$ converges weakly to a tight mean zero Gaussian process. It implies that 
\begin{equation}
\sup_{b\in\mathcal{B}}\|n^{-1}\sum_{i=1}^nW_{n,i}{\kappa}_{v,i}\bZ_i g_{\alpha,i}(\bb,\bbeta^\ast)-n^{-1}\sum_{i=1}^n{\kappa}_{v,i}\bZ_i g_{\alpha,i}(\bb,\bbeta^\ast)\|=o_p(n^{-1/4}).
\label{bootthm2eq02}
\end{equation}
Based on Equations~\eqref{bootthm2eq01} and \eqref{bootthm2eq02}, we can show that $\sup_{\bb\in\mathcal{B}}\|\mathbf{I}_n^{\dagger}(\bb)\|=o_p(n^{-1/4})$.

Since $n^{-1}\sum_{i=1}^nW_{n,i}\tilde{\kappa}_{v,i}^{\dagger}\bZ_i^\top g_{\alpha,i}(\bgamma^{\dagger},\bbeta^{\dagger})=\bzero$ and $\mathbb{E}\{\kappa_v\bZ g_{\alpha}(\bZ,\bgamma^\ast,\bbeta^\ast)\}=\bzero$, we have 
\[
\begin{split}
\bzero=&n^{-1}\sum_{i=1}^nW_{n,i}\tilde{\kappa}_{v,i}^{\dagger}\bZ_i g_{\alpha,i}(\bgamma^{\dagger},\bbeta^{\dagger})\\
=&\mathbb{E}\{\kappa_v\bZ g_{\alpha}(\bZ,\bgamma^\dagger,\bbeta^\ast)\}-\mathbb{E}\{\kappa_v\bZ g_{\alpha}(\bZ,\bgamma^\ast,\bbeta^\ast)\}+\Big[n^{-1}\sum_{i=1}^nW_{n,i}\tilde{\kappa}_{v,i}^{\dagger}\bZ_i^\top g_{\alpha,i}(\bgamma^{\dagger},\bbeta^{\dagger})-\mathbb{E}\{\kappa_v\bZ g_{\alpha}(\bZ,\bgamma^{\dagger},\bbeta^{\dagger})\}\Big]\\
=&\mathbb{E}\{\kappa_v\bZ\bZ^\top\}(\bgamma^{\dagger}-\bgamma^\ast)+\Big[n^{-1}\sum_{i=1}^nW_{n,i}\tilde{\kappa}_{v,i}\bZ_i^\top g_{\alpha,i}(\bgamma^{\dagger},\bbeta^{\dagger})-\mathbb{E}\{\kappa_v\bZ g_{\alpha}(\bZ,\bgamma^{\dagger},\bbeta^\ast)\}\Big].
\end{split}
\]
From Condition (C3) and Equation~\eqref{bootthm2e0}, we have $\|\hat{\bgamma}-\bgamma^\ast\|=o_p(n^{-1/4})$.

Next step, we will prove that \[
n^{1/2}\mathbb{E}(\kappa_{v}\bZ\bZ^\top)(\bgamma^{\dagger}-\bgamma^\ast)=-n^{-1/2}\sum_{i=1}^nW_{n,i}\tilde{\kappa}_{v,i}^{\dagger}\bZ_i g_{\alpha,i}(\bgamma^\ast,\bbeta^\ast)+o_p(1).
\]
Note that 
\begin{equation}
\begin{split}
\bzero=&n^{-1/2}\sum_{i=1}^nW_{n,i}\tilde{\kappa}_{v,i}^{\dagger}\bZ_i g_{\alpha,i}(\bgamma^{\dagger},\bbeta^{\dagger})\\
=&n^{-1/2}\sum_{i=1}^nW_{n,i}\tilde{\kappa}_{v,i}^{\dagger}\bZ_i g_{\alpha,i}(\bgamma^\ast,\bbeta^\ast)\\
&+n^{-1/2}\sum_{i=1}^nW_{n,i}(\tilde{\kappa}_{v,i}^{\dagger}-\kappa_{v,i})\bZ_i\{g_{\alpha,i}(\bgamma^{\dagger},\bbeta^{\dagger})-g_{\alpha,i}(\bgamma^\ast,\bbeta^\ast)\}\\
&+n^{1/2}\mathbb{E}[\kappa_{v}\bZ \{g_{\alpha}(\bZ,\bgamma^{\dagger},\bbeta^{\dagger})-g_{\alpha}(\bZ,\bgamma^\ast,\bbeta^\ast)\}]\\
&+ G_n^{\dagger}(\bgamma^{\dagger},\bbeta^{\dagger})-G_n^{\dagger}(\bgamma^\ast,\bbeta^\ast)\\
=&n^{-1/2}\sum_{i=1}^n W_{n,i}\tilde{\kappa}_{v,i}^{\dagger}\bZ_i g_{\alpha,i}(\bgamma^\ast,\bbeta^\ast)\\
&+n^{-1/2}\sum_{i=1}^nW_{n,i}(\tilde{\kappa}_{v,i}^{\dagger}-\kappa_{v,i})\bZ_i \{g_{\alpha,i}(\bgamma^{\dagger},\bbeta^{\dagger})-g_{\alpha,i}(\bgamma^\ast,\bbeta^\ast)\}\\
&+n^{1/2}\mathbb{E}[\kappa_{v}\bZ \{g_{\alpha}(\bZ,\bgamma^\ast,\bbeta^{\dagger})-g_{\alpha}(\bZ,\bgamma^\ast,\bbeta^\ast)\}]\\
&+n^{1/2}\mathbb{E}(\kappa_{v}\bZ\bZ^\top)(\bgamma^{\dagger}-\bgamma^\ast) \\
&+ G_n^{\dagger}(\bgamma^{\dagger},\bbeta^{\dagger})-G_n^{\dagger}(\bgamma^\ast,\bbeta^\ast),
\end{split}
\label{bootthm2:eq0}
\end{equation}
where $G_n^{\dagger}(\bb_1,\bb_2)=n^{-1/2}\sum_{i=1}^nW_{n,i}{\kappa}_{v,i}\bZ_i g_{\alpha,i}(\bb_1,\bb_2)-n^{1/2}\mathbb{E}\{\kappa_{v}\bZ g_{\alpha}(\bZ,\bb_1,\bb_2)\}$.

Let   $\mathcal{T}_{n,1}^{\dagger}=n^{-1/2}\sum_{i=1}^nW_{n,i}(\tilde{\kappa}_{v,i}^{\dagger}-\kappa_{v,i})\bZ_i\{g_{\alpha,i}(\bgamma^{\dagger},\bbeta^{\dagger})-g_{\alpha,i}(\bgamma^\ast,\bbeta^\ast)\}$, $\mathcal{T}_{n,2}^{\dagger}=n^{1/2}\mathbb{E}[\kappa_{v}\bZ \{g_{\alpha}(\bZ,\bgamma^\ast,\bbeta^{\dagger})-g_{\alpha}(\bZ,\bgamma^\ast,\bbeta^\ast)\}]$, and $\mathcal{T}_{n,3}^{\dagger}=G_n^{\dagger}(\bgamma^{\dagger},\bbeta^{\dagger})-G_n^{\dagger}(\bgamma^\ast,\bbeta^\ast)$. In the following, we will show that $\|\mathcal{T}_{n,i}^{\dagger}\|=o_p(1)$ for $i=\{1,2,3\}$.

We start with showing $\|\mathcal{T}_{n,1}^{\dagger}\|=o_p(1)$. Note that
\begin{equation}
\begin{split}
\|\mathcal{T}_{n,1}^{\dagger}\|=&\|n^{-1/2}\sum_{i=1}^nW_{n,i}(\tilde{\kappa}_{v,i}^{\dagger}-\kappa_{v,i})\bZ_i\{g_{\alpha,i}(\bgamma^{\dagger},\bbeta^{\dagger})-g_{\alpha,i}(\bgamma^\ast,\bbeta^{\dagger})+g_{\alpha,i}(\bgamma^\ast,\bbeta^{\dagger})-g_{\alpha,i}(\bgamma^\ast,\bbeta^\ast)\}\|\\
\le& \|n^{-1/2}\sum_{i=1}^nW_{n,i}(\tilde{\kappa}_{v,i}^{\dagger}-\kappa_{v,i})\bZ_i\{g_{\alpha,i}(\bgamma^{\dagger},\bbeta^{\dagger})-g_{\alpha,i}(\bgamma^\ast,\bbeta^{\dagger})\}\|\\
&+\|n^{-1/2}\sum_{i=1}^nW_{n,i}(\tilde{\kappa}_{v,i}^{\dagger}-\kappa_{v,i})\bZ_i\{g_{\alpha,i}(\bgamma^\ast,\bbeta^{\dagger})-g_{\alpha,i}(\bgamma^\ast,\bbeta^\ast)\}\|.
\end{split}
\label{bootthm2:eq1_1}
\end{equation}

Adapting the similar argument in Equations~\eqref{SQthm2:eq1_2} and \eqref{SQthm2:eq1_3}, we have
\begin{equation}
n^{-1/2}\sum_{i=1}^nW_{n,i}(\tilde{\kappa}_{v,i}^{\dagger}-\kappa_{v,i})\bZ_i\{g_{\alpha,i}(\bgamma^{\dagger},\bbeta^{\dagger})-g_{\alpha,i}(\bgamma^\ast,\bbeta^{\dagger})\}=n^{-1/2}\sum_{i=1}^nW_{n,i}(\tilde{\kappa}_{v,i}^{\dagger}-\kappa_{v,i})\bZ_i\bZ_i^\top(\bgamma^{\dagger}-\bgamma^\ast),
\label{bootthm2:eq1_2}
\end{equation}
and 
\begin{equation}
\|n^{-1/2}\sum_{i=1}^nW_{n,i}(\tilde{\kappa}_{v,i}^{\dagger}-\kappa_{v,i})\bZ_i\{g_{\alpha,i}(\bgamma^\ast,\bbeta^{\dagger})-g_{\alpha,i}(\bgamma^\ast,\bbeta^\ast)\}\|\le n^{-1/2}\sum_{i=1}^nW_{n,i}|\tilde{\kappa}_{v,i}^{\dagger}-\kappa_{v,i}|\|\bZ_i\| C\left(\frac{1}{\alpha}+1\right)\|\bbeta^{\dagger}-{\bbeta}^\ast\|.
\label{bootthm2:eq1_3}
\end{equation}

Coupled with the Equations~\eqref{bootthm2:eq1_1}, \eqref{bootthm2:eq1_2} and ~\eqref{bootthm2:eq1_3},  we have
\begin{equation}
\begin{split}
\|\mathcal{T}_{n,1}^{\dagger}\|\le& \|n^{-1/2}\sum_{i=1}^nW_{n,i}(\tilde{\kappa}_{v,i}^{\dagger}-\kappa_{v,i})\bZ_i\bZ_i^\top(\bgamma^{\dagger}-\bgamma^\ast)\|+n^{-1/2}\sum_{i=1}^nW_{n,i}|\tilde{\kappa}_{v,i}^{\dagger}-\kappa_{v,i}|\|\bZ_i\| C\left(\frac{1}{\alpha}+1\right)\|\bbeta^{\dagger}-{\bbeta}^\ast\|.
\end{split}
\label{bootthm2:eq1}
\end{equation}

Since  $\sup_{i}|\tilde{\kappa}_{v,i}^{\dagger}-\kappa_{v,i}|=o_p(n^{-1/4})$, $\|\bbeta^{\dagger}-\bbeta^\ast\|=o_p(n^{-1/4})$ from the result in  ~\ref{bootbeta}, and
$\|\bgamma^{\dagger}-\bgamma^\ast\|=o_p(n^{-1/4})$, Equation~\eqref{bootthm2:eq1} implies that $\|\mathcal{T}_{n,1}^{\dagger}\|=o_p(1)$.

Note that $\|\bbeta^{\dagger}-\bbeta^\ast\|=o_p(n^{-1/4})$. The proof of $\|\mathcal{T}_{n,2}^{\dagger}\|=o_p(1)$ is identical to the proof of $\|\mathcal{T}_{n,2}\|=o_p(1)$ using $\bbeta^{\dagger}$ in place of $\hat\bbeta$ in the proof of $\|\mathcal{T}_{n,2}\|=o_p(1)$ in Theorem~\ref{SQthm2}. Moreover, adapting the arguments for showing  $\|\mathcal{T}_{n,3}\|=o_p(1)$ in Theorem~\ref{SQthm2} with Poissonization employed to remove dependence among $W_{n,i}$ \cite{Van1997}, we can get $\|\mathcal{T}_{n,3}^{\dagger}\|=o_p(1)$.

From Equation \eqref{bootthm2:eq0}, we have the following equation 
\[
n^{1/2}\mathbb{E}(\kappa_{v}\bZ\bZ^\top)(\bgamma^{\dagger}-\bgamma^\ast)=-n^{-1/2}\sum_{i=1}^nW_{n,i}\tilde{\kappa}_{v,i}^{\dagger}\bZ_i g_{\alpha,i}(\bgamma^\ast,\bbeta^\ast)+o_p(1).
\]

Coupled with Lemma \ref{SQlemma3} and Equation~\eqref{SQthm2:eq4}, we have 
\begin{equation}
n^{1/2}\mathbb{E}\{I(D_1>D_0)\bZ\bZ^\top\}(\bgamma^{\dagger}-\hat\bgamma)=-\Big\{n^{-1/2}\sum_{i=1}^nW_{n,i}\bPsi_i(\alpha)-n^{-1/2}\sum_{i=1}^n\bPsi_i(\alpha)\Big\}+o_p(1).
\label{bootthm2:eq4}
\end{equation}

Following the similar arguments in \ref{bootbeta} and \cite{Wei2021}, since $\{\bPsi_i(\alpha):\alpha\in(0,1)\}$ is a Donsker class from the definition of $\bPsi_i(\alpha)$, by bootstrap consistency for Donsker classes (e.g. Theorem 2.6 of \cite{Kosorok2008}), it implies that the conditional distribution of $n^{-1/2}\sum_{i=1}^nW_{n,i}\bPsi_i(\alpha)-n^{-1/2}\sum_{i=1}^n\bPsi_i(\alpha)$ given the observed data is asymptotically equivalent to the distribution of\\ $n^{-1/2}\sum_{i=1}^n\bPsi_i(\alpha)-n^{1/2}\mathbb{E}\{\bPsi_i(\alpha)\}$. Note that $\mathbb{E}\{\bPsi_i(\alpha)\}=0$ and $\|n^{-1/2}\sum_{i=1}^n\bPsi_i(\alpha)\|=O_p(1)$, and the distribution of $n^{-1/2}\sum_{i=1}^n\bPsi_i(\alpha)-n^{1/2}\mathbb{E}\{\bPsi_i(\alpha)\}$ is asymptotically equivalent to $N(\bzero, \bJ_2^{-1}\bOmega_2\bJ_2^{-1})$ given the result of  Theorem \ref{SQthm2}. We then obtained that $n^{1/2}\{\bgamma^{\dagger}-\hat{\bgamma}\}$ given the observed data is asymptotically equivalent to $N(\bzero, \bJ_2^{-1}\bOmega_2\bJ_2^{-1})$.

\section{Proof of Lemmas~\ref{SQlemma1}--\ref{SQlemma4}}
\label{appendix:C}
\subsection{Proof of Lemmas~\ref{SQlemma1}--\ref{SQlemma2}}
\begin{proof}
By Condition (C6), replacing $\tilde{\kappa}_{v,i}$ with $\hat{\kappa}_{v,i}$ in $n^{-1/2}\sum_{i=1}^n\tilde{\kappa}_{v,i}\bZ_ig_{\alpha,i}(\bgamma^\ast,\bbeta^\ast)$ only leads to a difference of $o_p(1)$. Note that $\bfm_2(Y_i,\bZ_i,\alpha)=\bZ_ig_{\alpha,i}(\bgamma^\ast,\bbeta^\ast)$,  we can obtain the following equation 
\begin{equation}
\begin{split}
&n^{-1/2}\sum_{i=1}^n\tilde{\kappa}_{v,i}\bZ_ig_{\alpha,i}(\bgamma^\ast,\bbeta^\ast)\\
=&n^{-1/2}\sum_{i=1}^n\bfm_2(Y_i,\bZ_i,\alpha)\left(1-\frac{D_i(1-\hat v_i)}{1-\hat{\pi}_{i}}-\frac{(1-D_i)\hat v_i}{\hat{\pi}_{i}}\right)+o_p(1)\\
=&\frac{1}{\sqrt{n}}\sum_{i=1}^n\bfm_2(Y_i,\bZ_i,\alpha)\cdot\left(1-\frac{D_i(1-\hat v_i)}{1-\pi_{i}}-\frac{(1-D_i)\hat v_i}{\pi_{i}}\right)\\
&\quad+\frac{1}{\sqrt{n}}\sum_{i=1}^n\bfm_2(Y_i,\bZ_i,\alpha)\cdot\left(\frac{(1-D_i)\cdot v_{i}}{\pi_{i}^2}-\frac{D_i\cdot (1-v_{i})}{(1-\pi_{i})^2}\right)\cdot(\hat\pi_i-\pi_{i})\\
&\quad+\bR_{n,1}+\bR_{n,2}+o_p(1),
\end{split}
\label{SQlemma1:eq1}
\end{equation}
where 
$$\bR_{n,1}=\frac{1}{n}\sum_{i=1}^n\bfm_2(Y_i,\bZ_i,\alpha)\cdot\left\{\frac{(1-D_i)}{\pi_{i}\hat\pi_i}-\frac{D_i}{(1-\pi_{i})(1-\hat\pi_i)}\right\}\cdot n^{1/4}(\hat\pi_i-\pi_{i})\cdot n^{1/4}(\hat v_i-v_{i}),$$ and $$\bR_{n,2}=\frac{1}{n}\sum_{i=1}^n\bfm_2(Y_i,\bZ_i,\alpha)\cdot\left\{\frac{(1-D_i)\cdot v_{i}}{\pi_{i}^2\hat\pi_i}-\frac{D_i\cdot(1-v_{i})}{(1-\pi_{i})^2(1-\hat\pi_i)}\right\}\cdot n^{1/2}(\hat\pi_i-\pi_{i})^2.$$

Applying Lemma B.3 of \cite{Newey1994} with Condition (C8), we have $\sup_i|\hat\pi_i-\pi_i|=o(n^{-1/4})$ and $\sup_i|\hat v_i-v_i|=o(n^{-1/4})$. Thus,
\[
\begin{split}
\|\bR_{n,1}\|=&\Big\|\frac{1}{n}\sum_{i=1}^n\bfm_2(Y_i,\bZ_i,\alpha)\cdot\left\{\frac{(1-D_i)}{\pi_{i}\hat\pi_i}-\frac{D_i}{(1-\pi_{i})(1-\hat\pi_i)}\right\}\cdot n^{1/4}(\hat\pi_i-\pi_{i})\cdot n^{1/4}(\hat v_i-v_{i})\Big\|\\
\le&n^{1/4}\sup_i|\hat\pi_i-\pi_i|n^{1/4}\sup_i|\hat v_i-v_i|\frac{1}{n}\Big\|\sum_{i=1}^n\bfm_2(Y_i,\bZ_i,\alpha)\cdot\left\{\frac{(1-D_i)}{\pi_{i}\hat\pi_i}-\frac{D_i}{(1-\pi_{i})(1-\hat\pi_i)}\right\}\Big\|\\
&= o_p(1).
\end{split}
\]
 
Similarly, we can show that $\|\bR_{n,2}\|=o_p(1)$.

Given the fact $V$ is a binary variable,  the assumptions required by Theorem 4.2 of \cite{Newey1994} are ensured  by  Conditions (C1), (C7) and (C8). From there, we have
\begin{equation}
\begin{split}
&\frac{1}{\sqrt{n}}\sum_{i=1}^n\bfm_2(Y_i,\bZ_i,\alpha)\cdot\left\{\frac{(1-D_i)\cdot v_{i}}{\pi_{i}^2}-\frac{D_i\cdot (1-v_{i})}{(1-\pi_{i})^2}\right\}\cdot(\hat\pi_i-\pi_{i})\\
&\quad=\frac{1}{\sqrt{n}}\sum_{i=1}^n\bH_2(\bX_i,\alpha)\cdot(V_i-\pi_i)+o_{p}(1),
\label{SQlemma1:eq2}
\end{split}
\end{equation}

and 

\begin{equation}
\begin{split}
&\frac{1}{\sqrt{n}}\sum_{i=1}^n\bfm_2(Y_i,\bZ_i,\alpha)\cdot\left\{1-\frac{D_i(1-\hat v_i)}{1-\pi_{i}}-\frac{(1-D_i)\hat v_i}{\pi_{i}}\right\}\\
&\quad=\frac{1}{\sqrt{n}}\sum_{i=1}^n\bfm_2(Y_i,\bZ_i,\alpha)\cdot\left\{1-\frac{D_i(1-V_i)}{1-\pi_{i}}-\frac{(1-D_i) V_i}{\pi_{i}}\right\}+o_{p}(1),
\label{SQlemma1:eq3}
\end{split}
\end{equation}
as desired.

 In the case of discrete $\bX$ (with finitely many values),  the only difference is that  $\hat{\pi}(x)$ will be  the empirical estimate of $E(V|X=x)$, and in this case we have $\sup_i|\hat{\pi}(x)-\pi(x)|=O_p(n^{-1/2})$. Thus, $\|\bR_{n,j}\|=o_p(1),\, j=1,2$ 
 and Equation~\eqref{SQlemma1:eq3} still holds. To prove Lemma \ref{SQlemma1}, we only need to prove Equation~\eqref{SQlemma1:eq2} with the empirical estimator of 
$\pi(x)=E(V|X=x)$. Let $L$ be the size of the sample space for $\bX$, and $n_l$ be the number of subjects in cell $l$ of $\bX$, $(l=1, \cdots , L)$. Let $x_l$ be the value of  $\bX$ in cell $l$, and $\pi_0^l$ be the expected value of $V$ given $\bX = x_l$.  Then we have
\begin{equation}
\begin{split}
&\frac{1}{\sqrt{n}}\sum_{i=1}^n\bfm_2(Y_i,\bZ_i,\alpha)\cdot\left\{\frac{(1-D_i)\cdot v_{i}}{\pi_{i}^2}-\frac{D_i\cdot (1-v_{i})}{(1-\pi_{i})^2}\right\}\cdot(\hat\pi_i-\pi_{i})\\
&\quad=\sum_{l=1}^L\Big(\frac{1}{\sqrt{n}}\sum_{i_l=1}^{n_l}Z_{i_l}-\pi_0^l\Big)\cdot\Big(\frac{1}{n_l}\sum_{i_l=1}^{n_l}\bfm_2(Y_{i_l},\bZ_{i_l},\alpha)\cdot\left\{\frac{(1-D_{i_l})\cdot v_{i_l}}{(\pi_0^l)^2}-\frac{D_{i_l}\cdot (1-v_{i_l})}{(1-\pi_0^l)^2}\right\}\Big).
\label{SQlemma1:eq4}
\end{split}
\end{equation}

By Lemma 4.3 in \cite{NF1994} we have 
\begin{equation}
\frac{1}{n_l}\sum_{i_l=1}^{n_l}\bfm_2(Y_{i_l},\bZ_{i_l},\alpha)\cdot\left\{\frac{(1-D_{i_l})\cdot v_{i_l}}{(\pi_0^l)^2}-\frac{D_{i_l}\cdot (1-v_{i_l})}{(1-\pi_0^l)^2}\right\}=\bH_2(x_l,\alpha)+o_p(1).
\label{SQlemma1:eq5}
\end{equation}

Thus, Equation~\eqref{SQlemma1:eq2} follows from Equations~\eqref{SQlemma1:eq4} and ~\eqref{SQlemma1:eq5}.

The proof of Lemma \ref{SQlemma2} is similar to the proof of Lemma \ref{SQlemma1}   using $\{\alpha-I(Y_i<\bZ_i^\top\bbeta^\ast)\}$ in place of $g_{\alpha,i}(\bgamma^\ast,\bbeta^\ast)$ in the proof of Lemma \ref{SQlemma1}, and is hence omitted.

\end{proof}

\subsection{Proof of Lemmas~\ref{SQlemma3}--\ref{SQlemma4}}

\begin{proof}

The proofs of Lemmas~\ref{SQlemma3}--\ref{SQlemma4} are adapted from the proof of Lemmas  \ref{SQlemma1} and \ref{SQlemma2} and the justification of the standard nonparametric bootstrap inference procedure in \cite{Wei2021}. We will prove them in detail here for completeness.

By Condition (C6), replacing $\tilde{\kappa}_{v,i}^{\dagger}$ with ${\kappa}_{v,i}^{\dagger}$ in $n^{-1/2}\sum_{i=1}^nW_{n,i}\tilde{\kappa}_{v,i}^{\dagger}\bZ_ig_{\alpha,i}(\bgamma^\ast,\bbeta^\ast)$ only leads to a difference of $o_p(1)$. Note that $\bfm_2(Y_i,\bZ_i,\alpha)=\bZ_ig_{\alpha,i}(\bgamma^\ast,\bbeta^\ast)$,  we can obtain the following equation 
\begin{equation}
\begin{split}
&n^{-1/2}\sum_{i=1}^nW_{n,i}\tilde{\kappa}_{v,i}^{\dagger}\bZ_ig_{\alpha,i}(\bgamma^\ast,\bbeta^\ast)\\
=&n^{-1/2}\sum_{i=1}^nW_{n,i}\bfm_2(Y_i,\bZ_i,\alpha)\left(1-\frac{D_i(1-v_i^{\dagger})}{1-\pi^{\dagger}_{i}}-\frac{(1-D_i)v_i^{\dagger}}{\pi^{\dagger}_{i}}\right)+o_p(1)\\
=&\frac{1}{\sqrt{n}}\sum_{i=1}^nW_{n,i}\bfm_2(Y_i,\bZ_i,\alpha)\cdot\left(1-\frac{D_i(1- v_i^{\dagger})}{1-\pi_{i}}-\frac{(1-D_i)v_i^{\dagger}}{\pi_{i}}\right)\\
&\quad+\frac{1}{\sqrt{n}}\sum_{i=1}^nW_{n,i}\bfm_2(Y_i,\bZ_i,\alpha)\cdot\left(\frac{(1-D_i)\cdot v_{i}}{\pi_{i}^2}-\frac{D_i\cdot (1-v_{i})}{(1-\pi_{i})^2}\right)\cdot(\pi_i^\dagger-\pi_{i})\\
&\quad+\bR_{n,1}^{\dagger}+\bR_{n,2}^{\dagger}+o_p(1),
\end{split}
\label{SQlemma3:eq1}
\end{equation}
where 
$$\bR_{n,1}^{\dagger}=\frac{1}{n}\sum_{i=1}^nW_{n,i}\bfm_2(Y_i,\bZ_i,\alpha)\cdot\left\{\frac{(1-D_i)}{\pi_{i}\pi_i^{\dagger}}-\frac{D_i}{(1-\pi_{i})(1-\pi_i^{\dagger})}\right\}\cdot n^{1/4}(\pi_i^{\dagger}-\pi_{i})\cdot n^{1/4}( v_i^{\dagger}-v_{i}),$$ and $$\bR_{n,2}^{\dagger}=\frac{1}{n}\sum_{i=1}^nW_{n,i}\bfm_2(Y_i,\bZ_i,\alpha)\cdot\left\{\frac{(1-D_i)\cdot v_{i}}{\pi_{i}^2\pi_i^{\dagger}}-\frac{D_i\cdot(1-v_{i})}{(1-\pi_{i})^2(1-\pi_i^{\dagger})}\right\}\cdot n^{1/2}(\pi_i^{\dagger}-\pi_{i})^2.$$

Since we have $\sup_i|\pi_i^{\dagger}-\pi_i|=o(n^{-1/4})$ and $\sup_i|v_i^{\dagger}-v_i|=o(n^{-1/4})$, with the similar argument in Lemma \ref{SQlemma1}, we have $\|\bR_{n,1}^{\dagger}\|=o_p(1)$. and $\|\bR_{n,2}^{\dagger}\|=o_p(1)$.

Combining the technique of \cite{Newey1994}, we have 
\begin{equation}
\begin{split}
&\frac{1}{\sqrt{n}}\sum_{i=1}^n W_{n,i}\bfm_2(Y_i,\bZ_i,\alpha)\cdot\left\{\frac{(1-D_i)\cdot v_{i}}{\pi_{i}^2}-\frac{D_i\cdot (1-v_{i})}{(1-\pi_{i})^2}\right\}\cdot(\pi_i^{\dagger}-\pi_{i})\\
&\quad=\frac{1}{\sqrt{n}}\sum_{i=1}^n W_{n,i}\bH_2(\bX_i,\alpha)\cdot(V_i-\pi_i)+o_{p}(1),
\label{SQlemma3:eq2}
\end{split}
\end{equation}

and 

\begin{equation}
\begin{split}
&\frac{1}{\sqrt{n}}\sum_{i=1}^n W_{n,i}\bfm_2(Y_i,\bZ_i,\alpha)\cdot\left\{1-\frac{D_i(1-v_i^{\dagger})}{1-\pi_{i}}-\frac{(1-D_i)v_i^{\dagger}}{\pi_{i}}\right\}\\
&\quad=\frac{1}{\sqrt{n}}\sum_{i=1}^n W_{n,i}\bfm_2(Y_i,\bZ_i,\alpha)\cdot\left\{1-\frac{D_i(1-V_i)}{1-\pi_{i}}-\frac{(1-D_i) V_i}{\pi_{i}}\right\}+o_{p}(1),
\label{SQlemma3:eq3}
\end{split}
\end{equation}
as desired. 

We can also prove Lemma \ref{SQlemma3} in the case of discrete $\bX$ by adapting the similar arguments in Lemma \ref{SQlemma1} and the proof of the above.

The proof of Lemma \ref{SQlemma4} is identical to the proof of Lemma \ref{SQlemma3}   using $\{\alpha-I(Y_i<\bZ_i^\top\bbeta^\ast)\}$ in place of $g_{\alpha,i}(\bgamma^\ast,\bbeta^\ast)$ in the proof of Lemma \ref{SQlemma3}, and is hence omitted.

\end{proof}

\section{Simulation with Discrete $\bX$}
 We now consider a new simulation scenario, which differs from the scenario in Section \ref{simu} of the paper in that $X_1$ and $X_2$ are both discrete and generated from $Bernoulli(0.5)$. Results with $n=\{ 500,3000\}$ for $\alpha = \{0.1,0.2,0.3,0.4,0.5\}$ are presented in Table~\ref{SQsimudiscrete:t1}. 
The performance of the proposed method remains solid in this case and the estimated parameters of interest are close to their true underlying values. Besides, the bootstrap-based variance estimates agree well with the empirical variances. The empirical coverage probabilities of 95\% confidence intervals are close to the nominal level. When the sample size  increases to $n=3000$, the bias of the proposed method further diminishes.

\begin{table}[!t]
	\fontsize{8}{9}\selectfont
	\centering
	\caption{Comparisons among estimators of $\beta_{1}(\alpha)$ and $\gamma_{1}(\alpha)$ from the proposed two-stage method, oracle method and the naive method in simulation experiment on the case of discrete $\bX$ with $n=\{500,3000\}$ and $\alpha=\{0.1,0.2,\ldots, 0.5\}$. Bias, Emp var, Boot var and Cov 95 stand for average bias of the estimated coefficients, empirical variance, average variance estimates, coverage probabilities of the 95\% confidence intervals.  }
	\label{SQsimudiscrete:t1}
	 \begin{threeparttable}
	
\begin{tabular}{|c|c|l|cc|cc|cc|}
    \hline
\multirow{2}{*}{$\alpha$} &\multirow{2}{*}{$n$}&&\multicolumn{2}{c|}{Oracle method}  & \multicolumn{2}{c|}{Proposed method} &\multicolumn{2}{c|}{Naive method} \\
 \cline{3-9}
&  & &$\beta_{1}(\alpha)$&$\gamma_{1}(\alpha)$&$\beta_{1}(\alpha)$&$\gamma_{1}(\alpha)$&$\beta_{1}(\alpha)$&$\gamma_{1}(\alpha)$\\
\hline
 \multirow{8}{*}{0.1}&\multirow{4}{*}{500}&Bias&-0.013&-0.027&-0.039&-0.053&-0.255&-0.268\\
     \cline{3-9}
     && Emp var&0.131&0.284&0.134&0.272&0.082&0.170\\   
      \cline{3-9}
     && Boot var&0.156&0.264&0.153&0.256&0.094&0.162\\   
       \cline{3-9}
      && Cov 95&0.956&0.924&0.949&0.930&0.857&0.861\\ 
       \cline{2-9}
       &{\multirow{4}{*}{3000}}&Bias&0.000&-0.003&-0.017&-0.018&-0.245&-0.251\\
     \cline{3-9}&& Emp var&0.026&0.049&0.026&0.048&0.016&0.031\\   
      \cline{3-9}&& Boot var&0.025&0.046&0.025&0.045&0.015&0.028\\   
      \cline{3-9}&& Cov 95&0.933&0.950&0.935&0.946&0.482&0.652\\  
      \hline
      \multirow{8}{*}{0.2}&\multirow{4}{*}{500}&Bias&-0.007&0.016&-0.029&-0.044&-0.245&-0.256\\
     \cline{3-9}&& Emp var&0.060&0.132&0.064&0.131&0.031&0.078\\   
      \cline{3-9}&& Boot var&0.069&0.131&0.071&0.129&0.035&0.076\\   
      \cline{3-9}&& Cov 95&0.955&0.947&0.949&0.933&0.721&0.822\\
      \cline{2-9}
       &{\multirow{4}{*}{3000}}&Bias&-0.001&-0.002&-0.015&-0.017&-0.243&-0.249\\
     \cline{3-9}&& Emp var&0.011&0.025&0.012&0.025&0.005&0.014\\   
     \cline{3-9}&& Boot var&0.011&0.022&0.011&0.022&0.006&0.013\\   
     \cline{3-9}&& Cov 95&0.936&0.936&0.941&0.930&0.10&0.395\\   
      \hline
  \multirow{8}{*}{0.3}&\multirow{4}{*}{500}&Bias&0.000&-0.012&-0.019&-0.037&-0.239&-0.250\\
     \cline{3-9}&& Emp var&0.036&0.084&0.038&0.085&0.015&0.047\\   
     \cline{3-9}&& Boot var&0.041&0.084&0.044&0.084&0.017&0.046\\   
      \cline{3-9}&& Cov 95&0.958&0.949&0.958&0.942&0.533&0.757\\  
      \cline{2-9}
       &{\multirow{4}{*}{3000}}&Bias&-0.001&-0.001&-0.015&-0.016&-0.240&-0.246\\
     \cline{3-9}&& Emp var&0.006&0.016&0.007&0.016&0.002&0.008\\   
     \cline{3-9}&& Boot var&0.006&0.014&0.007&0.014&0.003&0.008\\   
      \cline{3-9}&& Cov 95&0.947&0.933&0.949&0.928&0.003&0.210\\     
      \hline      
      
  \multirow{8}{*}{0.4}&\multirow{4}{*}{500}&Bias&-0.001&-0.009&-0.028&-0.034&-0.237&-0.248\\
     \cline{3-9}&& Emp var&0.023&0.060&0.025&0.061&0.009&0.032\\   
      \cline{3-9}&& Boot var&0.027&0.060&0.029&0.061&0.009&0.031\\   
      \cline{3-9}&& Cov 95&0.955&0.948&0.956&0.937&0.323&0.694\\  
      \cline{2-9}
       &{\multirow{4}{*}{3000}}&Bias&0.001&-0.001&-0.013&-0.016&-0.238&-0.245\\
     \cline{3-9}&& Emp var&0.004&0.011&0.005&0.011&0.001&0.006\\   
      \cline{3-9}&& Boot var&0.004&0.010&0.005&0.010&0.001&0.005\\   
      \cline{3-9}&& Cov 95&0.951&0.937&0.952&0.933&0.000&0.076\\       
      \hline

      \multirow{8}{*}{0.5}&\multirow{4}{*}{500}&Bias&-0.007&-0.007&-0.033&-0.033&-0.233&-0.244\\
     \cline{3-9}&& Emp var&0.015&0.045&0.017&0.047&0.006&0.023\\   
      \cline{3-9}&& Boot var&0.018&0.045&0.020&0.046&0.006&0.023\\   
      \cline{3-9}&& Cov 95&0.954&0.950&0.947&0.939&0.167&0.615\\  
      \cline{2-9}
       &{\multirow{4}{*}{3000}}&Bias&0.001&0.000&-0.012&-0.015&-0.233&-0.243\\
     \cline{3-9}&& Emp var&0.003&0.008&0.003&0.009&0.001&0.004\\   
     \cline{3-9}&& Boot var&0.003&0.008&0.003&0.008&0.001&0.004\\   
      \cline{3-9}&& Cov 95&0.942&0.942&0.945&0.939&0.000&0.036\\    
      \hline      
\end{tabular}
     \end{threeparttable}
\end{table}

\bibliographystyle{elsarticle-harv} \bibliography{ES}

\end{document}